\newtheorem{lemma}{Lemma}
\newtheorem{definition}{Definition}
\newtheorem{corollary}{Corollary}
\newtheorem{theorem}{Theorem}
\newtheorem{observation}{Observation}
\newcommand{\deleted}[1]{}
\newcommand{\etal}{{~et~al.~}}
\newcommand{\RR}{\mathbb{R}^2}
\newcommand{\fvd}{\mathrm{FVD}}
\newcommand{\J}{\mathcal{J}} 		
\newcommand{\V}{\mathcal{V}} 
\newcommand{\Vk}{\V_k} 
\newcommand{\VR}{\mbox{VR}}    	
\newcommand{\sms}{S \setminus \{s\}} 
\newcommand{\arcs}{\mathscr{S}}
\newcommand{\subs}{\arcs'}
\newcommand{\arr}{\mathcal{A}}
\newcommand{\fs}[1]{\text{VR}(#1, \arcs)}
\newcommand{\vld}{\mathcal{V}_l}
\newcommand{\ins}{{\vld(\pp)\oplus\beta}}
\newcommand{\pp}{\mathcal{P}}
\newcommand{\env}{\text{env}}
\newcommand{\E}{\mathcal{E}}
\newcommand{\jj}[2]{\J_{#1,#2}}
\newcommand{\freg}{\mbox{FVR}} 
\newcommand{\kreg}[1]{\mbox{VR}_k{#1}}
\newcommand{\T}{\mathcal{T}}
\newcommand{\N}{N}
\newcommand{\In}{\text{in}}
\newcommand{\out}{\text{out}}
\newcommand{\so}{\text{source}}
\newcommand{\B}{\mathcal{B}}
\newcommand{\floor}[1]{\left\lfloor #1 \right\rfloor}
\newcommand{\aj}{\alpha}
\newcommand{\gj}{\gamma}
\newcommand{\f}{f}
\newenvironment{apptheorem}[1]{\vspace{5pt}\noindent
	\textcolor{darkgray}{$\blacktriangleright$}\nobreakspace\sffamily\bfseries\textbf{Theorem #1.}\it}{\vspace{2pt}}
\begin{document}

\title{Deletion in abstract Voronoi diagrams in expected linear time
  and related problems 
  \thanks{This research was supported 
     in part by the Swiss National Science Foundation, project
     200021E$\_$154387.
   }
   \thanks{A preliminary version of this paper appeared in \emph{Proc.  34th
    International Symposium on Computational Geometry (SoCG)
    2018}.
 }
}

\deleted{
\titlerunning{Deletion in abstract Voronoi diagrams}

\author{Kolja Junginger \and Evanthia Papadopoulou}

\institute{Kolja Junginger \at
              Faculty of Informatics, USI Universit\`a
              della Svizzera italiana, \\{Lugano, Switzerland} \\
              \email{kolja.junginger@usi.ch}           
           \and
           Evanthia Papadopoulou \at
       Faculty of Informatics, USI
       Universit\`a della Svizzera italiana, \\
       {Lugano, Switzerland}\\
	\email{evanthia.papadopoulou@usi.ch}
}
}
\date{}

%\bigskip

\author{\bigskip Kolja Junginger, Evanthia Papadopoulou\\
  Faculty of Informatics, USI
  Universit\`{a} della Svizzera italiana,\\
  {Lugano, Switzerland}\\
  \texttt{kolja.junginger@usi.ch},
  \texttt{evanthia.papadopoulou@usi.ch}
\smallskip}

\maketitle

\begin{abstract}
\noindent
Updating an abstract Voronoi diagram after deletion
of one site in linear time  %has been a long-standing open problem.
has been a well-known open problem;
similarly, for concrete Voronoi diagrams of %generalized
non-point sites. 
%other than points.
In this paper, we present an expected linear-time algorithm to 
update an abstract Voronoi diagram after deletion of one site.
%and extend it to some related problems.
We introduce the concept of a \emph{Voronoi-like diagram}, a relaxed 
version of an abstract Voronoi construct that has a structure similar
to an ordinary Voronoi diagram, without, however, being one.
We formalize the concept, and prove that it is robust under insertion, therefore, 
enabling its use in incremental constructions.
The time-complexity analysis
of the resulting simple randomized incremental construction
is non-standard, and interesting in its own right, because the intermediate Voronoi-like structures are
order-dependent.
We further extend
%We present a simple randomized such construction for site-deletion.
%We extend the approach to also compute
the approach to compute the following structures in expected linear time: the order-$(k{+}1)$ subdivision within an order-$k$
Voronoi region, and
the farthest abstract Voronoi diagram after the order of its regions
at infinity is known.
%% The randomized time-complexity analysis is non-standard
\deleted{
The time-complexity analysis of the randomized incremental
construction is non-standard
and interesting in its own right, as the intermediate Voronoi-like structures are
order-dependent.
}

%\noindent
\paragraph{Keywords:} 
abstract Voronoi diagram; linear-time algorithm; randomized incremental construction; 
site-deletion; higher-order Voronoi diagram; farthest Voronoi diagram. 
\medskip
\end{abstract}

\section{Introduction} \label{sec:intro}
The Voronoi diagram of a set $S$ of $n$ simple geometric objects,
called
sites, is a versatile and well-known geometric partitioning
structure, 
which reveals proximity information for the input sites.
Classic variants include the \emph{nearest-neighbor},
the \emph{farthest-site}, and the \emph{order-$k$} Voronoi diagram
% ($1\leq k <n$)
of the set $S$. 
Abstract Voronoi diagrams offer a unifying framework
to many  concrete and fundamental instances. %well-known instances.
Voronoi diagrams  have been well-investigated and 
% are well investigated structures with  optimal construction
% algorithms available in various cases.
optimal construction algorithms exist in many cases.
For more information
see, e.g., the book of Aurenhammer\etal\cite{Aurenbook} and
also~\cite{VoronoiBook00} for 
a wealth of applications.
\deleted{
For references and information
see, e.g., the book of Aurenhammer\etal\cite{Aurenbook};
see also~\cite{VoronoiBook00} for 
a wealth of applications.
}

For certain Voronoi diagrams with a tree
structure, linear-time construction algorithms
have been well-known to exist, see e.g.,~\cite{AGSS89,Chew90,KL94,snoyeink99}.
The first
linear-time technique was introduced by Aggarwal\etal\cite{AGSS89} 
for  the Voronoi diagram of points in convex
position, given the order of the points along their convex hull.
The same technique can be used to derive linear-time 
algorithms for other fundamental problems:  
(1) updating a Voronoi diagram of points after deletion of one site
in time linear to the number of Voronoi neighbors of the deleted site;  
(2) computing the order-$(k{+}1)$ subdivision within an order-$k$
Voronoi region;
(3) computing the farthest Voronoi diagram of point-sites in linear
time, after computing their convex hull.
A much simpler randomized approach for the same problems 
was introduced by Chew~\cite{Chew90}.
The medial axis of a simple polygon is another well-known problem
to admit a linear-time construction, as shown by
Chin\etal\cite{snoyeink99}.

Surprisingly, no linear-time constructions have been known for any of the problems~(1)-(3)
for 
Voronoi diagrams concerning non-point sites,  and
similarly for abstract Voronoi diagrams.
Under restrictions, Klein and Lingas~\cite{KL94} adapted the
linear-time approach of~\cite{AGSS89}
to the abstract framework,
showing that a \emph{Hamiltonian
	abstract Voronoi diagram} can be computed in linear time, given the
order of Voronoi regions along an unbounded simple curve, which visits
each region \emph{exactly once} and can intersect each bisector 
only once.  This construction has been extended recently to include
some forest
structures~\cite{BKLL18}, under similar restrictions, where no region 
can have multiple faces within a domain enclosed by a curve, and each
bisector can intersect this domain in one component. %exactly once. 

In this paper 
we consider the fundamental problem of site-deletion in abstract
Voronoi diagrams and provide a simple expected
linear-time technique to achieve this task.
We work in  the framework of abstract
Voronoi diagrams so that we can simultaneously address all the concrete
instances that fall under their umbrella.
After deletion,
we extend the randomized linear-time technique to the remaining problems:
%(2)-(3), i.e., 
(cfr.~2) computing the order-$(k{+}1)$ subdivision within an order-$k$
abstract Voronoi region; and 
(cfr.~3) computing the farthest abstract Voronoi diagram, after 
the order of its faces at infinity is known.
To the best of our knowledge, no deterministic linear-time technique
is yet known for these problems.
In the process, we define a \emph{Voronoi-like diagram}, a relaxed
Voronoi structure, which is interesting in its own right.
\emph{Voronoi-like regions} are supersets of real Voronoi regions, and
their boundaries correspond to simple \emph{monotone paths} in the
arrangement of the underlying bisector system (see Definition~\ref{def:monotone}).
We prove correctness 
and uniqueness of this structure and use it to
derive a very simple technique to address the above problems in
expected linear time.

An earlier attempt towards a linear-time construction for the
farthest-segment Voronoi diagram appeared in \cite{KP17arxiv},
following a different geometric formulation for segments, which however, 
does not extend to the abstract setting.
A preliminary version of this paper regarding site deletion in
abstract Voronoi diagrams appeared in \cite{JP18}.
In three dimensions, 
site-deletion in Delaunay triangulations of point-sites, as inspired by the 
randomized approach of Chew~\cite{Chew90}, has been considered in~\cite{BDMSS13}.

\begin{figure}
	\begin{minipage}{0.48\textwidth}
		\centering
		\includegraphics{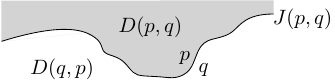}
                \vspace*{0.5\baselineskip}
		\caption{
			A bisector $J(p,q)$ and its two dominance regions; 
			$D(p,q)$ is shown shaded.
                      }
		\label{fig:bisector}
	\end{minipage}
	\hfill
	\begin{minipage}{0.48\textwidth}
		\centering
		\includegraphics[page=2]{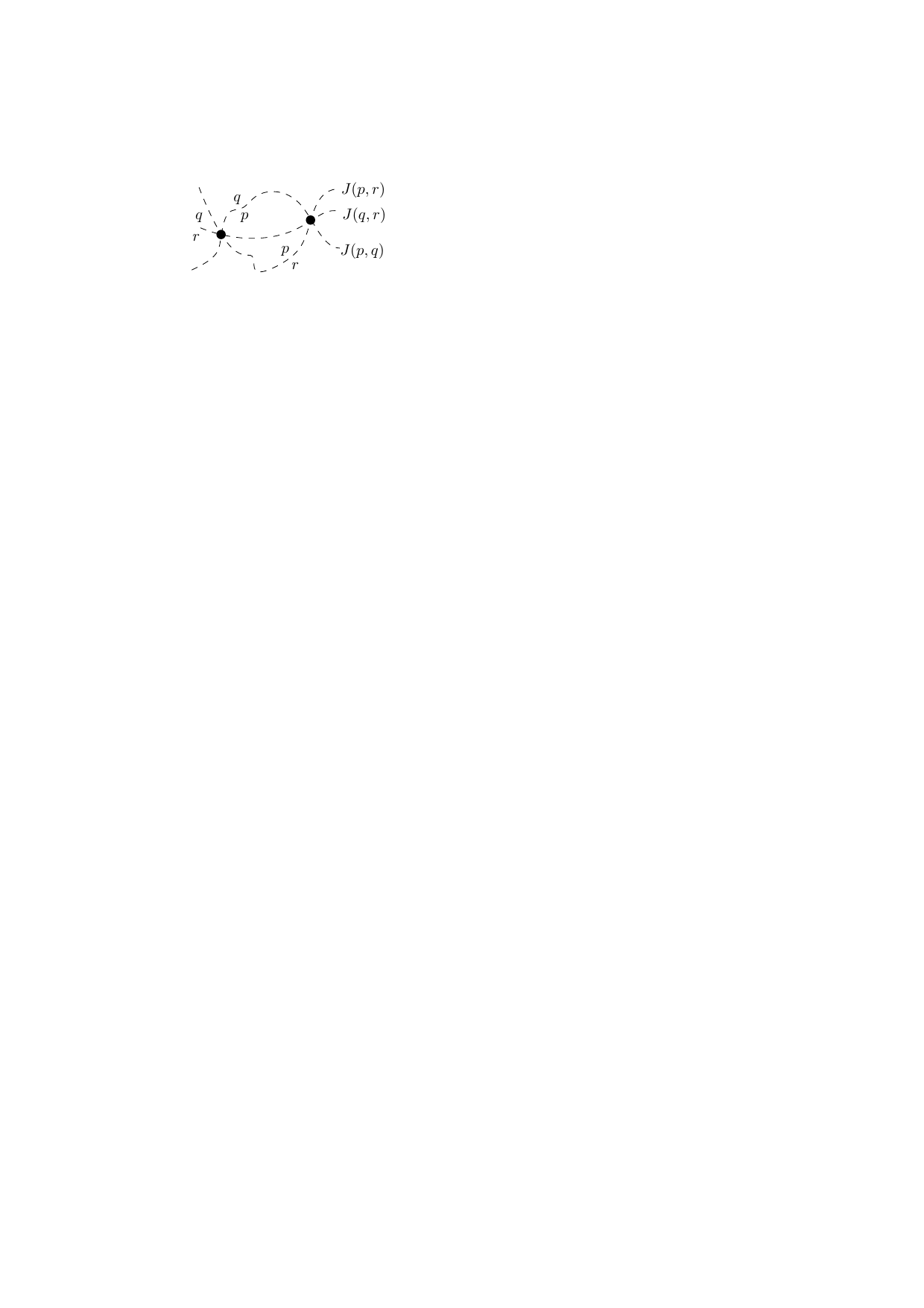}
                \caption{
                  The Voronoi diagram of 3 sites, the underlying
                  bisector system in dashed lines, and 
                  $\VR(p,\{p,q,r\})$ shaded.
				}
		%\caption{
                %  The Voronoi diagram of 3 sites in solid lines and
                %  the underlying bisector system in dashed lines.
	%		The shaded region is $\VR(p,\{p,q,r\})$. 
		%}
		\label{fig:3-bis}
	\end{minipage}
\end{figure}
\paragraph{Abstract Voronoi diagrams (AVDs).}
% Abstract Voronoi diagrams (AVDs)
These diagrams 
were introduced by Klein~\cite{K89}.
Instead of sites and distance measures, they 
are defined in terms of 
bisecting curves 
that satisfy some simple combinatorial properties.
Given a set $S$  of $n$  abstract sites, 
the bisector $J(p,q)$  of two sites $p,q \in S$ is an unbounded 
Jordan curve, homeomorphic to a line, that divides the plane into 
two open domains: the \emph{dominance region of $p$}, $D(p,q)$ 
(having label $p$),
and the \emph{dominance region of $q$}, $D(q,p)$ 
(having label $q$), see Fig.~\ref{fig:bisector}. 
The \emph{Voronoi region} of $p$ is
\[\VR(p,S) = \bigcap_{q \in S \setminus \{p\}}  D(p,q).\]
The (\emph{nearest-neighbor}) \emph{Voronoi diagram} of $S$ is 
\[\V(S) = \mathbb{R}^2\setminus \bigcup_{p \in S}\VR(p, S).\]

Following the traditional model of AVDs (see
e.g.~\cite{K89,BCKLPZ15,BKLL18})  
the bisector system is assumed to satisfy the following axioms, 
for every subset $S' \subseteq S$:

\begin{enumerate}
\itemsep=0pt
	\item[(A1)] Each Voronoi region $\VR(p, S')$ is non-empty and path-connected.
	\item[(A2)] Each point in the plane belongs to the closure of a Voronoi region $\VR(p, S')$.
	\item[(A3)]  
          Each bisector $J(p,q)$ is an unbounded curve, which after
          stereographic projection to the sphere can be completed
          to a closed Jordan curve through the north
          pole. 
	\item[(A4)] Any two bisectors $J(p, q)$ and $J(r,t)$ intersect
	transversally and in a finite number of points. (It is possible to relax this axiom, see \cite{KLN09}).
\end{enumerate}

The abstract Voronoi diagram $\V(S)$ is a plane graph of structural
complexity $O(n)$ whose regions are 
simply connected. 
It can be computed in time $O(n\log n)$, both randomized
\cite{KMM93} and  deterministic~\cite{K89}.

To update $\V(S)$ after deleting one site $s\in S$, we need to compute
$\V(S\setminus \{s\})$ within  $\VR(s,S)$.
This diagram 
is a tree, if $\VR(s,S)$ is
bounded, and a forest otherwise. However, its regions can be
disconnected, and one region may consist of multiple faces.
In fact, site-occurrences
along  $\partial \VR(s,S)$ form a 
Davenport-Schinzel sequence of order~2.
Disconnected regions introduce severe  complications 
and constitute  a major difficulty, which differentiates the problem
from its counterpart on point sites.
%that is different  from the respective problem on point sites.
For example, let $S'\subset S\setminus \{s\}$;
the
diagram 
$\V(S') \cap \VR(s, S' \cup \{s\})$ may contain faces that do not
even appear in 
$\V(S\setminus
\{s\})\cap\VR(s,S)$, and conversely, an arbitrary sub-sequence
of arcs on
$\partial\VR(s,S)$ need not be related to any Voronoi diagram.
At a first sight, a linear-time algorithm may even seem infeasible. 

\noindent

\paragraph{Our results.}
In this paper we formalize the concept of a \emph{Voronoi-like diagram}, 
a relaxed Voronoi structure defined as a
graph (a tree or forest) in the arrangement of the underlying bisector system,
and prove that it is well-defined and unique. 
This structure provides 
the tool we need to deal with
disconnected Voronoi regions, and thus, address the site-deletion
problem efficiently.
Given a Voronoi-like diagram, we define an \emph{insertion operation} and prove its
correctness. 
This makes a simple randomized incremental construction possible.
The time-analysis of the randomized algorithm  is non-standard
because the intermediate 
Voronoi-like structures are order-dependent.
We give a technique, which partitions the permutations of length $i$ into manageable
groups of $i$ permutations each, and show that the time
complexity per group is $O(i)$, %; we thus, derive
deriving that each insertion
step can be performed in expected  $O(1)$ time.
This technique  may be independently useful in deriving expectation
in order-dependent cases.
%We give a technique of partitioning permutations into manageable
%groups, which  may be independently useful in deriving the expectation
%of other order-dependent structures.
In this paper we focus on site-deletion, computing 
$\V(S\setminus \{s\})\cap\VR(s,S)$
in expected time $O(|\partial\VR(s,S)|)$, i.e., in expected time linear 
in the number of Voronoi neighbors of the deleted site.
We also extend  the approach to address problems~(2) and~(3) for the
order-$k$ and the farthest abstract Voronoi diagram  respectively.
The sequence of the faces %farthest abstract Voronoi diagram at 
at infinity of the latter diagram can be computed in time $O(n\log n)$. 
%in expected linear time.
\deleted{
%The adaptation for the order-$k$ diagram is non-trivial
%because of the dependency of a region
%boundary
%on the non-constant~$k$ (see Section~\ref{sec:orderk}).
%The solution improves the simple iterative approach to construct an order-$k$ 
%abstract Voronoi diagram by a $\log$-factor resulting in
%time $O(k^2(n-k) + n\log n)$ if  $k\leq\frac{n}{2}$, and $O(n^3)$ otherwise.
The farthest abstract Voronoi diagram can be computed in expected
linear time, after the sequence of its faces at 
infinity is known (see Section~\ref{sec:farthest}). 
The latter sequence can be computed in time $O(n\log n)$.
}

Examples of concrete diagrams that fall under the AVD
umbrella and thus they benefit from our approach 
include: 
disjoint line segments and disjoint convex polygons of constant size
in the $L_p$ norms, or under the Hausdorff metric;
point-sites in any convex distance metric or the Karlsruhe metric;
additively weighted points that have non-enclosing circles; 
power diagrams with non-enclosing circles.

\section{Preliminaries} \label{sec:defs}

Let $S$ be a set of $n$ abstract \emph{sites} (a set of indices) that define an
\emph{admissible} system of bisectors 
$\J= \{J(p, q) : p \neq q \in S\}$.
$\J$ fulfills axioms (A1)--(A4) for every 
$S'\subseteq S$.

Bisectors in $\J$ that have a site  $p$ in common are called \emph{$p$-related}
or simply \emph{related}.
Any two related bisectors can intersect at most twice~\cite[Lemma
3.5.2.5]{K89}.
When two related bisectors $J(p,q)$ and $J(p,r)$ intersect, 
bisector $J(q,r)$  also intersects with them at the same 
point(s), and these points are the Voronoi vertices of the diagram
$\V(\{p,q,r\})$.
The Voronoi diagram of three sites $\V(\{p,q,r\})$ may have one or two
(or none)
Voronoi vertices, see Fig.~\ref{fig:3-bis}.
The set of all $p$-related bisectors that involve  sites in
$S'\subseteq S$ is denoted 
$\jj{p}{S'} = \{J(p,q) \,|\, q \in S', q \neq p\}$.

Let $\VR(s,S)$ be the Voronoi region of a site $s\in S$.
Although  $\VR(s,S)$ is simply connected, the sites in  $S\setminus
\{s\}$ that appear 
along  the  boundary  $\partial \VR(s,S)$  may repeat, forming a
Davenport-Schinzel sequence of order~2.
% The sequence order is 2
This is because $s$-related bisectors can intersect at most
twice, and thus, \cite[Theorem
5.7]{sharir95} applies. %can be applied.
This is a fundamental difference from the classic case of point-sites in the Euclidean plane, where 
bisectors are straight-lines, therefore they intersect once, and no site repetition can occur along
$\partial \VR(s,S)$.

Suppose we delete the site $s\in S$ from $\V(S)$. To update the Voronoi diagram
after the deletion of $s$,
 we need to compute
$\V(S\setminus \{s\})$ within the Voronoi region $\VR(s,S)$, i.e.,
compute $\V(S\setminus \{s\}) \cap \VR(s,S)$.
We first characterize the structure of this diagram in the following lemma.
An alternative proof and characterization can be
derived from the order-$k$ counterpart~\cite{BKL19}, however, this
proof appeared later, after the preliminary version of
this paper~\cite{JP18}.

	\begin{figure}
	\centering
	\begin{minipage}{.48\textwidth}
		\centering
		\includegraphics{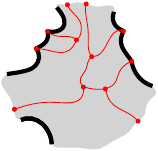}
		\caption{$\V(\sms) \cap \VR(s,S)$ in
			red; $\partial \VR(s,S)$ is shown in
			bold black.
		}
		\label{fig:unbounded}
	\end{minipage}
	\hfill
	\begin{minipage}{.48\textwidth}
		\includegraphics{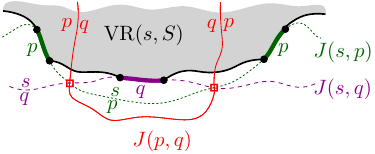}
		\caption{$\VR(p,\sms) \cap \VR(s,S)$ cannot be connected because of $J(p,q)$. 
		}
		\label{fig:treelike}
	\end{minipage}
\end{figure}

\begin{lemma}\label{lem:treelike}
	$\V(S\setminus \{s\})\cap \VR(s,S)$ is a forest, having exactly one
	face for each Voronoi edge of $\partial \VR(s,S)$. Its leaves are the Voronoi vertices of
	$\partial \VR(s,S)$, and points at infinity if $\VR(s,S)$ is
	unbounded (see Fig.~\ref{fig:unbounded}). If $\VR(s,S)$ is bounded then $\V(S\setminus \{s\})\cap
	\VR(s,S)$ is a tree.
\end{lemma}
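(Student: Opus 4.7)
The plan is to analyse $G := \V(S\setminus\{s\})\cap\overline{\VR(s,S)}$ in three stages, combining the three-related-bisector property, axiom (A1), and the standard simple connectedness of abstract Voronoi regions.

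First I would locate where $G$ meets $\partial\VR(s,S)$. Let $e_1,\dots,e_k$ be the boundary edges in cyclic order, with labels $p_1,\dots,p_k$, and let $v_i$ be the Voronoi vertex between $e_i$ and $e_{i+1}$. Since $J(s,p_i)$ and $J(s,p_{i+1})$ are $s$-related and meet at $v_i$, the related bisector $J(p_i,p_{i+1})$ also passes through $v_i$ (Section~\ref{sec:defs}) and sends an edge $\gamma_i$ of $\V(S\setminus\{s\})$ into the interior of $\VR(s,S)$. Conversely, the interior of any $e_\ell$ lies in the open region $\VR(p_\ell,S\setminus\{s\})$ and is thus on no edge of $\V(S\setminus\{s\})$, so $G\cap\partial\VR(s,S)=\{v_1,\dots,v_k\}$. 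As interior Voronoi vertices of $\V(S\setminus\{s\})$ have degree at least $3$ while each $v_i$ has local degree $1$ in $G$, the leaves of $G$ are exactly the $v_i$, plus infinite ends if $\VR(s,S)$ is unbounded.

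Next I would prove $G$ contains no cycle. A cycle would enclose a bounded face $F$ of $G$ whose closure lies in the open set $\VR(s,S)$; then $\partial F$ consists only of bisectors of $\V(S\setminus\{s\})$, so $F$ is a complete face of $\V(S\setminus\{s\})$ and, by (A1), equals $\VR(p,S\setminus\{s\})$ for some $p\neq s$. But $\VR(p,S\setminus\{s\})\subseteq\VR(s,S)\subseteq D(s,p)$ would force $\VR(p,S)=\VR(p,S\setminus\{s\})\cap D(p,s)=\emptyset$, contradicting (A1) for $S$. Hence $G$ is a forest.

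Third I would set up the face--edge bijection. A one-sided neighbourhood of the interior of $e_i$ inside $\VR(s,S)$ lies in $\VR(p_i,S\setminus\{s\})$, so $e_i$ is incident to a unique face $F_i$ of $G$; adjacent faces $F_i,F_{i+1}$ are distinct because $p_i\neq p_{i+1}$. Suppose, for contradiction, that $F_i=F_j$ for two non-adjacent edges with $p_i=p_j=p$. Pick a simple path $\pi\subset F_i$ from inside $e_i$ to inside $e_j$ and extend it, via the connected region $\overline{\VR(p,S)}$ outside $\VR(s,S)$, to a simple closed curve $L\subset\overline{\VR(p,S\setminus\{s\})}$. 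Because $\VR(p,S\setminus\{s\})$ is simply connected---a standard property of abstract Voronoi regions under (A1)--(A4)---$L$ bounds a topological disk $D\subseteq\overline{\VR(p,S\setminus\{s\})}$. The boundary of $D$ crosses $\partial\VR(s,S)$ transversally at exactly two points, so $D\cap\partial\VR(s,S)$ is a connected arc joining them; since $e_i$ and $e_j$ are non-adjacent, this arc contains at least one boundary edge $e_\ell$ with $p_\ell\neq p$, whose interior lies in $\VR(p_\ell,S\setminus\{s\})$ and is therefore disjoint from $\overline{\VR(p,S\setminus\{s\})}\supseteq D$, a contradiction. Hence $F_1,\dots,F_k$ are pairwise distinct and are the only faces of $G$. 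In the bounded case, Euler's formula applied to the planar subdivision of $\overline{\VR(s,S)}$ formed by $\partial\VR(s,S)\cup G$ then forces $G$ to have a single connected component, i.e., to be a tree.

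I expect the main obstacle to be this final topological argument: simple connectedness of the abstract Voronoi region $\VR(p,S\setminus\{s\})$ is the non-trivial ingredient needed to close the bijection, and repeated labels in the DS-$2$ sequence on $\partial\VR(s,S)$ make the situation essentially richer than the point-site analogue \cite{Chew90}.
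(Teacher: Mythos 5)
Your proposal is correct, and it reaches the same three conclusions (forest structure, one face per boundary edge, tree when bounded) via a genuinely different route from the paper. For the key step that no two boundary edges $e_i,e_j$ with $p_i=p_j=p$ can bound the same face, the paper argues concretely via the three-bisector property: it locates an intermediate edge of a site $q\neq p$, observes that $J(s,q)$ meets $J(s,p)$ twice, and that $J(p,q)$ therefore cuts $\VR(s,\{s,p,q\})$ into two unbounded pieces, separating the two $p$-edges into distinct faces of $\V(S\setminus\{s\})\cap\VR(s,S)$. You instead close the path $\pi\subset F_i$ through $\overline{\VR(p,S)}$ into a simple closed curve $L\subset\VR(p,S\setminus\{s\})$, invoke simple connectedness of that region (a general AVD fact stated in Section~2) to get a disk $D\subseteq\VR(p,S\setminus\{s\})$, and derive a contradiction because $D\cap\partial\VR(s,S)$ must contain a whole boundary edge of another site --- a purely topological argument that avoids the explicit bisector bookkeeping. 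For the forest property, both proofs amount to ``a face with closure inside $\VR(s,S)$ would be a full region $\VR(p,S\setminus\{s\})$, forcing $\VR(p,S)=\emptyset$''; yours is stated more explicitly. For the bounded case you invoke Euler's formula on $G\cup\partial\VR(s,S)$, whereas the paper argues directly from the face--edge bijection; the Euler computation is a clean way to finish and is correct ($V_G-E_G=1$ forces a single component). The trade-off: the paper's argument is tailored to the bisector machinery the rest of the paper uses, while yours is more self-contained topologically but leans on simple connectedness of $\VR(p,S\setminus\{s\})$ as a black box. One small point worth making explicit in your write-up: you should justify that $F_1,\dots,F_k$ are \emph{all} the faces of $G$, which follows because a face of a forest embedded in the simply connected open set $\VR(s,S)$ cannot have boundary consisting solely of $G$-edges (that would force a cycle in $G$).
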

\begin{proof}

	Every face in $\V(S\setminus \{s\})\cap \VR(s,S)$ must 
	touch the boundary $\partial \VR(s,S)$ 
	because Voronoi regions are non-empty and connected; 
	this implies that the diagram is a forest.
	Every Voronoi edge $e \subseteq J(s,p)$ on $\partial \VR(s,S)$
	must be
	entirely in $\VR(p,\sms)$.  
	Thus, no leaf can lie in the interior of a Voronoi edge of
	$\partial \VR(s,S)$.
	On the other hand, 
	each Voronoi vertex of $\partial \VR(s,S)$ must be 
	a leaf of the diagram as its incident  
	edges are induced by different sites.
	
	Now we show that no two edges of $\partial \VR(s,S)$ can be
	incident  to the same face of $\V(\sms) \cap \VR(s,S)$.
	Consider two edges on $\partial \VR(s,S)$ induced by the same  site $p
	\in \sms$. 
	Then there exists an edge between them, induced by a site $q \neq p$,
	such that the bisector $J(s,q)$ has exactly two intersections with $J(p,s)$
	as shown in Fig.~\ref{fig:treelike}.
	The bisector $J(p,q)$ intersects with them at the same two
	points.
	Since the bisector system is admissible, and thus 
	$\VR(p,\{s,p,q\})$ is connected,  $J(p,q)$ connects these endpoints
	through $D(p,s)\cap D(q,s)$ as shown in Fig.~\ref{fig:treelike},
	thus, $J(p,q) \cap \VR(s, \{s,p,q\})$ consists of two unbounded
	connected  components. 
	This implies that $D(p,q) \cap \VR(s,S)$ must have two disjoint faces,
	each of which is incident to exactly one of the two edges of $p$. 
	Thus, $\VR(p,\sms) \cap \VR(s,S)$ cannot be connected and 
	the two edges of $p$ must be incident to different faces of
	$\V(S\setminus \{s\})\cap \VR(s,S)$. 
	
	If  $\VR(s,S)$ is unbounded, two consecutive edges of 
	$\partial \VR(s,S)$ can extend to infinity, in which 
	case there is at least one edge of $\V(S\setminus \{s\})\cap \VR(s,S)$ extending 
	to infinity between them; thus, leaves can be points at infinity. 
	If $\VR(s,S)$ is bounded,
	all leaves of  $\V(S\setminus \{s\})\cap \VR(s,S)$ must lie
	on $\partial \VR(s,S)$. Since no face 
	is incident to more than one edge of  $\partial \VR(s,S)$, 
	in this case   $\V(S\setminus \{s\})\cap \VR(s,S)$ 
	cannot be disconnected, and thus is a tree. 
\end{proof}

Let $\Gamma$ be a closed Jordan curve in the plane large enough to
enclose all the 
intersections of bisectors in $\J$, and such that each bisector intersects $\Gamma$ exactly 
twice and transversally.
To avoid dealing with infinity, and without any loss of generality, we restrict all
computations within~$\Gamma$.\footnote{The presence of $\Gamma$ is
	conceptual and its exact position unknown; 
	we never compute coordinates on $\Gamma$.}
The curve $\Gamma$ can be interpreted 
as $J(p,s_\infty)$, for all $p \in S$, where $s_\infty$ is an
additional site at infinity.
Let $D_\Gamma$  denote the portion of the plane  enclosed by
$\Gamma$.
$\VR(s,S) \cap D_\Gamma$ is the \emph{domain} of our computation.
Fig.~\ref{fig:boundaryNVD} illustrates possible cases of the
computation domain.

\begin{figure}
	\centering
	\includegraphics{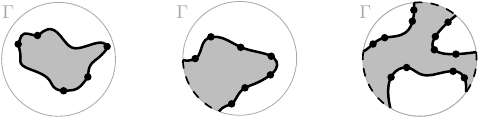}
	\caption{
		The domain of computation $\VR(s,S) \cap D_\Gamma$ (shaded).}
	\label{fig:boundaryNVD}
\end{figure}

We first make some observations regarding an admissible
bisector system, which we use as tools in the proofs throughout this paper.
Let $C_p$ be a cycle of $p$-related bisectors in the arrangement
of bisectors $\J\cup \Gamma$.
If the label $p$ appears on the outside of the cycle for every edge in $C_p$,
then $C_p$ is called \emph{$p$-inverse}, see Fig.~\ref{fig:cycles}(a).
If the label $p$ appears only inside $C_p$ then $C_p$ is
called a \emph{$p$-cycle}, see Fig.~\ref{fig:cycles}(b).
Recall that $\Gamma$ can be considered a $p$-related bisector, for all sites $p\in
S$, where the label $p$ is in the interior of $\Gamma$.
Thus, a $p$-cycle may contain pieces of $\Gamma$, whereas a $p$-inverse
cycle cannot contain any such piece.

\begin{figure}
	\centering
	\includegraphics{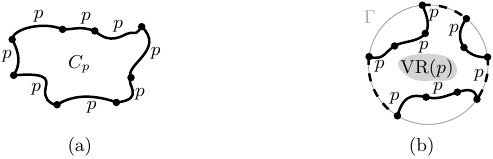}
	\caption{(a) A $p$-inverse cycle. (b) A $p$-cycle.}
	\label{fig:cycles}
\end{figure}

\begin{lemma}\label{lem:nocycle}
	In an admissible bisector system there is no $p$-inverse cycle.
\end{lemma}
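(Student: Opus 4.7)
\medskip
\noindent\textbf{Proof plan.} I would argue by contradiction: suppose a $p$-inverse cycle $C_p$ exists.

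First I would reduce to the case that $C_p$ is a simple closed Jordan curve. If $C_p$ self-intersects, then any outermost simple sub-cycle $C'_p$ of $C_p$ is still $p$-inverse, because the $p$-orientation is assigned edge by edge: each edge of $C'_p$ is an edge of $C_p$, and the label $p$ is on the same side of that edge in both cycles. So we may replace $C_p$ by $C'_p$.

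Next, let $Q = \{q \in S : J(p,q) \text{ contributes an edge to } C_p\}$ and set $S' = \{p\}\cup Q \subseteq S$. Applying admissibility axiom (A1) to the bisector system induced by $S'$, the Voronoi region
\[
\VR(p, S') \;=\; \bigcap_{q \in Q} D(p,q)
\]
is non-empty and pathwise connected. Since $\VR(p,S')$ is an open set contained in $D(p,q)$ for every $q\in Q$, it cannot touch $J(p,q) = \partial D(p,q)$, and hence $\VR(p,S') \cap C_p = \emptyset$. By the Jordan Curve Theorem, the connected set $\VR(p,S')$ lies entirely in one of the two components of $\mathbb{R}^2 \setminus C_p$.

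The remaining and hardest step is to turn this location into a contradiction. I would analyse the local structure at a vertex $v$ of $C_p$, where edges $e_i \subset J(p,q_i)$ and $e_{i+1}\subset J(p,q_{i+1})$ meet. By the property of related bisectors recalled just before this lemma, $J(q_i,q_{i+1})$ also passes through $v$, so $v$ is a Voronoi vertex of $\V(\{p,q_i,q_{i+1}\})$. The $p$-inverse orientation forces $\VR(p,\{p,q_i,q_{i+1}\})$ to sit locally in the \emph{exterior} of $C_p$ at $v$, while the interior wedge at $v$ splits, along the ray of $J(q_i,q_{i+1})$ entering the interior, into pieces of $\VR(q_i,\{p,q_i,q_{i+1}\})$ and $\VR(q_{i+1},\{p,q_i,q_{i+1}\})$. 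Since $\VR(p,S') \subseteq \VR(p,\{p,q_i,q_{i+1}\})$ is connected and sits exclusively on the exterior side at every vertex, it must lie entirely in the exterior of $C_p$.

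Once $\VR(p,S')$ is pinned to the exterior, the interior of $C_p$ is covered by the closures of the remaining regions $\VR(q_j,S')$ by axiom (A2). I would then show that for some $q_j \in Q$ the region $\VR(q_j, S')$ is forced to contain a component inside $C_p$ (just inside each edge $J(p,q_j)$ of $C_p$, where $q_j$ dominates $p$) together with a component outside $C_p$ (from the unobstructed extension of its natural Voronoi region beyond $C_p$); these two components are separated along $C_p$ by the exterior region $\VR(p,S')$, and no path inside $\VR(q_j,S')\subseteq D(q_j,p)$ can cross the edges of $C_p$ lying on $J(p,q_j)$. This disconnection of $\VR(q_j,S')$ contradicts axiom (A1) applied to $S'$.

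I expect the main obstacle to be the last step, namely justifying the disconnection of some $\VR(q_j,S')$ when $J(p,q_j)$ contributes multiple edges to $C_p$ or when $C_p$ winds in a complicated way. The argument would proceed by a careful labelling of the arrangement faces on both sides of $C_p$ as one traverses its edges, exploiting the fact that crossings of $C_p$ are only through $p$-related bisectors and that $\VR(q_j,S')$ is forbidden to cross $J(p,q_j)$.
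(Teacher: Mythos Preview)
The paper's proof is a two-line argument that your proposal misses entirely: the \emph{farthest} Voronoi region $\freg(p,S)=\bigcap_{q\neq p}D(q,p)$ must, by the labelling of a $p$-inverse cycle, be enclosed in $C_p$; but farthest abstract Voronoi regions are known to be unbounded (the paper cites prior work for this), and a bounded cycle cannot enclose an unbounded set. That is the whole proof.

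Your route through the \emph{nearest} region $\VR(p,S')$ and a disconnection argument is genuinely different, but step three has a real gap. From the vertex picture you correctly read off that $\VR(p,\{p,q_i,q_{i+1}\})$ occupies the exterior wedge of $C_p$ at~$v$. However, $\VR(p,\{p,q_i,q_{i+1}\})=D(p,q_i)\cap D(p,q_{i+1})$ is a large connected region that can freely cross the \emph{other} edges of $C_p$ (those lying on $J(p,q_k)$ with $k\neq i,i{+}1$) and may therefore reach into the interior of $C_p$ as well. The containment $\VR(p,S')\subseteq\VR(p,\{p,q_i,q_{i+1}\})$ thus does not pin $\VR(p,S')$ to the exterior: $\VR(p,S')$ need not come anywhere near a vertex of $C_p$, and nothing you wrote rules out $\VR(p,S')\subset\operatorname{int}(C_p)$. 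The fourth step you already flag as the main obstacle, and it is: once a single bisector $J(p,q_j)$ contributes several arcs to $C_p$, the ``inside piece versus outside piece'' dichotomy for $\VR(q_j,S')$ is far from clear, and producing two components separated by $C_p$ needs a genuine argument you have not supplied. Switching to the farthest region, as the paper does, bypasses both difficulties in one stroke.
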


\begin{proof}
  Suppose a $p$-inverse cycle exists in the admissible bisector
  system. Let $C_p$ denote a minimal such cycle, where  no $p$-related
  bisector may intersect the interior of  $C_p$ and let $D_p$ denote the
  interior of $C_p$. 
  Such a minimal cycle must exist because if a bisector $J(p,q)$ intersects $D_p$, 
 then it defines another (smaller) $p$-inverse cycle  that is
 contained in $C_p\cup D_p$
   and whose interior is not intersected by
 $J(p,q)$.
Let $S'\subseteq S$ denote  the set of sites that define 
the	edges of $C_p$. 
Considering $S'$, the 
farthest Voronoi region of $p$ is $\freg(p,S') = \bigcap_{q \in S' \setminus \{p\}}
D(q,p)$.
But by its definition, $D_p$ must be identical to one face of $\freg(p,S')$.
Since farthest Voronoi regions must be unbounded 
\cite{AbstractFarthestVoronoi,BCKLPZ15}, 
we derive a contradiction.   
\end{proof}

The following \emph{transitivity lemma} is a consequence of transitivity of
dominance regions~\cite[Lemma 2]{BCKLPZ15} and the fact that bisectors $J(p,q),J(q,r)$,
$J(p,r)$ intersect at the same point(s).
Let  $\overline{X}$ denote the closure of a region $X$.

\begin{lemma} \label{lem:transitivity}
	Let $z \in \RR$ and $p,q,r \in S$.
	If $z \in D(p,q)$ and $z \in \overline{D(q,r)}$, then $z \in D(p,r)$.
\end{lemma}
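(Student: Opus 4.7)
The plan is to split into two cases according to whether $z$ lies in the interior $D(q,r)$ or on the boundary $J(q,r)$. The interior case reduces immediately to the transitivity of dominance regions from~\cite[Lemma 2]{BCKLPZ15}, which says that $z\in D(p,q)\cap D(q,r)$ implies $z\in D(p,r)$. The real content is therefore the boundary case $z\in J(q,r)$, which I would handle by first excluding that $z$ lies on $J(p,r)$, and then transferring the interior conclusion to $z$ by a small perturbation.

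First I would show $z\notin J(p,r)$. Suppose, for contradiction, that $z\in J(p,r)$. Then the two related bisectors $J(q,r)$ and $J(p,r)$ meet at $z$, and by the property recalled in the preliminaries (that three bisectors on a common site share their intersection points), the third related bisector $J(p,q)$ must also pass through $z$. But $z\in D(p,q)$ forces $z\notin J(p,q)$, a contradiction. Hence $z$ lies strictly on one side of $J(p,r)$.

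Next I would identify that side as $D(p,r)$ by a perturbation argument. Since $z\notin J(p,r)$ and $J(p,r)$ is closed, there is an open ball $B$ around $z$ disjoint from $J(p,r)$; shrinking $B$ further, we may assume $B\subseteq D(p,q)$ because $D(p,q)$ is open and contains $z$. The ball $B$ is connected and disjoint from $J(p,r)$, so it lies entirely in one of the two open dominance regions $D(p,r)$ or $D(r,p)$. On the other hand, $z\in\overline{D(q,r)}$ means $B\cap D(q,r)\neq\emptyset$; pick any $z'\in B\cap D(q,r)$. Then $z'\in D(p,q)\cap D(q,r)$, so by the interior case $z'\in D(p,r)$. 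Hence the whole ball $B$ lies in $D(p,r)$, and in particular $z\in D(p,r)$.

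The main obstacle is the boundary case: the cited transitivity result is stated for open dominance regions, and it is not a priori clear that a point on $J(q,r)$ cannot slip onto $J(p,r)$. The three-bisectors-share-intersections property is exactly what rules this out, and once combined with the standard fact that $D(p,q)$ is open, the perturbation step goes through cleanly. No computation is needed beyond these topological observations.
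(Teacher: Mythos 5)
Your proof is correct and uses exactly the two ingredients the paper invokes (the cited transitivity of open dominance regions and the fact that $J(p,q),J(q,r),J(p,r)$ share their intersection points); the paper only states the lemma as a consequence of these and omits the details, which your perturbation argument in the boundary case fills in carefully.
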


We  make a general position assumption that no three  $p$-related bisectors
intersect at the same point.
This implies that Voronoi vertices have  degree 3.

\section{Problem formulation, definitions and properties}\label{sec:problem}

Let $\arcs$ denote the sequence of Voronoi edges bounding the Voronoi
region $\VR(s,S)$ within the domain $D_\Gamma$,
i.e., 
$\arcs = \partial \VR(s,S) \cap D_\Gamma$. 
We consider $\arcs$ as a  cyclically ordered set of \emph{arcs},
where each arc is a portion of an $s$-related bisector defining a
Voronoi edge along  $\partial \VR(s,S)$.
A single site in $S\setminus\{s\}$  may induce several arcs in
$\arcs$.
For any arc $\alpha\in \arcs$, let $s_\alpha$ 
denote the site in $S$ such that
$\alpha \subseteq J(s,s_\alpha)$.

We can interpret the arcs in $\arcs$ as sites that induce a
Voronoi diagram $\V(\arcs)$ such that
$\V(\arcs)
= \V(\sms)\cap \VR(s,S) \cap D_\Gamma$, see
Fig.~\ref{fig:vd}.
By Lemma~\ref{lem:treelike}, each face of $\V(\arcs)$ is incident
to exactly one arc in $\arcs$.
Thus, the face of $\V(\arcs)$ incident to an arc $\alpha\in\arcs$ can
be considered its Voronoi region $\fs{\alpha}$.
Then $\V(\arcs)$ can be regarded as the diagram derived by the 
Voronoi regions of the arcs in~$\arcs$.

\begin{figure}
	\centering
	\includegraphics{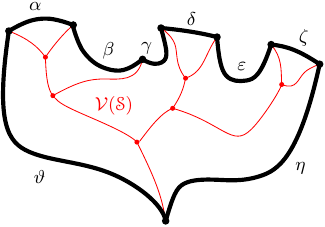}
	\caption{Illustration of $\arcs=\partial \VR(s,S)$ in bold
		(black) and $\V(\arcs)$ in red; 
		$\arcs  = ( \alpha, \beta, \gamma,\delta, \varepsilon, \zeta, \eta, \vartheta )$. 
	}
	\label{fig:vd}
\end{figure}

        The arrangement of a bisector set $\J'\subseteq \J$ is denoted by $\arr(\J')$.
	A \emph{path}~$P$ in the arrangement $\arr(\J')$ is a connected sequence 
	of alternating edges and vertices in this arrangement. 
	An \emph{arc} $\alpha$ of $P$  (denoted as $\alpha \in P$) is a maximally connected
        collection 
	of consecutive edges and vertices of the arrangement along $P$,  
	which belong to the same bisector.
	The common endpoint of two consecutive arcs of $P$ is a
	\emph{vertex of $P$}. An arc of $P$ is also called an \emph{edge}.
Two consecutive arcs in a path $P$  are pieces of 
different bisectors.

%For an arc $\alpha \in P$ let $s_\alpha $  denote the site in $S$ such that
%$\alpha \subseteq J(p,s_\alpha)$.

\begin{figure}[b]
		\centering
		\includegraphics{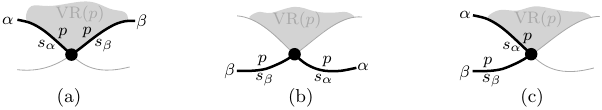}
		\caption{(a) Arcs $\alpha, \beta$ 
			fulfill the $p$-monotone path condition;  
			they do not fulfill it (b) and~(c).}
		\label{fig:pathvertex}
\end{figure}

\begin{definition} \label{def:monotone}
	A path in the arrangement of $p$-related bisectors $\jj{p}{S'}$, $S'\subseteq S$,  is called \emph{$p$-monotone} 
	if any two consecutive arcs $\alpha, \beta$ along this path,  where
	$\alpha\subseteq J(p,s_\alpha)$ and $\beta\subseteq J(p,s_\beta)$, 
	coincide locally (within a neighborhood of their common
        endpoint) with  Voronoi edges  of  
	$\partial \VR(p,\{p,s_\alpha,s_\beta\})$ that are incident to
	this common endpoint 
	(see Fig.~\ref{fig:pathvertex} and Fig.~\ref{fig:paths}).
\end{definition}

\begin{definition} \label{def:envelope} 
         The  \emph{$p$-envelope} (or simply \emph{envelope}) of $\jj{p}{S'}$
                is $\env(\jj{p}{S'})=\partial \VR(p,S'  \cup \{p\})$ (see
        Fig.~\ref{fig:paths}(a)).
\end{definition}

The arrangement of the  bisectors in $\J_{p,S'}$ may  consist of several connected
components.
We can unify these  connected components by including $\Gamma$ in the
bisector system.
Then, $\env(\jj{p}{S'}\cup  \Gamma)$ is a single closed $p$-monotone
path, which 
contains all the connected components of $\env(\jj{p}{S'})$
interleaved by arcs of $\Gamma$.
$\VR(s,S)$ and in particular to a subset $\arcs'$ of its Voronoi edges.

\begin{figure}
	\centering
	\includegraphics{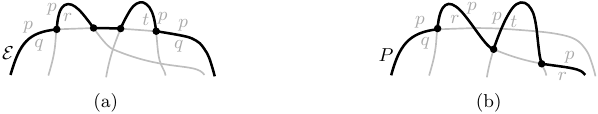}
	\caption{
		(a) The envelope $\E=\env(\mathcal{J}_{p,\{q,r,t\}})$.
		(b) A $p$-monotone path $P$ in  $\mathcal{J}_{p,\{q,r,t\}}$.}
	\label{fig:paths}
\end{figure}

\begin{definition} \label{def:boundarycurve}
	Consider $\arcs'\subseteq \arcs$ and let 
	$S'=\{s_{\alpha }  \in S \, | \, \alpha\in \arcs'\}$ 
	be the corresponding set of its sites.
        A \emph{boundary curve} $\pp$ for $\arcs'$ is a
	closed $s$-monotone path in the arrangement of $s$-related
        bisectors $\J_{s,S'} \cup \Gamma$ such that all arcs in
        $\arcs'$ are contained in $\pp$.
        Let $D_{\pp}$ denote the \emph{domain} of $\pp$, which is 
	the part of the	plane enclosed by $\pp$. 
        Let  $S_\pp=S'$. 
      \end{definition}

A set $\arcs'\subset\arcs$ can admit several different boundary curves;
one such boundary curve is  its
 $s$-envelope $\E=env(\arcs')=\env(\J_{s,S'}\cup \Gamma)$.
The set $\arcs$ can admit only one boundary curve, which is 
its $s$-envelope $env(\arcs)=\partial(\VR(s,S) \cap D_\Gamma)$.
Fig.~\ref{fig:vld} illustrates  a boundary curve for a subset of
arcs from Fig.~\ref{fig:vd}.

A boundary curve $\pp$  
consists of pieces of $s$-bisectors 
called  \emph{boundary arcs}, and pieces of $\Gamma$,  
called \emph{$\Gamma$-arcs}.
$\Gamma$-arcs correspond to openings of the 
domain $D_\pp$ to infinity.
Among the boundary arcs, those containing an arc of $\arcs'$ are called
\emph{original} and  others are called \emph{auxiliary arcs}.
Original boundary arcs in~$\pp$ are expanded versions of the arcs in~$\arcs$.
To distinguish  them, we call the elements of $\arcs$
\emph{core arcs} and use an~$^*$ in their notation. 
We denote by $|\pp|$ the number of boundary arcs in~$\pp$.
Fig.~\ref{fig:vld} illustrates a boundary curve $\pp$ on $\arcs' \subseteq \arcs$ 
consisting of five original arcs, one auxiliary arc ($\beta'$)
and one $\Gamma$-arc ($g$); the core arcs 
are illustrated in bold; the set $\arcs$ is shown in  Fig.~\ref{fig:vd}.

\begin{figure}[b]
	\centering
	\includegraphics{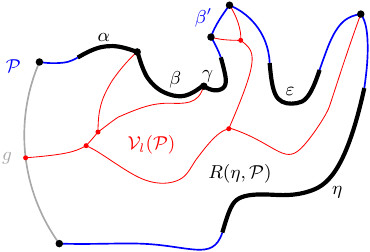}
	\caption{
		A boundary curve $\pp$ on $\arcs'\subseteq
		\arcs$, where $\arcs'$ is shown in bold,  and
                its Voronoi-like diagram $\vld(\pp)$ shown in thin red.
		The gray arc  $g$ is a $\Gamma$-arc, and the blue arc $\beta'$ is
		an auxiliary arc; the remaining arcs are original.
		The set $\arcs$ is shown in Fig.~\ref{fig:vd}.
	}
	\label{fig:vld}
      \end{figure}

We now define the Voronoi-like diagram of a boundary curve $\pp$ on
$\arcs'\subseteq \arcs$, where
$S'=\{s_{\alpha } \in S \, | \, \alpha\in \arcs'\}$
is the corresponding set of sites.
Let $\J(S') \subseteq \J$ be the system of bisectors related to
$S'$, i.e.,
$\J(S') = \{J(p,q) \in \J \,|\, p,q \in S'\}$.
    
\begin{definition} \label{def:vld} 
  	Given a boundary curve $\pp$ 
  	on  $\arcs'\subseteq\arcs$, 
    the \emph{Voronoi-like diagram} of  $\pp$ is a plane graph
    $\vld(\pp)$ on 
    the arrangement  of the bisector system  $\J(S') $
    inducing a subdivision of the domain $D_{\pp}$ as follows 
	(see Fig.~\ref{fig:vld}):
        \begin{enumerate}
        \itemsep=0pt
	\item
	for each boundary arc $\alpha\in \pp$, there is exactly one distinct face
	$R(\alpha,\pp)$, whose boundary is
	an $s_\alpha$-monotone path in $\jj{s_\alpha}{S'} \cup
	\Gamma$, plus arc $\alpha$;	
	\item the faces cover the domain $D_{\pp}$: $\bigcup_{\alpha \in
            {\pp\setminus\Gamma}} \overline{R(\alpha,\pp)} =
          \overline{D_\pp}$.
\end{enumerate}
 \end{definition}

Voronoi-like regions in $\vld(\pp)$ are related to the real Voronoi
regions  
as supersets as we show in the following lemma.
Let $\V(\E)=\V(S')\cap D_{\E}$ be the Voronoi diagram of the $s$-envelope $\E$ of
$\arcs'$.
In $\V(\E)$ any face incident to a boundary
arc $\alpha \in \E$ can be regarded as its  Voronoi region 
$\VR(\alpha,\E)$.
For an original arc $\alpha\in \pp$ there is an original arc
$\tilde\alpha\in \E$ and a core arc $\alpha^*\in \arcs$ such that
$\alpha\supseteq \tilde\alpha\supseteq \alpha^*$.

\begin{lemma} \label{lem:facesupset}
  Let $\alpha\in \pp$ be a boundary arc such that $\tilde\alpha\subseteq \alpha$
   appears  on  the $s$-envelope $\E$. 
   Then, $R(\alpha,\pp) \supseteq \VR(\tilde\alpha,\E)$.
	Further, if $\alpha$ is original, then $
	R(\alpha,\pp) \supseteq \VR(\tilde\alpha,\E)\supseteq\VR(\alpha^*,\arcs)$.
      \end{lemma}

\begin{proof}
	By the definition of a Voronoi  
	region, no piece of a bisector $J(s_\alpha,\cdot)$  
	can appear in the interior of a Voronoi region in
        $\V(S')\cap D_\E$. Thus  no piece of
        $J(s_\alpha,\cdot)$ can appear in  
        $\VR(\tilde\alpha,\E)$, for any 
	$\tilde\alpha\in \E$. 
        Since $\alpha \supseteq\tilde\alpha$, 
        by the definition of a
        Voronoi-like region  it follows that  $R(\alpha,\pp) \supseteq
        \VR(\tilde\alpha,\E)$.  
	For an original arc $\alpha$, since  $S' \subseteq S$, by the
	monotonicity property of Voronoi regions, we also have 
	$\VR(\tilde\alpha,\E)\supseteq\VR(\alpha^*,\arcs)$.
\end{proof}

In Fig.~\ref{fig:vld} the Voronoi-like region $R(\eta,\pp)$ 
is a superset of its corresponding Voronoi region $\VR(\eta^*,\arcs)$ 
of $\V(\arcs)$ in Fig.~\ref{fig:vd};
similarly  $R(\alpha,\pp)\supseteq \VR(\alpha^*,\arcs)$.

As a corollary to the superset property of Lemma~\ref{lem:facesupset}, the adjacencies
of the real Voronoi diagram $\V(\E)$ are preserved in
$\vld(\pp)$,  for all the original arcs.
As a result, $\vld(\E)$ must coincide with the real  Voronoi diagram
$\V(\E)=\V(S')\cap D_\E$.

\begin{corollary}\label{lem:EVLD} 
	$\vld(\E)=\V(\E)=\V(S')\cap D_\E$ for the $s$-envelope $\E$ of
        $\arcs'\subseteq \arcs$.
\end{corollary}

\begin{proof}
Consider two arcs $\alpha \neq \beta$ of $\E$. 
Suppose $\VR(\alpha,\E)$ is adjacent to $\VR(\beta,\E)$.
Since by Lemma~\ref{lem:facesupset}, 
$R(\alpha,\E) \supseteq \VR(\alpha,\E)$ and 
$R(\beta,\E) \supseteq \VR(\beta,\E)$, it follows that 
$R(\alpha,\E)$ must be adjacent to $R(\beta,\E)$.
This implies that the regions in $\vld(\E)$ have the  
same adjacencies as in $\V(\E)$. Lemma~\ref{lem:facesupset} also
implies that 
there can be no additional adjacencies;
thus, $\vld(\E)=\V(\E)$.
 \end{proof}

In the remaining section we give basic properties of Voronoi-like
regions involving their interaction with the bisectors in $\J$, which we use 
 to derive correctness and establish that the Voronoi-like
diagram is well-defined.

\subsection{Properties of Voronoi-like regions}

The following property establishes that  a Voronoi-like
region $R(\alpha,\pp)$ can never be intersected by  $J(s,s_\alpha)$.

\begin{lemma}\label{lem:regionclosertos}
 	For any arc $\alpha\in \pp$, $R(\alpha,\pp) \subseteq D(s,s_\alpha)$.
\end{lemma}
\begin{proof}
	The contrary would yield a forbidden $s_\alpha$-inverse 
	cycle defined by a component of $J(s,s_\alpha)\cap
        R(\alpha,\pp)$ and the incident  portion of $\partial R(\alpha,\pp)$.
 \end{proof}

\begin{lemma}\label{lem:nocycleindomain}
  For a boundary curve $\pp$, its domain
	$\overline{D_\pp}$ may not contain a $p$-cycle formed by the
        bisectors of
        $\J(S_\pp)\cup\Gamma$,
        for any site $p\in S_\pp$.
\end{lemma}

\begin{proof}         
  Let $p\in S_\pp$. 
  Any original arc of $p$ in $\pp$ is
  	bounding $\VR(p, S_\pp\cup\{s\})$, thus, it must have a portion
  	within the interior of $\VR(p, S_\pp)$ in  $\V(S_\pp)$.
        Hence, $\VR(p,S_\pp)$ must have some non-empty portion outside 
        the closure of $D_\pp$. 
	However, $\VR(p,S_\pp)\cap D_\Gamma$ must be enclosed within any  
	$p$-cycle of $\J(S_\pp)\cup\Gamma$, by its definition.
	Thus, no such $p$-cycle can be contained in $\overline{D_\pp}$.
 \end{proof}

Next, we give a key property of a Voronoi-like region, 
which we call
the \emph{cut property}, see Fig.~\ref{fig:cutproperty}.
Suppose
bisector $J(s_\alpha,s_\beta)$ intersects the Voronoi-like region
$R(\alpha,\pp)$. 
%which can never happen in a real Voronoi region.
Let $e$ be a connected  component of $J(s_\alpha,s_\beta)\cap
R(\alpha,\pp)$ and let 
 $R_e(\alpha)$ denote the portion of  region $R(\alpha,\pp)$
 that is \emph{cut out by $e$} as shown in
 Fig.~\ref{fig:cutproperty}.
More precisely  $R_e(\alpha)$ is defined as follows.
If $e$ does not intersect $\alpha$, 
then $R_e(\alpha)$ is the portion of the region
at the opposite side of $e$ as $\alpha$ (case (a),  see Fig.~\ref{fig:cutproperty}(a)). 
Otherwise, let $\tilde\beta$ be the component of $J(s,s_\beta)\cap
R(\alpha,\pp)$  incident to $e$ and $\alpha$, and let $R_e(\alpha)$ be
the portion of  $R(\alpha,\pp)$ that contains $\tilde\beta$ (cases (b) and
(d) in Fig.~\ref{fig:cutproperty}). 
%intersecting $\alpha$ at the same point as $e$.
If there is another component  of $J(s_\alpha,s_\beta)\cap
R(\alpha,\pp)$ incident to
$\alpha$, let $R_e(\alpha)$ be the portion of $R(\alpha,\pp)$  
between  the two components (case~(c), see  Fig.~\ref{fig:cutproperty}(c)).
\deleted{
If $e$ and $\tilde\beta$  have only one endpoint on $\alpha$, let
$R_e(\alpha)$ be the portion of the region that contains $\tilde\beta$
(case~(b), see Fig.~\ref{fig:cutproperty}(b)). 
If $e$ intersects $\alpha$ once but $\tilde\beta$ intersects it twice,
then there are 
two components of $J(s_\alpha,s_\beta)\cap R(\alpha,\pp)$ incident
to $\alpha$;
let $R_e(\alpha)$ denote the portion of $R(\alpha,\pp)$  
between these two components (case~(c), see Fig.~\ref{fig:cutproperty}(c)).
Otherwise, both $e$ and $\tilde \beta$ intersect $\alpha$ twice, 
let $R_e(\alpha)$  be the
portion of the region incident to  $\partial R(\alpha,\pp)$
(case~(d), see Fig.~\ref{fig:cutproperty}(d)).
}
% \evanthia{Added}
Note that  if $\beta \in \pp$ then cases (c) and (d) cannot appear
since related bisectors can only intersect twice.
%Since related bisectors can only intersect twice, cases (c) and (d) cannot appear if $\beta \in \pp$. 

\begin{figure}
	\centering
	\includegraphics{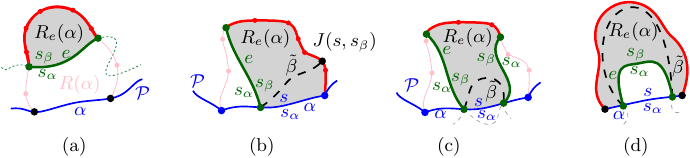}
	\caption{Illustrations for Lemma~\ref{lem:propertyR}.
		The  shaded region $R_e(\alpha)$ lies in $D(s_\beta,s_\alpha)$.        
	}
	\label{fig:cutproperty}
\end{figure}

\begin{lemma}\label{lem:propertyR}
	Suppose bisector $J(s_\alpha,s_\beta)$ appears within
	$R(\alpha,\pp)$ (see Fig.~\ref{fig:cutproperty}).
	For any connected component $e$ of $J(s_\alpha,s_\beta)\cap
	R(\alpha,\pp)$, it holds
        $R_e(\alpha) \subseteq
	D(s_\beta,s_\alpha)$. 
        Thus, if $e$ does not intersect $\alpha$, the label
	$s_\alpha$ must appear on the same side of $e$ as $\alpha$.
\end{lemma}

\noindent
Note that $\partial R_e(\alpha)$ may  contain  $\Gamma$-arcs.

\begin{proof}
Let  $e$ be an arbitrary component  of $J(s_\alpha,s_\beta)\cap
R(\alpha,\pp)$.        
Suppose for the sake of contradiction that $R_e(\alpha) \not\subseteq
        D(s_\beta,s_\alpha)$. 
Then $J(s_\beta,s_\alpha)$  must intersect the interior of
$R_e(\alpha)$ with  a component $e'$ of
$J(s_\beta,s_\alpha) \cap R(\alpha,\pp)$,  which is different from
$e$. 
Among any such component, let $e'$ be the first one following $e$
along $J(s_\beta,s_\alpha)$.
Since  $e'$ cannot intersect $e$, nor  can it intersect $\tilde \beta$,
% by the definition of $\tilde \beta$
it follows that $e'$ must create 
an $s_\alpha$-cycle with $\partial R_e(\alpha)$,
contradicting Lemma~\ref{lem:nocycleindomain}.
%Since  $e'$ cannot intersect $e$, nor can it intersect $\tilde \beta$
%(as a direct implication of
%Lemma~\ref{lem:regionclosertos} in cases (b) and (c))
%it follows that $e'$ cannot 
%intersect the arc $\alpha$.
%But then  $e'$ creates 
%an $s_\alpha$-cycle with $\partial R_e(\alpha)$ 
%within  $\overline{D_\pp}$,
%contradicting Lemma~\ref{lem:nocycleindomain}.
 \end{proof}

Lemma~\ref{lem:propertyR} implies that any components of 
$J(s_\alpha,s_\beta)\cap R(\alpha,\pp)$ must appear
sequentially along $\partial R(\alpha,\pp)$.
In addition, if any such component exists, $J(s,s_\beta)$ 
must also intersect the domain  $D_\pp$ with a component that is  
 \emph{missing} from $\pp$.
We use this fact to establish that $\vld(\pp)$
is unique in the following theorem whose
 proof is deferred to
Section~\ref{sec:unique}.

\begin{theorem}\label{thm:vldunique}
	Given a boundary curve $\pp$ of $\arcs'\subseteq\arcs$,
	$\vld(\pp)$ is unique, assuming it exists. 
\end{theorem}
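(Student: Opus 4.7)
My plan is to prove Theorem~\ref{thm:vldunique} by induction on the number of auxiliary arcs of $\pp$, equivalently on the number of components of $J(s,\cdot)$-bisectors inside $D_\pp$ that are ``missing'' from $\pp$. I would take the $s$-envelope $\E=\env(\jj{s}{S'}\cup\Gamma)$ as the base case and use the missing-arc property together with Theorem~\ref{thm:mergecorrect} as the inductive tool.

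For the base case, $\pp$ has no auxiliary arcs and hence $\pp=\E$. Corollary~\ref{lem:EVLD} gives $\vld(\E)=\V(\arcs')$, the uniquely determined real Voronoi diagram of $\arcs'$, so $\vld(\pp)$ is unique in this case.

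For the inductive step, assume $\pp\neq\E$ and let $V_1, V_2$ be two Voronoi-like diagrams of $\pp$. Since $\pp\neq\E$, at least one auxiliary arc appears on $\pp$, and by the missing-arc property derived from Lemmas~\ref{lem:propertyR} and~\ref{cor:propertyR}, there exists a non-empty component $\beta$ of $J(s,s_\beta)\cap D_\pp$ that is missing from $\pp$. The existence and location of such a $\beta$ depend only on the bisector system together with $D_\pp$, not on the chosen Voronoi-like diagram, so the same $\beta$ can be picked in both $V_1$ and $V_2$. I would then form the enlarged boundary curve $\pp_\beta=\pp\oplus\beta$ (cf.\ Figure~\ref{fig:updateboundary}) and apply Theorem~\ref{thm:mergecorrect} to each $V_i$ to obtain a Voronoi-like diagram $V_i'$ of $\pp_\beta$ that is fully determined by $V_i$ and $\beta$. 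Since $\pp_\beta$ has strictly fewer missing arcs than $\pp$, the induction hypothesis yields $V_1'=V_2'$.

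To conclude $V_1=V_2$ from $V_1'=V_2'$, I would argue that the merge operation is reversible: from $\vld(\pp_\beta)$ one recovers $\vld(\pp)$ uniquely by deleting the face $R(\beta)$ together with its bounding $s_\beta$-monotone path and re-stitching the two neighboring faces across $\beta$. The main obstacle I expect is justifying this reversibility, i.e., showing that the merge of Theorem~\ref{thm:mergecorrect} is in fact a bijection between Voronoi-like diagrams of $\pp$ and those of $\pp_\beta$ whose newly created face corresponds to $\beta$. A careful inspection of the combinatorial correspondence between the faces of the two diagrams, using that Voronoi-like regions are supersets of real Voronoi regions (Lemma~\ref{lem:facesupset}) and that no $p$-cycles can fit inside $\overline{D_\pp}$ (Lemma~\ref{lem:nocycleindomain}), should suffice to rule out the possibility of two distinct pre-images under the merge.
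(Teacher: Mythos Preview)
Your proposal has a genuine gap at the reversibility step, and as written the argument is circular. You reduce uniqueness of $\vld(\pp)$ to uniqueness of $\vld(\pp_\beta)$ \emph{plus} injectivity of the map $V\mapsto V\oplus\beta$. But recovering a Voronoi-like diagram on $\pp$ from one on $\pp_\beta$ by ``deleting $R(\beta)$ and re-stitching the neighbouring faces'' is precisely a deletion step, and showing that its outcome is unique is no easier than Theorem~\ref{thm:vldunique} itself. The lemmas you invoke do not give injectivity: Lemma~\ref{lem:facesupset} only constrains regions of arcs that appear on the envelope~$\E$, which a \emph{missing} $\beta$ in general does not, and Lemma~\ref{lem:nocycleindomain} forbids $p$-cycles but says nothing about how the hole left by $R(\beta)$ must be filled by the adjacent regions.

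There is also a problem with your induction parameter. The number of auxiliary arcs of $\pp$ and the number of missing components of $J(s,\cdot)$ inside $D_\pp$ are \emph{not} the same quantity; moreover, inserting a missing $\beta\subseteq J(s,s_\beta)$ can split another missing component of $J(s,q)$ in two (since $J(s,q)$ and $J(s,s_\beta)$ may intersect twice), so the count of missing components need not decrease. Termination can be salvaged via the finiteness of $\arr(\J_{s,S'}\cup\Gamma)$ together with the strict shrinking $D_{\pp_\beta}\subsetneq D_\pp$, but this does not repair the circular step above.

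The paper's proof avoids both difficulties by arguing directly. If $\vld^{(1)}\neq\vld^{(2)}$, some edge $e^{(1)}\subseteq J(s_\alpha,s_\beta)$ bounding $R^{(1)}(\beta)$ enters $R^{(2)}(\alpha)$. Lemma~\ref{lem:missingarc}, applied to $\vld^{(2)}$, then produces a missing component $\beta_0$ of $J(s,s_\beta)\cap D_\pp$ whose merge curve $J(\beta_0)$ in $\vld^{(2)}$ contains the overlapping piece of $e^{(1)}$. Now $\partial R^{(1)}(\beta)$ and $J(\beta_0)$ are two $s_\beta$-monotone paths sharing an arc but anchored at distinct arcs $\beta\in\pp$ and $\beta_0\notin\pp$; together they close up into an $s_\beta$-cycle inside $\overline{D_\pp}$, contradicting Lemma~\ref{lem:nocycleindomain}. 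The idea you are missing is to play the two hypothetical diagrams \emph{against each other} through the merge curve of a missing arc, rather than to push both towards a common refinement and then try to pull back.
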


The complexity of $\vld(\pp)$ is $O(|\pp|)$ as it is a planar graph with 
exactly one face per boundary arc  and vertices of degree 3 (or 1). 

\section{Insertion in a Voronoi-like diagram}\label{sec:insert}

Consider a boundary curve $\pp$ on a set of core arcs $\subs  \subset \arcs$ 
and its Voronoi-like diagram $\vld(\pp)$. 
Let $\beta^*$ be a core  arc in $\arcs\setminus \subs$.
Since $\beta^*$ is a core arc, it must be entirely contained in
the domain  $D_\pp$.
We define an insertion operation $\oplus$, which inserts a core arc $\beta^*$  
in $\pp$, and derives the boundary curve $\pp_\beta=
\pp\oplus\beta^*$ 
and $\vld(\pp_\beta)=\vld(\pp) \oplus\beta^*$.

Given $\pp$ and $\beta^*$, let the original arc 
$\beta\supseteq \beta^*$ be the connected component 
of $J(s,s_\beta)\cap \overline{D_{\pp}}$ that contains $\beta^*$, see Fig.~\ref{fig:updateboundary}.
$\pp_\beta$ is the boundary curve derived from $\pp$ by
substituting its portion between the endpoints of $\beta$, with
$\beta$ itself.
% (see Fig.~\ref{fig:updateboundary}).
We say that   $\pp_\beta$ is derived from $\pp$ by \emph{inserting}
the core arc $\beta^*$, or equivalently, by inserting the original arc
$\beta$.
%Since $\pp$ is a boundary curve of $\subs$, $\pp_\beta$ is also a
%boundary curve of $\subs\cup\{\beta^*\}$.

\deleted{
We often use the notation $\beta$ interchangeably referring either to an original
arc on $\pp$ or a  core arc $\beta^*$  in
$\arcs\setminus \subs$. % ($\beta\supseteq\beta^*$). 
}

\begin{figure}
	\centering
	\includegraphics{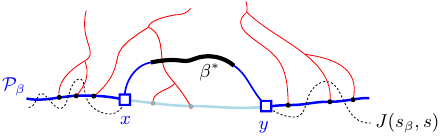}
	\caption{
		$\pp_\beta = \pp \oplus \beta$,
		core arc $\beta^*$ is bold, black.
		Endpoints of $\beta$ are $x,y$. \\ 
	}
	\label{fig:updateboundary}
\end{figure}

The insertion operation $\oplus$ performs the 
following tasks algorithmically:
(1) inserts the core arc $\beta^*$ 
in $\pp$, deriving  $\pp_\beta= \pp\oplus\beta^*=\pp\oplus\beta$;
(2) computes a \emph{merge curve} $J(\beta)$, which defines
  the boundary of $R(\beta, \pp_\beta)$; and
(3) updates $\vld(\pp)$ to derive
    $\vld(\pp_\beta)=\vld(\pp)\oplus \beta$.
Fig.~\ref{fig:insert_arc} enumerates all possible cases in task~1 and 
it is  summarized in the following observation.

\begin{observation} 
	\label{obs:insert-beta} 
	All possible cases of inserting	arc $\beta^*\subseteq \beta$
        in $\pp$ are as follows (see Fig.~\ref{fig:insert_arc}). 
	\begin{enumerate}[(a)]
        \itemsep=-2pt
		\item \label{item:std} 
		Arc $\beta$ straddles the endpoint of two consecutive  
		boundary arcs; no arcs in $\pp$ are deleted.  
		\item \label{item:deletearc} 
		Auxiliary arcs in $\pp$ are deleted by 
		$\beta$;  
		their regions are also deleted from $\vld(\pp_\beta)$.
		\item  \label{item:splitregion}
		An arc $\alpha \in \pp$ is split into two arcs by $\beta$;
		$R(\alpha,\pp)$
		will also be split.
		\item \label{item:splitgap}
		A $\Gamma$-arc  is split in two by $\beta$; $\vld(\pp_\beta)$ may
		switch from being a tree to being a forest.
		\item  \label{item:deletegaps}
		A $\Gamma$-arc is deleted or shrunk by inserting
		$\beta$.  $\vld(\pp_\beta)$ may become a tree.
		\item  \label{trivial}
		$\pp$ already contains a boundary arc $\bar\beta\supseteq \beta^*$; 
		then $\beta=\bar\beta$ and $\pp_\beta=\pp$. 
    \end{enumerate}
               $\pp_\beta$ may contain fewer, the same number, 
               or even one additional auxiliary arc compared to~$\pp$. 
\end{observation}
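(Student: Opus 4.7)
The plan is to establish the enumeration by tracing what happens geometrically to $\pp$ when we remove the portion lying in $D(s_\beta, s)$ between the endpoints of $\beta$ and replace it with $\beta$ itself. The arc $\beta$, being a connected component of $J(s, s_\beta) \cap \overline{D_\pp}$ (and a Jordan-type curve by axiom (A3)), is a simple curve whose endpoints $x, y$ lie on $\partial D_\pp = \pp$ unless $\beta$ already coincides with an existing boundary arc of $\pp$, which immediately gives case (\ref{trivial}). In all other cases, the classification is driven entirely by where $x$ and $y$ land on $\pp$ and what lies between them along the side of $\pp$ contained in $D(s_\beta, s)$.

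First I would enumerate the location types for each endpoint: $x$ (respectively $y$) can coincide with a vertex of $\pp$, lie in the interior of an original or auxiliary boundary arc, or lie in the interior of a $\Gamma$-arc. If both endpoints sit at vertices of $\pp$ and no arc lies strictly between them on the $D(s_\beta, s)$-side, no existing arc is disturbed, giving case (\ref{item:std}). If an endpoint lies in the interior of a boundary arc $\alpha$, then by definition of $\oplus$ the arc $\alpha$ is cut at $x$ (or $y$) and only the portion in $D(s, s_\beta)$ survives, which is case (\ref{item:splitregion}); the corresponding Voronoi-like region $R(\alpha)$ is split by the component of $J(s_\alpha, s_\beta) \cap R(\alpha)$ forced to exist by the interaction argument preceding Theorem~\ref{thm:vldunique}. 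If the replaced portion of $\pp$ contains one or more entire auxiliary arcs, those arcs and their regions disappear in $\pp_\beta$, giving case (\ref{item:deletearc}).

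The $\Gamma$-cases are handled analogously. If an endpoint of $\beta$ lies inside a $\Gamma$-arc $g$, then $g$ is split by $x$ (or $y$); one piece is retained in $\pp_\beta$ while the other is replaced. This introduces a new opening of $D_{\pp_\beta}$ to infinity where there was previously a single connected $\Gamma$-arc, potentially disconnecting $\vld(\pp_\beta)$ and producing case (\ref{item:splitgap}). Conversely, if the replaced portion of $\pp$ fully contains a $\Gamma$-arc, that arc is removed or truncated by $\beta$, closing an opening and potentially reconnecting components into a tree, giving case (\ref{item:deletegaps}). The inclusion $D_{\pp_\beta} \subseteq D_\pp$ then follows because $\oplus$ only removes a subregion of $D_\pp$ bounded by $\beta$ and the excised subpath of $\pp$, and adds nothing outside.

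The main obstacle is really bookkeeping rather than a substantive mathematical difficulty: one must check that each endpoint classification can be combined independently (so $\beta$ may, for instance, both split an arc and delete auxiliary arcs, and the listed cases are not mutually exclusive), and one must verify that $\pp_\beta$ remains a closed $s$-monotone path in $\J_{s, S_{\pp_\beta}} \cup \Gamma$, which follows from Definition~\ref{def:monotone} applied at the two new vertices $x, y$ where $\beta$ meets the surviving part of $\pp$, using that $\beta \subseteq J(s, s_\beta)$ and that the adjoining arcs of $\pp$ are by construction pieces of $s$-related bisectors.
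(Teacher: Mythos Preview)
The paper does not prove this observation; it is stated as a case enumeration justified by Figure~\ref{fig:insert_arc}, and the deeper claims embedded in the cases (e.g., that $R(\alpha)$ is actually split in case~(\ref{item:splitregion})) are only established later, in Lemma~\ref{lem:nosplit} and Theorem~\ref{thm:insertion}. Your plan to classify by the locations of the endpoints $x,y$ on $\pp$ is the natural one and matches what the figure conveys.

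However, your mapping from endpoint locations to the named cases is garbled. Under the paper's general-position assumption, $x$ and $y$ cannot coincide with existing vertices of $\pp$ (that would force three $s$-related bisectors through one point), so your description of case~(\ref{item:std}) as ``both endpoints sit at vertices of $\pp$'' is vacuous. Case~(\ref{item:std}) is instead the situation where $x$ and $y$ lie in the interiors of two \emph{consecutive} boundary arcs, so that each is merely shortened and $\beta$ straddles their common vertex. Likewise, your description of case~(\ref{item:splitregion}) as ``an endpoint lies in the interior of a boundary arc $\alpha$'' is too weak: that alone only truncates $\alpha$. The split of case~(\ref{item:splitregion}) occurs precisely when \emph{both} endpoints $x,y$ lie in the interior of the \emph{same} boundary arc $\alpha$, so that the middle piece of $\alpha$ is excised and two disjoint pieces remain. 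With these corrections your enumeration goes through; the containment $D_{\pp_\beta}\subseteq D_\pp$ follows from the construction as you say, and the $s$-monotonicity of $\pp_\beta$ is exactly Lemma~\ref{lem:boundary}.
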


\begin{figure}
	\centering
	\includegraphics[page=1]{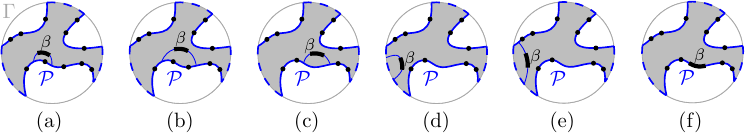}
	\caption{Insertion cases for an arc $\beta$.}
	\label{fig:insert_arc}
\end{figure}

Given $\vld(\pp)$ and arc $\beta$, 
we define  a \emph{merge curve}
$J(\beta)$, which delimits the boundary of 
 $R(\beta, \pp_\beta)$. 
We define $J(\beta)$ algorithmically, starting at an endpoint of
$\beta$, and tracing $s_\beta$-related bisectors within the faces of
$\vld(\pp)$, refer to
Fig.~\ref{fig:updatediagram}.
We  prove that $J(\beta)$ is an $s_\beta$-monotone path that connects the
endpoints of $\beta$.
Let $x,y$ denote the endpoints of $\beta$, where $x\beta y$ appear in
counterclockwise order.
We assume a counterclockwise traversal of $\pp$.

\begin{definition}\label{def:mergecurve}
	Given $\vld(\pp)$ and arc $\beta\subseteq J(s,s_\beta)$, 
	the \emph{merge curve} $J(\beta)$ is a path $(v_1, \dots, v_m)$
	in the arrangement of $s_\beta$-related bisectors, $\jj{s_\beta}{S_\pp} \cup \Gamma$,
	connecting the endpoints of $\beta$,  $v_1 = x$ and $v_m = y$.
 	Each edge $e_i = (v_i, v_{i+1})$ is an arc of a bisector 
	$J(s_\beta, \cdot)$, called a \emph{bisector} edge,  or an arc
        on $\Gamma$.
        We assume a clockwise ordering of $J(\beta)$.
	For $i=1$:   if $x \in J(s_\beta, s_{\alpha})$, then 
	$e_1 \subseteq J(s_\beta,s_{\alpha})$; if $x \in \Gamma$,
	then  $e_1 \subseteq \Gamma$.
	Given $v_i$, vertex $v_{i+1}$ and edge $e_{i+1}$ are defined
	as follows: 

	\begin{enumerate}
        \topsep=0pt
        \itemsep=0pt
		\item
		If   $e_{i} \subseteq J(s_\beta, s_{\alpha})$, 
		let $v_{i+1}$ be the other endpoint of the 
		component  $J(s_\beta, s_{\alpha}) \cap R(\alpha,\pp)$ incident to $v_{i}$. 
		If $v_{i+1} \in J(s_\beta, \cdot)\cap J(s_\beta, s_{\alpha})$, 
		then $e_{i+1} \subseteq J(s_\beta,\cdot)$.
		If $v_{i+1}\in \Gamma $, then $e_{i+1} \subseteq \Gamma$. 
		(In Fig.~\ref{fig:updatediagram},  
		see $e_{i} = e', v_{i}=z,v_{i+1} = z'$.)		
		\item
		If $ e_{i} \subseteq \Gamma$, let $g$ be the
                $\Gamma$-arc in $\pp$ incident to $v_i$.
		Let $e_{i+1} \subseteq J(s_\beta,s_\gamma)$, where $R(\gamma,\pp)$ is the first 
		region, incident to $g$ clockwise from $v_{i}$ 
		such that $J(s_\beta, s_\gamma)$ intersects $g \cap \overline{R(\gamma,\pp)}$; 
		let $v_{i+1}$  be this intersection point.
		(In Fig.~\ref{fig:updatediagram}, see $v_i = v$ and $v_{i+1} = w$.)
     \end{enumerate}
\end{definition}

The following theorem shows that $J(\beta)$ forms an 
$s_{\beta}$-monotone path joining the endpoints
 of $\beta$.
 We defer its proof to the end of this section.

\begin{figure}
	\centering
	\begin{minipage}{0.48\textwidth}
		\centering
		\includegraphics{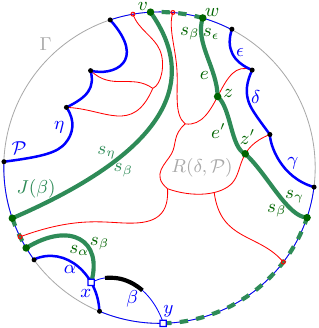}
		\caption{The merge curve $J(\beta)$ (thick, green) on $\vld(\pp)$ (thin, red).}
		\label{fig:updatediagram}
	\end{minipage}
	\hfill
	\begin{minipage}{0.48\textwidth}
		\begin{minipage}{0.99\textwidth}
			\centering
			\includegraphics{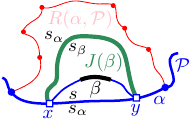}
			\caption{If $\beta$ splits $\alpha$, 
				$J(\beta) \subset R(\alpha,\pp)$ would yield 
				a forbidden  $s_\alpha$-inverse cycle.	}
			\label{fig:connected}
		\end{minipage}
		\begin{minipage}{0.99\textwidth}
			\vspace{19pt}
			\centering
			\includegraphics[page=1]{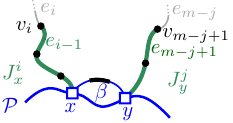}
			\caption{$J_x^i$ and $J_y^j$ in Section~\ref{sec:mergecorrect}.}
			\label{fig:partsofmergecurve}
		\end{minipage} 
	\end{minipage}
\end{figure}

\begin{theorem} \label{thm:mergecorrect}
	The merge curve $J(\beta)$ is a unique  $s_{\beta}$-monotone path in the
	arrangement of $s_\beta$-related bisectors $\arr(\jj{s_\beta}{S_\pp} \cup \Gamma)$ 
	connecting the endpoints of~$\beta$.
        If arc $\beta$ splits a single arc $\alpha\in \pp$
        (case~(\ref{item:splitregion}), Observation~\ref{obs:insert-beta})
        then $J(\beta)$ must intersect $R(\alpha,\pp)$ in two different components,
        $e_1,e_{m-1} \subseteq J(s_\alpha,s_\beta)$.
        $J(\beta)$ can intersect any other region in $\vld(\pp)$
        at most once.
        $J(\beta)$ cannot intersect the region of any
        arc in $\pp\setminus \pp_\beta$, which gets deleted by the insertion of $\beta$,
        nor can it intersect arc $\beta$  in
        its interior.
\end{theorem}

Let  $T(\beta)$ denote the portion of  $\vld(\pp)$ enclosed by
$J(\beta)$.
 $\ins$ is obtained from $\vld(\pp)$ by
deleting $T(\beta)$ and substituting it by $J(\beta)$, i.e.,
$\ins=\big(\vld(\pp) \setminus T(\beta)\big)
\cup J(\beta)$.

\begin{theorem}\label{thm:insertion}
	$\ins$ is the  Voronoi-like diagram 
	 $\vld(\pp_\beta)$.
\end{theorem}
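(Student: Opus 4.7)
The plan is to verify the two conditions of Definition~\ref{def:vld} for the subdivision $\ins$ viewed on $D_{\pp_\beta}$: that each boundary arc of $\pp_\beta$ has a unique face bounded by itself plus a correspondingly monotone path, and that these faces cover $\overline{D_{\pp_\beta}}$. Lemma~\ref{lem:boundary} already gives that $\pp_\beta$ is a valid boundary curve for $\subs\cup\{\beta^*\}$, so only the structural conditions on regions remain.

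For the region enclosed by $\beta\cup J(\beta)$, designated $R(\beta)$, Theorem~\ref{thm:mergecorrect} yields immediately that $J(\beta)$ is an $s_\beta$-monotone path in $\jj{s_\beta}{S_\pp}\cup\Gamma \subseteq \jj{s_\beta}{S_{\pp_\beta}}\cup\Gamma$. For any other arc $\alpha$ of $\pp_\beta$, the new $R(\alpha)$ is obtained from the old $R(\alpha)$ in $\vld(\pp)$ either by trimming it along $J(\beta)$ (when $J(\beta)$ passes through $R(\alpha)$ once), or by partitioning it in the split case of Lemma~\ref{lem:nosplit}. By Theorem~\ref{thm:mergecorrect} the ordinary edges of $J(\beta)$ lying in $R(\alpha)$ are sub-arcs of $J(s_\alpha,s_\beta)$; by Lemmas~\ref{lem:propertyR} and~\ref{cor:propertyR} the portion of $\partial R(\alpha)$ that these edges cut off lies in $D(s_\beta,s_\alpha)$ and is absorbed cleanly by $R(\beta)$. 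The resulting $\partial R(\alpha)\setminus\alpha$ is therefore the concatenation of surviving sub-paths of the old boundary with the inserted piece(s) on $J(s_\alpha,s_\beta)$, and hence a path in $\jj{s_\alpha}{S_{\pp_\beta}}\cup\Gamma$.

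The main obstacle is to verify $s_\alpha$-monotonicity at each juncture $v$ where a new $J(s_\alpha,s_\beta)$-edge meets either the surviving old boundary or the arc $\alpha$ itself (at an endpoint of $\beta$). At such a $v$, the adjacent arc on the other side lies on some $s_\alpha$-related bisector $J(s_\alpha,s_\gamma)$ (the $\Gamma$-arc case is analogous); by Definition~\ref{def:mergecurve} and general position, $v$ is an arrangement vertex lying on both $J(s_\alpha,s_\beta)$ and $J(s_\alpha,s_\gamma)$, hence a Voronoi vertex of $\V(\{s_\alpha,s_\beta,s_\gamma\})$ through which $J(s_\beta,s_\gamma)$ also passes. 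The new $R(\alpha)$ carries label $s_\alpha$ on its side of both arcs at $v$: on the old boundary by the $s_\alpha$-monotonicity already present in $\vld(\pp)$, and on the new $J(s_\alpha,s_\beta)$-edge because $R(\beta)$ sits on the opposite side with label $s_\beta$. Consequently the two arcs incident to $v$ coincide with the two edges of $\partial\VR(s_\alpha,\{s_\alpha,s_\beta,s_\gamma\})$ at $v$, which is exactly the $s_\alpha$-monotone path condition of Definition~\ref{def:monotone}.

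Finally, the cover condition follows by set-theoretic bookkeeping: $D_{\pp_\beta}\subseteq D_\pp$ by Observation~\ref{obs:insert-beta}, the old regions of $\vld(\pp)$ cover $\overline{D_\pp}$, and every point of $\overline{D_{\pp_\beta}}$ lies either in $R(\beta)$ or in the restriction of some old region to $\overline{D_{\pp_\beta}}\setminus R(\beta)$, completing the verification that $\ins=\vld(\pp_\beta)$.
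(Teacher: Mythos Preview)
Your proof is correct and follows the same overall approach as the paper: invoke Theorem~\ref{thm:mergecorrect} for $R(\beta)$, use Lemma~\ref{lem:nosplit} to control how other regions are affected, and conclude that Definition~\ref{def:vld} is satisfied. The paper's own proof is considerably terser—it simply asserts that ``the updated boundary of any other region $R(\alpha)$ \ldots\ remains an $s_\alpha$-monotone path'' without justification—whereas you spell out the local check at each juncture $v$ via the three-site diagram $\V(\{s_\alpha,s_\beta,s_\gamma\})$. That extra detail is sound (the $s_\beta$-monotonicity of $J(\beta)$ pins $R(\beta)$ to $\VR(s_\beta)$ near $v$, forcing the surviving half of $J(s_\alpha,s_\gamma)$ and the new $J(s_\alpha,s_\beta)$-edge to be exactly the two $\partial\VR(s_\alpha,\{s_\alpha,s_\beta,s_\gamma\})$ edges), so you are filling in what the paper leaves implicit rather than taking a different route.
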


\begin{proof}
  By construction, $\ins$ induces a subdivision of the domain
  $D_{\pp_\beta}$. Let $R(\alpha)$ denote the face of $\ins$  incident
  to a boundary  arc
  $\alpha\in \pp_\beta$.
  By Theorem~\ref{thm:mergecorrect}, $J(\beta)$,
   and thus,  $\partial R(\beta)\setminus\beta$,
  is an
  $s_\beta$-monotone path connecting the endpoints of $\beta$.
        For any arc $\alpha\in \pp$ such that $J(\beta)$ passes through
        $R(\alpha,\pp)$, the boundary of the updated  face 
        in $\ins$ remains an $s_\alpha$-monotone path,
        by the definition of $J(\beta)$.
        Thus, $\partial R(\alpha)\setminus \alpha$ 
        is an
         $s_\alpha$-monotone path for any region
        $R(\alpha)$ in $\ins$, satisfying the first requirement of
        Definition~\ref{def:vld}.
        
        Since $J(\beta)$  can enter any region in $\vld(\pp)$ at
        most once (except case~(c), Observation~\ref{obs:insert-beta}))
        it cannot \emph{cut out} 
        a face that may remain in the interior of $D_\pp$.
        In addition, $J(\beta)$  cannot pass through 
       any  region of an arc  in $\pp\setminus\pp_\beta$,
       thus, such a region must be  enclosed by $J(\beta)$ and will be  deleted.
        Hence, any face of $\ins$ must be incident to a boundary arc
        of $\pp_\beta$, satisfying also the second requirement of
        Definition~\ref{def:vld}.
        Since,  by
        Theorem~\ref{thm:vldunique}, the Voronoi-like diagram  of a boundary
        curve is unique, it follows that  $\ins=\vld(\pp_\beta)$.     
 \end{proof}

The tracing of the merge curve $J(\beta)$ within $\vld(\pp)$,  given the endpoints
of~$\beta$, can be done in linear time similarly to tracing such a curve in any
ordinary Voronoi diagram, see, e.g.,~\cite[Ch.~7.5.3]{Aurenbook}.
This is correct as a result of the cut
property of Lemma~\ref{lem:propertyR}.
When $J(\beta)$ enters a region $R(\gamma,\pp)$ at a
point $v_i$,
we can determine $v_{i+1}$ by scanning $\partial R(\gamma,\pp)$
counterclockwise sequentially until we encounter the first
intersection with $J(s_\beta, s_\gamma)$. 
%for the first encountered intersection with $J(s_\beta, s_\gamma)$.
Lemma~\ref{lem:propertyR} assures that
no  intersection of $J(s_\beta, s_\gamma)$ with $\partial
R(\gamma,\pp)$ 
between $v_i$ and  $v_{i+1}$ is possible,  such as the one shown in Fig.~\ref{fig:tracing}.
Thus, we can state the following fact.

\begin{lemma}\label{lem:scan}
  Let $e_i=(v_i,v_{i+1})$ be an edge of  $J(\beta)$ in $R(\gamma,\pp)$.
  Given $v_i$, we can determine $v_{i+1}$ by sequentially scanning $\partial
  R(\gamma,\pp)$  counterclockwise from $v_i$ (i.e., away from $\gamma$)
  until the first intersection of $J(s_\beta, s_\gamma)$ with $\partial
  R(\gamma,\pp)$ which determines $v_{i+1}$.
 \end{lemma} 

\begin{figure}
	\centering
	\includegraphics{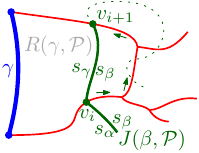}
	\caption{Impossible configuration of $J(s_\beta, s_\gamma)$. Scanning $\partial R(\gamma,\pp)$ from $v_i$ counterclockwise, 
		Lemma~\ref{lem:propertyR} assures that $v_{i+1}$ is the first encountered intersection 
		of $J(s_\beta, s_\gamma)$ with $\partial R(\gamma,\pp)$. }
	\label{fig:tracing}
\end{figure}

Special care is required in Observation~\ref{obs:insert-beta}, cases~(\ref{item:splitregion}),  (\ref{item:splitgap}),
(\ref{item:deletegaps}),
to  identify the first edge of $J(\beta)$ because
$\beta$ does not overlap any  feature of $\vld(\pp)$.
To handle them we need some parameters as defined below.

Let $\tilde\pp$ denote the finer version of $\pp$ derived by
intersecting its $\Gamma$-arcs with  $\vld(\pp)$,
i.e., partitioning  the  $\Gamma$-arcs of $\pp$
into finer pieces 
by the incident faces of $\vld(\pp)$.
Since the complexity of $\vld(\pp)$ is $O(|\pp|)$, it follows that 
$|\tilde\pp|$ is also $O(|\pp|)$.
        
\begin{definition}
  \label{def:parameters}
		Let  $\alpha$ and $\gamma$ denote the original arcs preceding and
                following $\beta$ on $\pp_\beta$.
                We assume a counterclockwise traversal of  $\pp$ and $\pp_\beta$.
                %and  use the following parameters:
                \begin{enumerate}\itemsep=-2pt\topsep=0pt
                  
        	\item Let $d_1(\beta,\pp_\beta)$ denote the number of auxiliary arcs 
        	 that appear on  $\pp_\beta$ from $\alpha$ to
                 $\beta$ (or equivalently from $\beta$ to $\gamma$). 
        	        
        	\item Let $d_2(\beta,\pp_\beta)$ denote the number of
                  auxiliary arcs that appear on  $\pp $ between the
                  endpoints of $\beta$, which get
        	deleted by the insertion of $\beta$.

      \item
        In case (\ref{item:splitregion}) of
        Observation~\ref{obs:insert-beta}, where $\beta$ splits an arc $\omega$ in two arcs $(\omega_1,\omega_2)$,
        let $r(\beta,\pp_\beta)= 
        \min\{|\partial R(\omega_1, \pp_\beta)|, |\partial R(\omega_2,
        \pp_\beta)|\}$; otherwise, let $r(\beta,\pp_\beta)=0$.

       	\item In case~(\ref{item:splitgap}) of Observation~\ref{obs:insert-beta}, 
where $\beta$ splits a $\Gamma$-arc,  let
$\tilde d(\beta,\pp_\beta)$ denote the number of
fine $\Gamma$-arcs  on  $\tilde\pp_\beta$ from $\alpha$ to $\beta$ (i.e., the number of
regions in $\vld(\pp_\beta)$ incident to $\Gamma$ from $\alpha$ to
$\beta$); in all other cases, $\tilde d(\beta,\pp_\beta) = 0$.
    \end{enumerate}
\end{definition}

     	\begin{figure}
     	\centering
     	\begin{minipage}{0.48\textwidth}
     		\centering
     		\includegraphics[page=1]{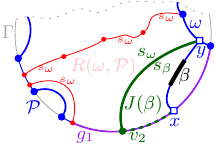}
     		\caption{
     			Case (\ref{item:deletegaps}) of Observation~\ref{obs:insert-beta}, 
     			where  $T(\beta)$ has no leaf on $\pp$.
     			Endpoint $x$ lies on a fine $\Gamma$-arc
     			$g_1$ bounding $R(\omega,\pp)$, 
     			and $y \in \omega$. 
     		}
     		\label{fig:insertinshrunkgap}
     	\end{minipage}
     	\hfill
     	\begin{minipage}{0.48\textwidth}
     		\centering
     		\includegraphics[page=1]{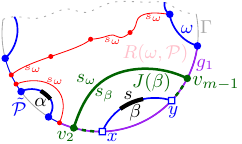}
     		\caption{
     			Case (\ref{item:splitgap}) of Observation~\ref{obs:insert-beta}, 
     			where  $T(\beta)$ has no leaf on $\pp$.
     			Both $x,y$ lie on a fine $\Gamma$-arc 
     			$g_1$ bounding $R(\omega,\pp)$. 
     		}
     		\label{fig:insertingap}
     	\end{minipage}
     \end{figure}

\begin{lemma}\label{lem:insertion-time}
	Given $\alpha$, $\gamma$, and $\vld(\pp)$,
	the merge curve $J(\beta)$ can be computed	in time 
        $O( |J(\beta)| + 
        d_1(\beta,\pp_\beta) + 
        d_2(\beta, \pp_\beta) +
        r(\beta, \pp_\beta)+ 
        \tilde d(\beta,\pp_\beta)  ) $. 
\end{lemma}

\begin{proof}
	  We assume a ccw ordering of $\pp$.
	  We first determine the endpoints of~$\beta$ 
	  in time $O(d_1(\beta,\pp_\beta)+ d_2(\beta,\pp_\beta))$
	  by scanning sequentially  the arcs in $\pp$ starting at $\alpha$ and
	  moving ccw (towards $\gamma$)   
	  until the endpoints of $\beta$ are determined.
	  Note that $\beta$ contains $\beta^*$ therefore we can easily identify the
	  correct component of  $J(s,s_\beta)\cap D_\pp$ during the scan, even
	  if $J(s,s_\beta)$ intersects $\pp$ multiple times.
      This scan further determines which case of Observation~\ref{obs:insert-beta}  
	  is relevant. %concerned. 

		Let $T(\beta)$ denote the portion of $\vld(\pp)$ that is
		enclosed by  $J(\beta)$ and $\pp\setminus \pp_\beta$;
        this is deleted by the insertion of $\beta$. 
        $T(\beta)$ is a plane forest, which by 
        Theorem~\ref{thm:mergecorrect} is incident to the 
        following faces of $\vld(\pp)$: one face for each bisector edge of
        $J(\beta)$, and one face for each auxiliary arc $\alpha'\in
        \pp\setminus\pp_\beta$. 
        The latter number is counted in $ d_2(\beta, \pp_\beta)$.
        We infer that $T(\beta)$ has complexity 
        $O(|J(\beta)| + d_2(\beta, \pp_\beta))$.
    
To compute $J(\beta)$ we  trace $T(\beta)$ in
time  $O(|T(\beta)|)$, as for any ordinary Voronoi diagram,
and this  is possible due to Theorem~\ref{thm:mergecorrect}  and
Lemma~\ref{lem:scan}.
However, we first need to identify one  leaf of $T(\beta)$.
%which may require additional time in some cases of Observation~\ref{obs:insert-beta}.

 Suppose first that $T(\beta)$ has a leaf on $\pp$. 
 Then, in all cases of Observation~\ref{obs:insert-beta}, except cases~(\ref{item:splitgap}) 
 and~(\ref{item:deletegaps}), a leaf of $T(\beta)$ is identified  
 by the initial scan. 
In case~(\ref{item:deletegaps}), $\beta$ has at least one endpoint
on a boundary arc $\rho$ of $\pp$, see Fig.~\ref{fig:updatediagram}.
We identify a leaf by scanning $\tilde\pp$ starting at $\rho$
and moving towards the other endpoint of $\beta$.
The scan takes only one step as the leaf will be incident to the first $\Gamma$-arc 
neighboring $\rho$ on $\tilde\pp$. 
In case~(\ref{item:splitgap}) both endpoints of $\beta$ are on
 $\Gamma$.
We scan 
$\tilde \pp$ from $\alpha$ to $\beta$
until we locate the first endpoint of $\beta$, $x$.
A leaf of $T(\beta)$ must be incident to  the fine $\Gamma$-arc  that
contains $x$.
Since the  encountered $\Gamma$-arcs 
remain in $\tilde \pp_\beta$, the term $O(\tilde d(\beta,
\pp_\beta))$ is  added to the overall time complexity.

Suppose now that  $T(\beta)$ has no leaf on $\pp$. Then  $\beta$ is
enclosed within a single Voronoi-like region $R(\omega,\pp)$.
There are three cases Observation~\ref{obs:insert-beta}~(\ref{item:splitregion}),
(\ref{item:splitgap}), and (\ref{item:deletegaps}) to consider.

        In  case Observation~\ref{obs:insert-beta}(\ref{item:splitregion}), the insertion of
        $\beta$ splits arc $\omega$ in two parts, $\omega_1$ and
        $\omega_2$. 
        To identify a leaf of $T(\beta)$
        we scan $\partial R(\omega,\pp)$ sequentially until
        an intersection with  $J(s_{\omega},s_{\beta})$ is found.
        We start scanning from both endpoints of
        $\omega$, tracing the shorter among 
        $\partial R(\omega_1,\pp_\beta)$ and  
        $\partial R(\omega_2, \pp_\beta)$. This adds the term 
		$r(\beta,\pp_\beta)$ to the overall time complexity.

In cases Observation~\ref{obs:insert-beta}(\ref{item:splitgap}),(\ref{item:deletegaps}),
$J(\beta) \subseteq R(\omega,\pp) \cup \Gamma$,
since otherwise $J(\beta)$ would intersect the 
region $R(\omega,\pp)$ twice,   
contradicting Theorem~\ref{thm:mergecorrect}.
Thus, $J(\beta)$ consists of a single bisector  $J(s_\omega,s_\beta)$
and one ($m=3$) or two ($m=4$) $\Gamma$-arcs
see Figs.~\ref{fig:insertinshrunkgap} and~\ref{fig:insertingap}. 
Thus, it is enough  to identify $\omega$. 
In case~(\ref{item:deletegaps}), $\omega$ is identified during the
initial scan. 
In case (\ref{item:splitgap}),
$\beta$ has both endpoints on~$\Gamma$.
We scan $\tilde \pp$ from $\alpha$ to $\beta$
until we locate the first endpoint of $\beta$, $x$.
Then the $\Gamma$-arc that contains $x$ in $\tilde\pp$
bounds the region  $R(\omega,\pp)$.
This scan adds the term $O(\tilde d(\beta, \pp_\beta))$ to the time complexity.
 \end{proof}

\subsection{Proving Theorem~\ref{thm:mergecorrect}}
\label{sec:mergecorrect}

We first establish the following lemma.

\begin{lemma}
$J(\beta)$ cannot intersect arc $\beta$, other
than its endpoints. 
\end{lemma}

\begin{proof}
Suppose that an edge $e_i$ of $J(\beta)$, such that $e_i\subseteq
J(s_\alpha,s_\beta) $ and $e_i \in  R(\alpha,\pp)$,
intersects arc $\beta$. Then $J(s,s_\alpha)$ must also pass through the same
intersection point within $R(\alpha,\pp)$. But an 
 $s$-bisector $J(s,s_\alpha)$ can never intersect $R(\alpha,\pp)$, by
Lemma~\ref{lem:regionclosertos}. 
\end{proof}

We use the following observation throughout the proofs in this section.
      
\begin{lemma}\label{lem:domain}
  For any $p\in S$, 
	$D(s,p) \cap D_\pp $ is connected.
	Thus, any components of the same $s$-bisector 
	$J(s,\cdot) \cap  D_\pp$ must appear sequentially along $\pp$.
\end{lemma}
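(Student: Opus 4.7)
The plan is a proof by contradiction via a forbidden $p$-cycle. Suppose for some site $p \neq s$ the set $D(s,p) \cap D_\pp$ is disconnected. Since $D_\pp$ is a topological disk bounded by the closed curve $\pp$, the connected components of $J(s,p) \cap D_\pp$ are pairwise disjoint chord-arcs of this disk. Disconnectedness of $D(s,p) \cap D_\pp$ therefore forces the existence of two such chord-arcs $e_1, e_2 \subseteq J(s,p)$ that bound, together with two portions $\pi_1, \pi_2$ of $\pp$, a region $B \subseteq D(p,s) \cap D_\pp$ separating two components of $D(s,p) \cap D_\pp$.

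Next I would pin down the labels of $B$ relative to the sites appearing on $\pi_1 \cup \pi_2$. For every arc $\alpha \subseteq J(s, s_\alpha)$ on $\pi_i$, the $D_\pp$-side of $\alpha$ lies in $D(s, s_\alpha)$: for original arcs this follows from $D_\pp \supseteq D_s = \VR(s,S) \cap D_\Gamma$ together with the fact that core arcs bound $\VR(s,S)$ on their $D(s, s_\alpha)$ side, while for auxiliary arcs the same orientation is forced by the $s$-monotone property of $\pp$, which propagates the label consistently across vertices. Hence $B \subseteq D(s, s_\alpha)$ locally across each $\alpha$; combined with $B \subseteq D(p, s)$, Lemma~\ref{lem:transitivity} yields $B \subseteq D(p, s_\alpha)$ for every site $s_\alpha$ occurring on $\pi_1 \cup \pi_2$.

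Using these containments I would assemble a closed $p$-related curve inside $\overline{D_\pp}$, consisting of $e_1, e_2 \subseteq J(s,p)$ together with arcs of $J(p, s_\alpha)$ (and $\Gamma$-arcs where $\pi_i$ meets $\Gamma$) traced just along the $B$-side of $\pi_1$ and $\pi_2$. At each vertex of $\pi_i$ where $J(s, s_\alpha)$ meets $J(s, s_{\alpha'})$, the three bisectors of the triple $\{s, s_\alpha, s_{\alpha'}\}$ share a common intersection, so the arcs of $J(p, s_\alpha)$ and $J(p, s_{\alpha'})$ join continuously into a single $p$-monotone path. The containments $B \subseteq D(p, s_\alpha)$ just established ensure that on the resulting closed curve the label $p$ appears only on the inside, making it a $p$-cycle contained in $\overline{D_\pp}$. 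If $p \notin S_\pp$, I would first enlarge $S_\pp$ by inserting the relevant arcs via the operation of Section~\ref{sec:insert} so that $p \in S_\pp$ in the augmented setting. The resulting $p$-cycle then contradicts Lemma~\ref{lem:nocycleindomain}.

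The second conclusion of the lemma follows immediately from the first: distinct connected components of $J(s,\cdot) \cap D_\pp$ are pairwise disjoint chord-arcs of the topological disk $D_\pp$, so their endpoints on $\pp$ are non-interleaved and appear in cyclic sequential order along $\pp$. The main obstacle will be carrying out the $p$-cycle construction with full rigor — continuously gluing the $J(p, s_\alpha)$ arcs across transitions between consecutive arcs on $\pi_i$, verifying the $p$-label at every junction, carefully treating $\Gamma$-arcs on $\pi_i$, and justifying the reduction to the case $p \in S_\pp$. These technicalities closely parallel the labeling and cycle arguments already used in Lemmas~\ref{lem:propertyR} and~\ref{cor:propertyR}.
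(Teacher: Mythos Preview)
Your proposal takes a wrong turn at the very start by trying to build a $p$-cycle rather than an $s$-related cycle. Note that every arc of $\pp$ lies on some bisector $J(s,s_\alpha)$ (or on $\Gamma$), and $J(s,p)$ is also $s$-related; hence any closed curve assembled from pieces of $\pp$ and of $J(s,p)$ is automatically a cycle of $s$-related bisectors. The paper exploits exactly this: if $D(s,p)\cap D_\pp$ is disconnected, then a portion of $J(s,p)$ together with a portion of $\pp$ bounds a region on whose boundary the label $s$ lies entirely on the outside, i.e., an $s$-inverse cycle, which is forbidden by Lemma~\ref{lem:nocycle}. That is the whole proof.

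Your attempted $p$-cycle construction, by contrast, does not go through. The step where you glue arcs of $J(p,s_\alpha)$ and $J(p,s_{\alpha'})$ at a vertex of $\pi_i$ is unfounded: a vertex of $\pp$ is an intersection of $J(s,s_\alpha)$ and $J(s,s_{\alpha'})$ (and hence of $J(s_\alpha,s_{\alpha'})$), but the $p$-related bisectors $J(p,s_\alpha)$ and $J(p,s_{\alpha'})$ have no reason whatsoever to pass through that point. Your appeal to ``the three bisectors of the triple $\{s,s_\alpha,s_{\alpha'}\}$'' establishes nothing about bisectors involving $p$. Moreover, the containment $B\subseteq D(s,s_\alpha)$ that you claim is only valid \emph{locally} near the arc $\alpha$; since $\pp$ is merely $s$-monotone and not the envelope, $D_\pp$ (and hence $B$) need not lie globally in $D(s,s_\alpha)$, so the transitivity step to $B\subseteq D(p,s_\alpha)$ fails. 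Finally, your fallback of ``enlarging $S_\pp$ via the operation of Section~\ref{sec:insert}'' is both ill-defined (that operation inserts core arcs from $\arcs$, not arbitrary sites) and circular, since Lemma~\ref{lem:domain} is itself used in Section~\ref{sec:insert}.

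In short, drop the $p$-cycle idea and work with $s$-related cycles directly; the argument then collapses to one line, as in the paper.
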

\begin{proof}
	If we assume the contrary we obtain a forbidden $s$-inverse 
	cycle defined by $J(s,\cdot)$ and $\pp$.
 \end{proof}

We now establish that $J(\beta)$ cannot pass through
any region  of  auxiliary arcs in $\pp\setminus \pp_\beta$ that get deleted by the insertion of
$\beta$.

\begin{lemma} \label{lem:propertyRs}
	Let $\alpha\in \pp$ but $ \alpha\not\in \pp_\beta$. 
	Then $R(\alpha,\pp) \subset D(s_\beta,s_{\alpha})$. 
\end{lemma}
\begin{proof}
	\begin{figure}
		\centering
		\includegraphics[page=1]{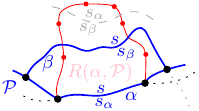}
		\caption{Illustrations for Lemma~\ref{lem:propertyRs}.
		}
		\label{fig:beta_deletes_alpha}
	\end{figure}
	By Lemma~\ref{lem:regionclosertos}, it holds that $R(\alpha,\pp) \subseteq D(s,s_\alpha)$.
	Let $R_s=R(\alpha,\pp) \cap D(s,s_\beta)$ and $R_\beta = R(\alpha,\pp) \cap D(s_\beta,s)$.
	By transitivity of dominance regions we have  $R_\beta \subseteq D(s_\beta,s_\alpha)$. 
	By Lemma~\ref{lem:domain}, $R_s$ is not incident to $\alpha$. 
	Thus, if  $J(s_\beta,s_\alpha)$ intersected $R_s$  then
	it would create a forbidden $s_\alpha$-cycle 
	contradicting Lemma~\ref{lem:nocycleindomain},
	see the dashed gray line in Fig.~\ref{fig:beta_deletes_alpha}.
	This implies that also $R_s \subseteq D(s_\beta,s_\alpha)$. 
	Thus,  $R(\alpha,\pp) = R_s \cup R_\beta \subseteq D(s_\beta,s_\alpha)$.
 \end{proof}

In the following we prove that $J(\beta)$ is an $s_\beta$-monotone path connecting the endpoints 
of $\beta$.
To this aim we perform a bi-directional induction on the vertices of
$J(\beta)$.

Let  $J_x^i = (v_1, v_2, \ldots, v_i), 1\leq i < m$, be the subpath of $J(\beta)$
starting at $v_1=x$ up to vertex $v_i$, including a small neighborhood 
of $e_{i}$ incident to $v_i$, see Fig.~\ref{fig:partsofmergecurve}. 
Note that vertex $v_i$ uniquely determines $e_{i}$, however,
its other endpoint is not yet specified.
Similarly, let $J_y^j = (v_m, v_{m-1}, \ldots, v_{m-j+1}),1\leq j < m$, 
denote  the subpath of $J(\beta)$, starting at $v_m$ up
to vertex $v_{m-j+1}$, including a small neighborhood of edge
$e_{m-j}$. 
For any bisector edge $e_\ell\in J(\beta)$,
let $\alpha_\ell$
denote the boundary arc that induces $e_\ell$, i.e.,
$e_\ell \subseteq J(s_{\alpha_\ell},s_\beta) \cap R(\alpha_\ell,\pp)$.

\emph{Induction hypothesis}: Suppose $J_x^i$ and  $J_y^{j}$, $i,j\geq 1$, are
disjoint  $s_\beta$-monotone paths.
Suppose further that each bisector edge of $J_x^i$ and of $J_y^j$ 
passes through a distinct
region of $\vld(\pp)$: $\alpha_\ell$ is  distinct for $\ell$,  
$1\leq\ell\leq i$ and  $m-j\leq \ell<m$, except possibly
$\alpha_i=\alpha_{m-j}$ and $\alpha_1=\alpha_{m-1}$.

\emph{Induction step}:
Assuming that $i+j<m$, we prove that at least one of $J_x^i$ or $J_y^j$
can respectively grow to $J_x^{i+1}$ or $J_y^{j+1}$ at a valid vertex
(Lemmas~\ref{lem:vertexonarc}, \ref{lem:gammabad}),
and it enters a new region of $\vld(\pp)$ that has not been visited so
far (Lemma~\ref{lem:distinctregions}).
A valid vertex belongs in  $\arr(\jj{s_\beta}{S_\pp} \cup \Gamma)$
or is an endpoint of $\beta$.
A finish condition when $i+j=m$ is given in Lemma~\ref{lem:finish}.
The base case for $i=j=1$ is trivially true.

Suppose that $e_i\subseteq  J(s_{\alpha_i},s_\beta)$ and
$v_i\in \partial R(\alpha_i,\pp)$.
To show that $v_{i+1}$ is a valid vertex it  is enough to show that
(1)  $v_{i+1}$ can not be on $\alpha_i$, and (2)
if $v_i$ is on a $\Gamma$-arc 
then $v_{i+1}$ can be determined on the same $\Gamma$-arc.
However, we cannot easily derive these conclusions directly. Instead we
show that if  $v_{i+1}$ is not valid then $v_{m-j}$ will have to be
valid.

In the following lemmas we assume that the  induction hypothesis holds.

\begin{lemma}\label{lem:vertexonarc}
	Suppose $e_i\subseteq  J(s_{\alpha_i},s_\beta)$ but
        $v_{i+1}\in \alpha_i$, that is, $e_i$ hits
        arc  $\alpha_i\in \pp$, and
        thus, $v_{i+1}$ is not a valid vertex.
	Then vertex $v_{m-j}$ must be a valid vertex in $\arr(\jj{s_\beta}{S_\pp})$, 
	and $v_{m-j}$ can not be on $\pp$.
\end{lemma}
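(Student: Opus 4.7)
The plan is to argue by contradiction, leveraging the property that three pairwise related bisectors meet at a common point, together with Lemma~\ref{lem:nocycleindomain} (no $s_\beta$-cycle inside $\overline{D_\pp}$). The first step is the elementary observation that, since $\alpha_i \subseteq J(s, s_{\alpha_i})$ and $e_i \subseteq J(s_{\alpha_i}, s_\beta)$ meet at $v_{i+1}$, the point $v_{i+1}$ must also lie on $J(s, s_\beta)$. Because $i \geq 1$ we have $v_{i+1}\neq x$, and because $v_{i+1}$ is non-valid we have $v_{i+1}\neq y$; hence $v_{i+1}\in J(s, s_\beta)\cap \pp$ is an endpoint of some component $\hat\beta$ of $J(s, s_\beta)\cap \overline{D_\pp}$ that is distinct from $\beta$.

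The main step is to show $v_{m-j}\notin \pp$. Suppose for contradiction that $v_{m-j}\in \pp$. Then either $e_{m-j}$ is ordinary, and the symmetric argument above places $v_{m-j}$ on $J(s,s_\beta)\cap \pp$ as an endpoint of yet another component of $J(s,s_\beta)\cap \overline{D_\pp}$; or $e_{m-j}$ is a $\Gamma$-arc ending on $\pp$, which is impossible by the definition of the merge curve. In the first case, by the induction hypothesis, $J_x^i$ and $J_y^j$ are disjoint $s_\beta$-monotone paths in $\jj{s_\beta}{S_\pp}\cup\Gamma$, each lying in $\overline{D_\pp}$. Combining them with a subarc of $\pp$ from $v_{i+1}$ to $v_{m-j}$ (which lies in $D(s_\beta, s)$, since Lemma~\ref{lem:domain} forces the components of $J(s,s_\beta)\cap D_\pp$ to alternate with arcs of $\pp$ on the two sides of $J(s,s_\beta)$) yields a closed curve $C$ inside $\overline{D_\pp}$. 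I would then verify, edge by edge, using Lemma~\ref{lem:propertyR} and Lemma~\ref{cor:propertyR}, that on every ordinary edge of $J_x^i\cup J_y^j$ the label $s_\beta$ appears on the side exterior to $C$, and that on the chosen $\pp$-subarc the label $s_\beta$ also appears on the exterior side; this makes $C$ an $s_\beta$-inverse cycle of $\J(S_\pp)\cup \Gamma$ enclosed in $\overline{D_\pp}$, contradicting Lemma~\ref{lem:nocycleindomain} (hence Lemma~\ref{lem:nocycle}). I expect the principal obstacle to be this label bookkeeping: one must confirm that the correct $\pp$-subarc (between the two pieces of $J(s,s_\beta)$ meeting $\pp$ at $v_{i+1}$ and $v_{m-j}$) is indeed the one on the $D(s_\beta,s)$-side, which is where Lemma~\ref{lem:domain} and the $s_\beta$-monotonicity of $J_x^i$ and $J_y^j$ come in.

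Once $v_{m-j}\notin \pp$ is established, validity of $v_{m-j}$ follows from the definition of $J(\beta)$. If $e_{m-j}$ is ordinary, $v_{m-j}$ is the endpoint opposite to $v_{m-j+1}$ of the component $J(s_{\alpha_{m-j}}, s_\beta)\cap R(\alpha_{m-j})$; the only remaining alternatives for this endpoint are either an intersection with another $s_\beta$-related bisector in $\jj{s_\beta}{S_\pp}$ (giving a vertex of $\arr(\jj{s_\beta}{S_\pp})$), or an intersection with $\Gamma$, or a point on $\pp$ (which we have just excluded). In the $\Gamma$-edge subcase, the analogous dichotomy leaves only a vertex of $\arr(\jj{s_\beta}{S_\pp}\cup \Gamma)$. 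In either case $v_{m-j}$ is a valid vertex, completing the proof.
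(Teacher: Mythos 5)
Your overall strategy (find a contradiction via a forbidden $s_\beta$-cycle, using the fact that $v_{i+1}\in J(s,s_\beta)$ together with Lemma~\ref{lem:domain}) is the same as the paper's, but the execution has two genuine gaps.

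First, the case $e_{m-j}\subseteq\Gamma$ cannot be dismissed as ``impossible by the definition of the merge curve.'' Definition~\ref{def:mergecurve} explicitly allows $\Gamma$-edges, and their endpoints lie on $\Gamma$, i.e.\ on a $\Gamma$-arc of $\pp$, so $v_{m-j}\in\pp$ is not automatically excluded in this subcase. The paper treats it separately: it observes that if $v_{m-j}$ lay on a $\Gamma$-arc (clockwise between $v_{i+1}$ and $v_m$), then $J_y^{j+1}$ would have to cross $J(s,s_\beta)$ on the way to $\Gamma$, and so $J_y^{j+1}$ together with a piece of $J(s,s_\beta)$ would again form a forbidden $s_\beta$-inverse cycle. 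You need an argument of this kind rather than an appeal to the definition.

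Second, the cycle you propose to build is not of the right type to contradict Lemma~\ref{lem:nocycle} or Lemma~\ref{lem:nocycleindomain}. You close up $J_x^i$ and $J_y^j$ with a subarc of $\pp$, but a $p$-(inverse) cycle must be a cycle of \emph{$p$-related} bisectors; the $\pp$-subarc is made of pieces of $J(s,s_\chi)$ for various $\chi$ (and $\Gamma$-arcs), none of which carries an $s_\beta$-label, so it makes no sense to say ``on the chosen $\pp$-subarc the label $s_\beta$ appears on the exterior side.'' The paper avoids this issue entirely: its forbidden cycle consists of $J_y^{j+1}$ together with the arc of $J(s,s_\beta)$ (namely $J_2$, the part of $J(s,s_\beta)$ emanating from $v_m$ away from $\beta$) joining $v_{m-j}$ back to $v_m$ through the exterior of $D_\pp$, which \emph{is} entirely $s_\beta$-related. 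You also skip the paper's bookkeeping that establishes $v_1, v_{i+1}, v_m$ appear on $\pp$ in clockwise order and that $v_{i+1}\in J_2$, and the resulting tripartite case analysis (where $v_{m-j}$ can lie clockwise between $v_m$ and $v_1$, between $v_1$ and $v_{i+1}$, or between $v_{i+1}$ and $v_m$), which is what makes the cycle argument precise.
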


\begin{proof}
	\begin{figure}
		\centering
		\includegraphics{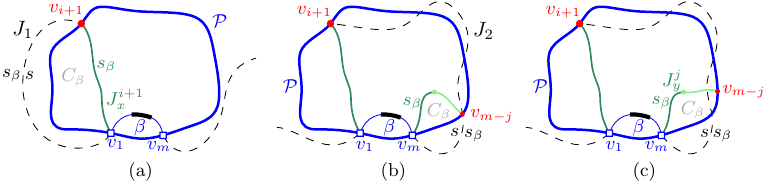}
		\caption{The assumption that edge $e_i = (v_i,v_{i+1})$ of 
			the merge curve $J_x^i$ hits a boundary arc of $\pp$ 
			as in Lemma~\ref{lem:vertexonarc}. }
		\label{hitboundary}
	\end{figure}
	Suppose vertex $v_{i+1}$ of $e_{i}$ lies on arc $\alpha_i$ 
	as shown in Fig.~\ref{hitboundary}(a).
	Vertex $v_{i+1}$ is the intersection point of  related bisectors $J(s, s_{\alpha_i})$, 
	$J(s_\beta, s_{\alpha_i})$ 
	and thus also of $J(s,s_\beta)$.
	Thus, $v_1,v_m,v_{i+1} \in J(s,s_\beta)$.
        By the induction hypothesis, no other vertex of 
        $J_x^i$ nor  $J_y^j$ can be on $J(s,s_\beta)$.
        Vertices $v_1,v_{i+1},v_m$ appear on $\pp$ in clockwise order, 
	because $J_x^{i+1}$ cannot intersect $\beta$. 
        Arc $\beta$ partitions $J(s,s_\beta)$ in two parts: 
	$J_1$ incident to $v_1$ and $J_2$ incident to $v_m$.
     We claim that $v_{i+1}$ must lie on $J_2$, as  otherwise, 
	$J_x^{i+1}$ and $J_1$ would form a forbidden $s_\beta$-inverse cycle, 
	see the dashed black and the green solid curve in Fig.~\ref{hitboundary}(a), 
	contradicting Lemma~\ref{lem:nocycle}.
        This cycle must be $s_\beta$-inverse  because $J_x^{i+1} \subseteq \overline{D_\pp}$,  and all components of 
		$J(s,\cdot) \cap  D_\pp$ must appear sequentially
        along $\pp$ by Lemma~\ref{lem:domain}.
                
	Thus,  $v_{i+1}$ lies on $J_2$. 
	Further, by  Lemma~\ref{lem:domain}, the components of $J_2 \cap D_\pp$ appear 
	on $\pp$ clockwise after 
	$v_{i+1}$ and before $v_m$, 
	as shown in Fig.~\ref{hitboundary}(b), which  illustrates  $J(s,s_\beta)$ 
	as a black dashed  curve.

	Now consider $J_y^{j}$.  We show that $v_{m-j}$ cannot be on $\pp$.
	First observe that $v_{m-j}$ can not lie on $\pp$,
	clockwise after $v_m$ and before $v_1$, 
	since $J_y^{j+1}$ cannot cross $\beta$.
	Now we prove that  $v_{m-j}$ cannot lie on $\pp$ clockwise 
	after $v_1$ and before  $v_{i+1}$.
	To see that, note that edge $e_{m-j}$ cannot cross any 
	non-$\Gamma$ edge of $J_x^{i+1}$, because by the induction hypothesis, 
	$\alpha_{m-j}$ is distinct from all $\alpha_\ell, \ell \leq i $.
	In addition, by the definition of a $\Gamma$-arc, 
	$v_{m-j}$ cannot lie on any $\Gamma$-arc  of $J_x^{i}$. 
	Finally, we show that $v_{m-j}$ cannot lie on $\pp$  
	clockwise after $v_{i+1}$ and before $v_m$.
	If $v_{m-j}$ lay on the boundary arc $\alpha_{m-j}$ 
	then we would have $v_{m-j} \in J(s,s_\beta)$.
	This would define an $s_\beta$-inverse cycle $C_\beta$, formed by 
	$J_y^{j+1}$ and $J(s_\beta, s)$, see
        Fig.~\ref{hitboundary}(b), similarly to 
        the first paragraph of this proof.
	If $v_{m-j}$ lay on a $\Gamma$-arc 
	then there would also be a forbidden $s_\beta$-inverse cycle formed by 
	$J_y^{j+1}$ and $J(s,s_\beta)$ because in order to reach $\Gamma$, 
	edge $e_i$ must cross $J(s,s_\beta)$.
	See the dashed black and the 
	green curve in Fig.~\ref{hitboundary}(c). 
	Thus $v_{m-j} \not \in \pp$.
	
	Since $v_{m-j}\in \partial R(\alpha_{i+1})$ but $v_{m-j}\not\in\pp$, it must be   
	a vertex of $\arr(\jj{s_\beta}{S_\pp})$. 
 \end{proof}

The proof for the following lemma is similar.

\begin{lemma}\label{lem:gammabad}
	Suppose vertex $v_i$ is on a $\Gamma$-arc $g\in \pp$ but $v_{i+1}$ cannot
	be determined  because no bisector $J(s_\beta, s_\gamma)$ intersects 
	$\overline{R(\gamma,\pp)}\cap g$, clockwise from $v_i$. 
	Then vertex $v_{m-j}$ must be a valid vertex in
	$\arr(\jj{s_\beta}{S_\pp})$ and $v_{m-j}$ can not be on $\pp$.
\end{lemma}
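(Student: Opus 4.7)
The plan is to mirror the structure of the proof of Lemma~\ref{lem:vertexonarc}. The failure case here is that $v_i$ lies on a $\Gamma$-arc $g$ of $\pp$ but scanning $g$ clockwise from $v_i$ produces no region $R(\gamma)$ incident to $g$ in which $J(s_\beta,s_\gamma)$ crosses $g$. This means the $x$-side extension $J_x^{i+1}$ is blocked, so the only way to close $J(\beta)$ between $v_1$ and $v_m$ is by extending $J_y^{j}$ one step further. I want to show that the corresponding vertex $v_{m-j}$ cannot lie on $\pp$, whence by construction it must be an interior intersection of two $s_\beta$-related bisectors, i.e., a valid vertex of $\arr(\jj{s_\beta}{S_\pp})$.

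First, I would translate the failure condition into a statement about $J(s,s_\beta)$. If some component of $J(s,s_\beta)$ crossed $g$ clockwise from $v_i$ inside a region $R(\gamma)$, then, since the three bisectors $J(s,s_\beta)$, $J(s,s_\gamma)$ and $J(s_\beta,s_\gamma)$ are pairwise related and meet in common points, $J(s_\beta,s_\gamma)$ would also meet $g \cap \overline{R(\gamma)}$ there, contradicting the failure assumption. By Lemma~\ref{lem:domain}, the components of $J(s,s_\beta) \cap D_\pp$ appear sequentially along $\pp$; combined with the above, this forces the endpoint $v_m = y$ of $\beta$ to lie on the counterclockwise side of $v_i$ along $\pp$, that is, inside the portion of $\pp$ traversed by $J_x^i$ starting at $v_1$.

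Next, I would run the same three-way case split for the possible location of $v_{m-j}$ on $\pp$, exactly as in Lemma~\ref{lem:vertexonarc}:
\begin{enumerate}
\item $v_{m-j}$ on $\pp$ clockwise from $v_m$ before $v_1$ is impossible, because then $J_y^{j+1}$ would cross $\beta$.
\item $v_{m-j}$ on a boundary or $\Gamma$-arc of $\pp$ already visited by $J_x^i$ is ruled out by the induction hypothesis (distinct regions) and by the definition of $\Gamma$-arcs.
\item $v_{m-j}$ on $\pp$ between $v_i$ and $v_m$ in the clockwise direction would create a forbidden $s_\beta$-inverse cycle formed by $J_y^{j+1}$ together with a component of $J(s,s_\beta)$ (or a piece of $\pp$), contradicting Lemma~\ref{lem:nocycle}.
\end{enumerate}
Therefore $v_{m-j} \notin \pp$, and since $v_{m-j} \in \partial R(\alpha_{m-j})$ is the meeting point of two $s_\beta$-related bisectors by the rule that builds $J(\beta)$, it is a valid vertex of $\arr(\jj{s_\beta}{S_\pp})$.

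The main obstacle, compared to Lemma~\ref{lem:vertexonarc}, is step two above: the failure does not directly pin a vertex of $\arr$ onto $J(s,s_\beta)$, so deriving the $s_\beta$-inverse cycle requires the indirect argument through the triple $J(s,s_\beta)$, $J(s,s_\gamma)$, $J(s_\beta,s_\gamma)$ to locate $y$ on the correct side of $v_i$. Once this localization is in hand, the $s_\beta$-inverse cycle arguments and the exclusion of $v_{m-j}$ from $\pp$ proceed in exact parallel with Lemma~\ref{lem:vertexonarc}.
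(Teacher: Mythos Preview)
Your first step contains a genuine error that breaks the argument. You claim that if $J(s,s_\beta)$ crossed the $\Gamma$-arc $g$ clockwise of $v_i$ inside some $R(\gamma)$, then the triple $J(s,s_\beta),J(s,s_\gamma),J(s_\beta,s_\gamma)$ would force $J(s_\beta,s_\gamma)$ through the same point on $g$. But $g\subseteq\Gamma$ is not the bisector $J(s,s_\gamma)$; the three-bisector concurrence property applies only at intersections of two related bisectors, not at a point where a bisector meets $\Gamma$. So nothing prevents $J(s,s_\beta)$ from hitting $g$ without $J(s_\beta,s_\gamma)$ doing so, and indeed the paper shows the \emph{opposite} of what you conclude: $J(s,s_\beta)$ \emph{must} cross $g$ at some point $z$ clockwise of $v_i$.

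The correct localization is obtained as follows. From Lemma~\ref{lem:regionclosertos} and $v_i\in J(s_\beta,s_{\alpha_{i-1}})\cap \overline{R(\alpha_{i-1}})$ one gets $v_i\in D(s,s_\beta)$ via transitivity. On the other hand, the failure hypothesis gives $R(\gamma)\cap g\subseteq D(s_\beta,s_\gamma)$ for every $R(\gamma)$ incident to $g$, so the clockwise endpoint $w$ of $g$ lies in $D(s_\beta,s)$. Hence $J(s,s_\beta)$ crosses $g$ at some $z$ between $v_i$ and $w$; by Lemma~\ref{lem:domain} this $z$ lies on the branch $J_2$ of $J(s,s_\beta)$ incident to $v_m$, and all components of $J_2\cap D_\pp$ sit on $\pp$ clockwise between $v_i$ and $v_m$. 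With this in hand, the exclusion of $v_{m-j}$ from $\pp$ does proceed as in Lemma~\ref{lem:vertexonarc}, plus one extra case you omit: $v_{m-j}$ cannot land on $g$ between $v_i$ and $w$, which is immediate from the failure hypothesis itself. Your localization of $v_m$ ``inside the portion of $\pp$ traversed by $J_x^i$'' is therefore backwards, and the subsequent case split, while structurally right, rests on the wrong geometric picture.
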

\begin{proof}
	\begin{figure}
		\centering
		\includegraphics{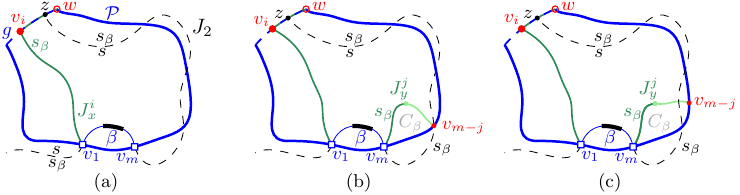}
		\caption{The assumption that $v_i \in \Gamma$ and $v_{i+1}$ of 
			the merge curve $J_x^i$ cannot be determined 
			as in Lemma~\ref{lem:gammabad}. }
		\label{fig:gammabad}
	\end{figure}
	We truncate the $\Gamma$-arc $g$ to its portion clockwise from
	$v_i$; let  $w$ be the endpoint of $g$ clockwise from $v_i$, see
	Fig.~\ref{fig:gammabad}(a).
	If no $J(s_\beta, s_\gamma)\cap R(\gamma,\pp)$  intersects $g$, as
	we assume in this lemma, then $R(\gamma,\pp)\cap g\subseteq
	D(s_\beta, s_\gamma)$, for any  region $R(\gamma,\pp)$ incident
	to $g$. Thus, $w \in D(s_\beta, s)$.
	However, $v_i \in D(s,s_\beta)$, since, by
	Lemma~\ref{lem:regionclosertos}, $R(\alpha_{i-1})\subseteq
	D(s,s_{\alpha_{i-1}})$ and $v_i$ is incident to $J(s_\beta,
	s_{\alpha_{i-1}})\cap R(\alpha_{i-1})$. 
	Thus, $J(s,s_\beta)$ must intersect $g$ at some point $z$ 
	clockwise from $v_i$.
	Arc $\beta$ partitions  $J(s,s_\beta)$ in two parts: $J_1$  incident to $v_1$ and  $J_2$ incident
	to $v_m$. 
	Lemma~\ref{lem:domain} implies that 
	all components of $J_2 \cap D_\pp$ appear on $\pp$ clockwise after 
	$v_{i}$ and before $v_m$,
	as shown by the black dashed  curve in Fig.~\ref{fig:gammabad}(a);
	also $z$ lies on $J_2$. 
	
	Now we can show that vertex $v_{m-j}$ of $J_y^{j}$ cannot be on $\pp$
	analogously to the proof of Lemma~\ref{lem:vertexonarc}. 
	The only difference is that we must additionally show that $v_{m-j}$ cannot lie on $\pp$ clockwise after 
	$v_i$ and before $w$. But this holds already by the assumption
        in the lemma statement.
	Refer to Figures~\ref{fig:gammabad}(b) and (c).
	
	We conclude that $v_{m-j}$ cannot lie on $\pp$ and it is  
	a valid vertex of $\arr(\jj{s_\beta}{S_\pp})$.
 \end{proof}

Lemma~\ref{lem:finish} in the sequel provides a finish condition for
the induction, when $J_x^i$ and $J_y^j$ are 
incident to a common region or to a common 
$\Gamma$-arc. 
When it is met,
the merge curve $J(\beta)$ 
is a concatenation of $J_x^i$ and $J_y^j$.

\begin{lemma}\label{lem:finish}
  Suppose $i+j>2$ and  either (1) or (2) holds:
  (1) $v_i$ and $v_{m-j+1}$ are incident to a common region
  $R(\alpha_i,\pp)$ and $e_i,e_{m-j}\subseteq J(s_\beta, s_{\alpha_i})$,
  i.e., $\alpha_i=\alpha_{m-j}$;
	or (2) $v_i$ and $v_{m-j+1}$ are on 
	a common $\Gamma$-arc $g$ of $\pp$ and 
	$e_i, e_{m-j} \subseteq \Gamma$.
	Then $v_{i+1}=v_{m-j+1}$, $v_{m-j}=v_i$, and $m=i+j$.
\end{lemma}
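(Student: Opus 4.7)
The plan is to show, in both cases, that the finish condition forces the forward step from $v_i$ and the backward step from $v_{m-j+1}$ to produce coinciding vertices. Concretely, I will establish that the component of $J(s_\beta, s_{\alpha_i})\cap R(\alpha_i)$ (resp.\ the sub-arc of $g$) accessed from the $x$-side is the same as that accessed from the $y$-side; once that is done, the conclusion $v_{i+1}=v_{m-j+1}$, $v_{m-j}=v_i$, and hence $m=i+j$, follows immediately from the induction hypothesis that $J_x^i$ and $J_y^j$ are disjoint (which forces $v_i\neq v_{m-j+1}$, so the common component's two endpoints must pair up exactly as claimed).

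For case~(1), write $\alpha=\alpha_i=\alpha_{m-j}$. By Definition~\ref{def:mergecurve}, $v_{i+1}$ is the other endpoint of the component $C$ of $J(s_\beta,s_\alpha)\cap R(\alpha)$ incident to $v_i$, and symmetrically $v_{m-j}$ is the other endpoint of the component $C'$ incident to $v_{m-j+1}$. I would assume for contradiction that $C\neq C'$. Then the curve $J_x^i$, continued along $C$ to $v_{i+1}$, then along the portion of $\partial R(\alpha)$ between $v_{i+1}$ and $v_{m-j}$, then along $C'$ back to $v_{m-j+1}$, then along $J_y^j$ to $v_m=y$, together with the arc $\beta$, forms a closed curve inside $\overline{D_\pp}$. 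Using Lemma~\ref{lem:propertyR}, each of $C,C'$ carries the label $s_\alpha$ on the $\alpha$-side and $s_\beta$ on the opposite side, while by Lemma~\ref{lem:regionclosertos} arc $\beta$ lies in $D(s,s_\beta)$ and $\partial R(\alpha)\subseteq D(s,s_\alpha)\cup\alpha$. Combining these with the related-bisector relations ($J(s,s_\alpha),J(s,s_\beta),J(s_\alpha,s_\beta)$ intersect at the same points) as in the proofs of Lemmas~\ref{lem:vertexonarc} and~\ref{lem:gammabad}, I would identify an $s_\beta$-inverse cycle within $\overline{D_\pp}$, contradicting Lemma~\ref{lem:nocycle}. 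Hence $C=C'$.

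For case~(2), with $v_i,v_{m-j+1}\in g$ and $e_i,e_{m-j}\subseteq\Gamma$: by Definition~\ref{def:mergecurve}(2), $v_{i+1}$ is the first point, clockwise along $g$ from $v_i$, where some bisector $J(s_\beta,s_\gamma)$ crosses $g\cap\overline{R(\gamma)}$; symmetrically for $v_{m-j}$ relative to $v_{m-j+1}$. The hypothesis $e_{m-j}\subseteq\Gamma$ certifies that between $v_i$ and $v_{m-j+1}$ along $g$ no such intersection occurs on the $y$-side, and $e_i\subseteq\Gamma$ gives the mirror statement on the $x$-side; assuming $v_{i+1}\neq v_{m-j+1}$ would mean the scans from both endpoints pass each other along $g$ without encountering any qualifying intersection, so the extended $J_x^{i+1}$ and $J_y^{j+1}$ could be used (exactly as in Lemma~\ref{lem:gammabad}) to close a forbidden $s_\beta$-inverse cycle with $J(s,s_\beta)$. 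Thus $v_{i+1}=v_{m-j+1}$ and $v_{m-j}=v_i$.

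The main obstacle is the cycle argument in case~(1): one has to be careful about the possibility that $C$ or $C'$ meets arc $\alpha\in\pp$ (the extended setting of Lemma~\ref{cor:propertyR}) and about the relative ordering of $v_i,v_{i+1},v_{m-j},v_{m-j+1}$ along $\partial R(\alpha)$. These sub-cases are resolved by invoking Lemmas~\ref{lem:propertyR}, \ref{cor:propertyR}, \ref{lem:propertyRs}, and transitivity (Lemma~\ref{lem:transitivity}), exactly as in the analogous arguments already used in Lemmas~\ref{lem:vertexonarc} and \ref{lem:gammabad}, to localize the contradiction to a single $s_\beta$-inverse cycle ruled out by Lemma~\ref{lem:nocycle}.
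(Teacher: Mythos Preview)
Your proposal takes a genuinely different route from the paper, but the central step in case~(1) has a gap. You assume $C\neq C'$ and close a curve using $J_x^i$, the component $C$, a portion of $\partial R(\alpha)$, the component $C'$, $J_y^j$, and $\beta$, then claim this yields an $s_\beta$-inverse cycle contradicting Lemma~\ref{lem:nocycle}. But the portion of $\partial R(\alpha)$ you include consists of $s_\alpha$-related edges $J(s_\alpha,\cdot)$, not $s_\beta$-related ones, so the closed curve is \emph{not} a cycle in $\jj{s_\beta}{\cdot}$ and Lemma~\ref{lem:nocycle} does not apply. The analogy to Lemmas~\ref{lem:vertexonarc} and~\ref{lem:gammabad} breaks down: there the cycle was closed using $J(s,s_\beta)$, which \emph{is} $s_\beta$-related; here you close through $\partial R(\alpha)$, which is not. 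Your appeal to Lemmas~\ref{lem:propertyR}, \ref{cor:propertyR}, \ref{lem:propertyRs}, and transitivity at the end is hand-waving and does not repair this.

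The paper's proof avoids a global cycle argument entirely. It partitions $\partial R(\alpha)$ into four parts and shows that $e_i$ cannot land on any of them except by coinciding with $e_{m-j}$. The key step (and the idea you are missing) is that the part $\partial R_2$ of $\partial R(\alpha)$ between $v_i$ and $v_{m-j+1}$ lies entirely in $D(s_\beta,s_\alpha)$: for each edge $\rho\subseteq J(s_\alpha,s_\gamma)$ on $\partial R_2$, the neighboring region $R(\gamma)$ must have been traversed by $J_x^i$, $J_y^j$, or $\beta$ (they separate $\partial R_2$ from the rest of $\pp$ inside $D_\pp$), so Lemma~\ref{lem:propertyR}, \ref{cor:propertyR}, or \ref{lem:propertyRs} gives $\rho\subseteq D(s_\beta,s_\gamma)$, and transitivity then gives $\rho\subseteq D(s_\beta,s_\alpha)$. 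Hence $J(s_\beta,s_\alpha)$ cannot hit $\partial R_2$, and $e_i$ must run all the way to $v_{m-j+1}$. Case~(2) uses the identical observation about regions incident to $g$ between $v_i$ and $v_{m-j+1}$. Your proposal never articulates this ``every neighboring region has already been visited'' argument, which is what actually drives the proof.
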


	\begin{figure}
	\centering
	\includegraphics{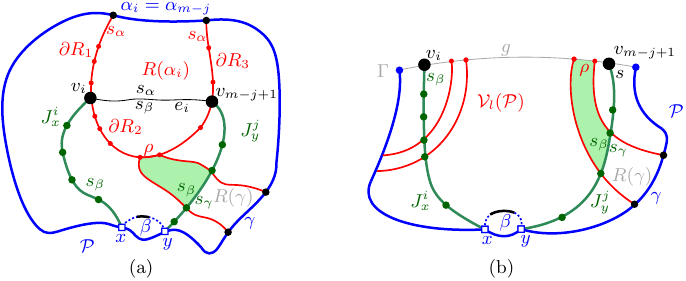}
	\caption{Illustrations for Lemma~\ref{lem:finish}. 
		(a) corresponds to condition (1)  and (b) to condition~(2). 
	Here, the label $R(\gamma)$ abbreviates $R(\gamma, \pp)$.}
	\label{fig:incidentsameregion}
\end{figure}

\begin{proof}	
	Let $\alpha = \alpha_i$. 
	Suppose (1) holds, then $e_i,e_{m-j}\subseteq J(s_\beta, s_{\alpha})$,
	see Fig.~\ref{fig:incidentsameregion}(a). 
	The boundary $\partial R(\alpha_i,\pp)$ is partitioned in four parts, using a
	counterclockwise traversal starting at $\alpha_{i}$:
	$\partial R_1$, from the endpoint of arc $\alpha_i$ to $v_i$;
	$\partial R_2$, from  $v_i$ to $v_{m-j+1}$;
	$\partial R_3$, from  $v_{m-j+1}$ to the next endpoint of
	$\alpha_i$;	and arc $\alpha_i$.
	We show that $e_i$ and $e_{m-j}$ cannot hit any of these
	parts; thus, $e_i=e_{m-j}$.
	\begin{enumerate}
          \itemsep=0pt\topsep=0pt
		\item 
		Edge $e_i$ cannot hit $\partial R_1$ and edge $e_{m-j}$ cannot
		hit $\partial R_3$ by the cut property, Lemma~\ref{lem:propertyR}.
		\item
		We prove that edge $e_i$ cannot hit
		$\partial R_2$. Analogously for edge $e_{m-j}$. 
		Let $\rho$ be any edge on $\partial R_2$. 
		(If $v_i \in \rho$ or $v_{m-j+1} \in \rho$, assume that $\rho$
		is truncated with endpoint $v_i$ or $v_{m-j+1}$ respectively).
		\begin{enumerate}
                  \itemsep=0pt
			\item	Suppose that $\rho$ is a bisector  edge, 
			$\rho \subseteq J(s_\alpha,s_\gamma)$, see Fig.~\ref{fig:incidentsameregion}(a).
			Then at least one of $J_y^j$, $J_x^i$, or $\beta$ must pass through
			$R(\gamma,\pp)$.  
			Suppose that $J_y^j$ does, as shown in
			Fig.~\ref{fig:incidentsameregion}(a).
			Then by the cut property (Lemma~\ref{lem:propertyR})
                      $\rho
			\subseteq D(s_\beta,s_\gamma)$.
			By transitivity (Lemma~\ref{lem:transitivity}) it also holds 
			that $\rho \subseteq D(s_\beta,s_\alpha)$.
			Thus,  $e_i$ cannot hit $\rho$.
			Symmetrically for $J_x^i$.
			If only $\beta$ passes through $R(\gamma,\pp)$, then we can use
			Lemma~\ref{lem:propertyRs} to derive that $\rho
			\subseteq D(s_\beta,s_\gamma)$; the rest follows. 
			
			\item 	Suppose that $\rho \subseteq \Gamma$.
			Then either $\rho$ itself is part of an edge of $J_y^j$ or of  $J_x^i$, or
			$\beta$ passes through $R(\alpha,\pp)$ and $\rho$  is at opposite side
			of it than $\alpha$.
			In the former case, 	$\rho \subseteq D(s_\beta,s_\alpha)$ by
			the definition of a $\Gamma$-edge in the merge curve. 
			In the latter case, the same is derived by
			Lemma~\ref{lem:regionclosertos}
			and transitivity
			(Lemma~\ref{lem:transitivity}).
			Thus,  $e_i$ cannot hit $\rho$.
		\end{enumerate}

		\item	
		Edge  $e_i$ (resp. $e_{m-j}$) cannot hit
		$\partial R_3$ because if it did,
		$e_i$ and $e_{m-j}$ would not appear sequentially on 
		$R(\alpha_i,\pp)$
		contradicting Lemma~\ref{lem:propertyR}.
		
		\item 
		It remains to show that  $e_i$ and  $e_{m-j}$ cannot both hit $\alpha_i$.
		But this is already shown in  Lemma~\ref{lem:vertexonarc}.
	\end{enumerate}
	
	Now suppose (2) holds, 
	see Fig.~\ref{fig:incidentsameregion}(b). 
	Let $R(\gamma,\pp)$ be a region in $\vld(\pp)$ incident to the
        $\Gamma$-arc $g$  
and let $\rho= R(\gamma,\pp)\cap g$ be the $\Gamma$-arc bounding $R(\gamma,\pp)$, 
which lies between $v_i$ and $v_{m-j+1}$.
	At least one of $J_y^j$ or $J_x^i$ or $\beta$ must pass 
	through $R(\gamma,\pp)$.
	By the exact same arguments as before, 
	$\rho \subseteq D(s_\beta,s_\gamma)$.
	We infer that there is no bisector  
	$J(s_\beta,s_\gamma)$ in $R(\gamma,\pp)$, for any region
	$R(\gamma,\pp)$ incident to $g$ between $v_i$ and $v_{m-j+1}$.
	Thus, $e_{i+1}=e_{m-j+1}\subseteq g$.
	
	Thus, in both (1) and (2) $v_{i+1}=v_{m-j+1}$, $v_{m-j}=v_i$, and $m=i+j$.
	$J(\beta)$ is the concatenation of $J_x^i$ and $J_y^j$ with  $e_{i+1}=e_{m-j+1}$.
 \end{proof}

\begin{lemma}\label{lem:distinctregions}
	Suppose vertex $v_{i+1}$ is valid and $e_{i+1}\subseteq
	J(s_\beta,s_{a_{i+1}})$. Then $R(\alpha_{i+1})$ has not been visited
	by $J_x^i$ nor $J_y^j$, i.e., $\alpha_{i+1}\neq \alpha_\ell$ for $\ell\leq i$ and for 
	$ m-j<\ell$.
\end{lemma}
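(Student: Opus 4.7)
The plan is to argue by contradiction, in the style of Lemmas~\ref{lem:vertexonarc} and~\ref{lem:gammabad}. Suppose $\alpha_{i+1}=\alpha_\ell$ for some $\ell\leq i$; the case $m-j<\ell<m$ will be handled symmetrically by swapping the roles of $J_x^{i+1}$ and $J_y^{j+1}$. Then $e_\ell$ and $e_{i+1}$ are two distinct connected components of $J(s_\beta,s_{\alpha_\ell})\cap R(\alpha_\ell)$; they are distinct because, by the induction hypothesis, the ordinary edges of $J_x^{i+1}$ between them lie in regions $R(\alpha_{\ell+1}),\ldots,R(\alpha_i)$, which are all different from $R(\alpha_\ell)$, so the intermediate sub-path of $J_x^{i+1}$ must have left $R(\alpha_\ell)$ and returned.

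By Lemma~\ref{lem:propertyR}, $e_\ell$ and $e_{i+1}$ must appear sequentially along $\partial R(\alpha_\ell)$, and the portion $\sigma$ of $\partial R(\alpha_\ell)$ joining the outer endpoints of $e_\ell$ and $e_{i+1}$ on the side opposite to arc $\alpha_\ell$ satisfies $\sigma\subseteq D(s_\beta,s_{\alpha_\ell})$. The sub-path $P'$ of $J_x^{i+1}$ from $v_{\ell+1}$ to $v_{i+1}$ stays outside $R(\alpha_\ell)$, so concatenating $P'$ with $\sigma$ produces a closed Jordan curve $C$ contained in $\overline{D_\pp}$.

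The key step is to turn $C$ into a forbidden cycle. By the $s_\beta$-monotonicity of $P'$, the $s_\beta$-dominance side of every ordinary edge of $P'$ faces consistently toward arc~$\beta$, which lies outside $C$; combined with $\sigma\subseteq D(s_\beta,s_{\alpha_\ell})$, this forces the label $s_\beta$ to lie on the outside of $C$ all along. Following the pattern of Lemmas~\ref{lem:vertexonarc} and~\ref{lem:gammabad}, augmenting $C$ with a suitable portion of $J(s,s_\beta)$ (which must cross $C$ since the endpoints of $\beta$ lie outside $C$ while $v_{\ell+1},v_{i+1}\in J(s_\beta,s_{\alpha_\ell})$ are on $\partial R(\alpha_\ell)\subseteq D(s,s_{\alpha_\ell})$) produces an $s_\beta$-inverse cycle, contradicting Lemma~\ref{lem:nocycle}. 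Equivalently, one verifies that the region enclosed by $C$ contains an $s_{\alpha_\ell}$-cycle in $\overline{D_\pp}$, contradicting Lemma~\ref{lem:nocycleindomain}.

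The hard part is the careful bookkeeping: handling the subcases where $P'$ or $\sigma$ contains $\Gamma$-arcs (so $C$ is not purely bisector-based and the inverse-labelling argument must be threaded through $\Gamma$), and the boundary situations $\ell=1$ or $i+1=m-1$, where one of $e_\ell$ or $e_{i+1}$ is incident to an endpoint of $\beta$; in those cases Lemma~\ref{cor:propertyR} must replace Lemma~\ref{lem:propertyR}, and the cycle must be closed while respecting that the interior of arc $\beta$ is not crossed by~$J(\beta)$.
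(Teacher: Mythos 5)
Your approach is genuinely different from the paper's, and it contains a real gap at its crucial step. The paper argues directly: using Lemma~\ref{lem:propertyR} it shows $v_{i+1}$ cannot land on the cut-out portion $\partial R_{e_\ell}(\alpha_\ell)$ of any previously visited region, nor on $\partial R_i^1$; by Lemma~\ref{lem:finish} it cannot land on $\partial R_{m-j}^1$; and a planarity/reachability argument in the plane graph $\vld(\pp)$ then excludes the remaining portions $\partial R_\ell^1$. There is no cycle constructed and no appeal to Lemma~\ref{lem:nocycle} or Lemma~\ref{lem:nocycleindomain} in this proof.

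Your proposal instead builds a closed curve $C=P'\cup\sigma$ and wants to turn it into a forbidden cycle, but this is exactly where it does not go through. The curve $C$ is a heterogeneous mix: the ordinary edges of $P'$ lie on $s_\beta$-related bisectors $J(s_\beta,\cdot)$, whereas $\sigma\subset\partial R(\alpha_\ell)$ lies on $s_{\alpha_\ell}$-related bisectors $J(s_{\alpha_\ell},\cdot)$. The cycle lemmas used in this paper (Lemma~\ref{lem:nocycle} for $p$-inverse cycles, Lemma~\ref{lem:nocycleindomain} for $p$-cycles) apply only to cycles in the arrangement of $p$-related bisectors for a \emph{single} site $p$, where the label $p$ has a well-defined side along every edge. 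Saying that ``$\sigma\subseteq D(s_\beta,s_{\alpha_\ell})$ forces the label $s_\beta$ to lie on the outside of $C$'' conflates membership in a dominance region with a labeling: the edges of $\sigma$ do not carry an $s_\beta$-label at all, so $C$ is not an $s_\beta$-related cycle and cannot be an $s_\beta$-inverse cycle. The suggested remedy --- ``augmenting $C$ with a suitable portion of $J(s,s_\beta)$'' --- is not carried out, and it is not clear how it would yield a cycle consisting exclusively of $s_\beta$-related bisector arcs (which, moreover, must contain no $\Gamma$-arcs to invoke Lemma~\ref{lem:nocycle}, whereas both $P'$ and $\sigma$ may contain $\Gamma$-arcs). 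The alternative claim that the region enclosed by $C$ ``contains an $s_{\alpha_\ell}$-cycle'' is likewise asserted without construction. What you call ``careful bookkeeping'' in the last paragraph is in fact the missing core of the argument; the paper sidesteps all of it by never forming a cycle and instead reasoning about where $v_{i+1}$ can lie on the plane graph.
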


	\begin{figure}
	\centering
	\includegraphics{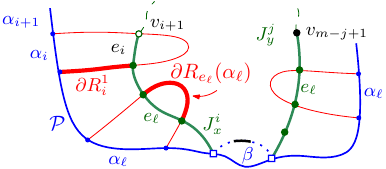}
	\caption{Illustration for Lemma~\ref{lem:distinctregions}.
	}
	\label{fig:distinct_regions}
\end{figure}

\begin{proof}
	Let $e_k, k \leq i$, be a bisector edge of $J_x^i$.  
	Denote by $\partial R_k^1$ the portion of $\partial R(\alpha_k)$ from
	$\alpha_k$ to $v_k$ in a counterclockwise traversal, 
	see the bold red part $\partial R_i^1$ in Fig.~\ref{fig:distinct_regions}.
	Analogously for a bisector edge $e_{m-j}$ of $J_y^j$, where $\partial R_{m-j}^1$ is
	defined in a clockwise traversal of $\partial R(\alpha_{m-j})$.
	Recall that $\partial R_{e_k}(\alpha_k)$, 
	denotes the portion of $\partial R(\alpha_k)$ 
	\emph{cut out} by edge $e_k$, at
	opposite side from $\alpha_k$. 
	
	The cut property of Lemma~\ref{lem:propertyR} implies that   
	$v_{i+1}$ cannot be on $\partial R_{e_\ell}(\alpha_\ell)$ 
	for any $\ell$, $\ell< i$ and  $ m-j<\ell$ and that 
	$v_{i+1}$ cannot be on $\partial R_{i}^1$.
	This implies that $v_{i+1}$ cannot be on $\partial R_{\ell}^1$ for any
	$\ell< i$, because we have a plane graph in $D_\pp$  and by its layout 
	$\partial R_\ell^1$ is not reachable from $e_{i}$ without first hitting
	$\partial R_{e_\ell}(\alpha_\ell)$ 
	or $\partial  R_i^1$. 
	See Fig.~\ref{fig:distinct_regions}. 
	Thus, $v_{i+1}$ can not be on $\partial R(\alpha_\ell)$, $\ell <i$.
	By Lemma~\ref{lem:finish} $v_{i+1}$ cannot be on $\partial
	R_{m-j}^1$. 
	This implies, again by the layout, 
	that $v_{i+1}$ cannot be on $\partial R_{\ell}^1$ for all $\ell >m-j$.
	Thus, $v_{i+1}$ can not be on $\partial R(\alpha_\ell)$, for any 
	$\ell >m-j$.
	This implies that $\alpha_{i+1}\neq \alpha_\ell$, for any $\ell$,
	$\ell \leq i$ or $\ell >m-j$.
 \end{proof}

By Lemma~\ref{lem:distinctregions},
$J_x^{i+1}$ and $J_y^{j+1}$ always enter a new region of
$\vld(\pp)$ that has not been visited in any previous step.  Hence, conditions (1) or (2)
of Lemma~\ref{lem:finish} must be fulfilled at some point of the
induction, completing the proof of Theorem~\ref{thm:mergecorrect}.

Completing the bi-directional induction establishes also the
remaining properties for $J(\beta)$.
First, $J(\beta)$ can never enter the same region twice
(by Lemma~\ref{lem:distinctregions}),
except the region of $\alpha_1$, if $\alpha_1=\alpha_m$.
This is Observation~\ref{obs:insert-beta}(\ref{item:splitregion}),
where arc $\beta$ 
splits a single arc $\alpha\in \pp$.
In this case $J(\beta)$  enters  $R(\alpha,\pp)$ exactly twice and
both $e_1,e_{m-1} \subseteq J(s_\alpha,s_\beta)$.
This is because  $J(\beta)$ must intersect $\partial
R(\alpha,\pp)$, i.e., 
$J(\beta) \not\subseteq R(\alpha,\pp)$, as otherwise
$J(\beta)=
J(s_\alpha,s_\beta)$
(see Fig.~\ref{fig:connected})
contradicting the labeling of the cut property
in Lemma~\ref{lem:propertyR}.

Completing the induction for Theorem~\ref{thm:mergecorrect}
establishes also that $J(\beta)$ is unique and that the conditions of  
Lemmas~\ref{lem:vertexonarc} and~\ref{lem:gammabad} can never be
met. Thus, no vertex of $J(\beta)$, except its endpoints, 
can be on a boundary arc of $\pp$.

\section{$\vld(\pp)$ is unique} \label{sec:unique}

In this section we prove Theorem~\ref{thm:vldunique} and establish
that for a boundary curve $\pp$ on
$\arcs'\subseteq \arcs$,  
the Voronoi-like diagram $\vld(\pp)$ is unique.

We first show an  essential property of Voronoi-like
regions, which  completes the \emph{cut property} of
Lemma~\ref{lem:propertyR}.

	\begin{figure}
	\begin{minipage}{0.48\textwidth}
		\centering
		\includegraphics{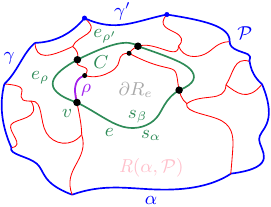}
		\caption{A component $e$ of $J(s_\alpha,\cdot)$ in $R(\alpha,\pp)$ as in Lemma~\ref{lem:missingarc}.}
		\label{fig:cycle}
	\end{minipage}
	\hfill
	\begin{minipage}{0.47\textwidth}
		\centering
		\includegraphics{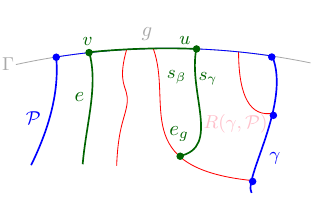}
		\caption{A component $e$ of $J(s_\alpha,\cdot)$ in $R(\alpha,\pp)$ 
			with its endpoint $v$ on a $\Gamma$-arc $g$ 
			as in Lemma~\ref{lem:missingarc}. 
		}
		\label{fig:noentry}
	\end{minipage}
\end{figure}

\begin{lemma}\label{lem:missingarc}
  Let $\beta^*\in \arcs'$ and $\beta^*\subseteq J(s,s_\beta)$.
	Suppose that a component $e$ of
	$J(s_\alpha,s_\beta)$ intersects  $R(\alpha,\pp)$ in $\vld(\pp)$. 
	Then  $J(s, s_\beta)$ must also intersect the domain $D_\pp$. Further,
	there exists a component $\beta$ of $J(s,s_\beta )\cap D_\pp$
	such that the merge curve $J(\beta)$ in $\vld(\pp)$ contains
        $e$ (i.e., $e\subset \partial R(\beta,\pp\oplus\beta)$).
\end{lemma}

\noindent
We say that  arc $\beta$ is \emph{missing from $\pp$}. 

\begin{figure}
	\centering
	\includegraphics{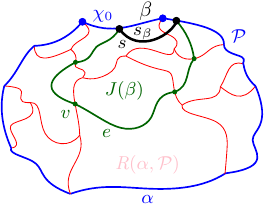}
	\caption{Arc $\beta \subseteq J(s,s_\beta)$ in $D_\pp$;
		The merge curve $J(\beta)$  contains~$e$.}
	\label{fig:missingcurve}
\end{figure}

\begin{proof}
        Suppose there is a non-empty component~$e$ of $J(s_\alpha,s_\beta)
	\cap R(\alpha,\pp)$, however,  $J(s,s_\beta)\cap D_\pp=\emptyset$,
        thus, $D_\pp  \subseteq D(s,s_\beta)$.
        By the transitivity of dominance regions
                        (Lemma~\ref{lem:transitivity}),  it follows
                        that 	for any arc $\chi\in \pp$,
                        $\chi\subseteq D(s_\chi, s_\beta)$.
                   	Let $R_e(\alpha)$ denote the portion of $R(\alpha,\pp)$ cut out by $e$
	(at opposite side from $\alpha$) as defined in
        Lemma~\ref{lem:propertyR}; then 
	 $\partial R_e(\alpha)\subseteq D(s_\beta,s_\alpha)$.

	Consider an endpoint $v$ of $e$. There are two cases: 
	\begin{enumerate}
          \itemsep=0pt\topsep=0pt
		\item 
                If $v$ is on an edge $\rho$ 
                  incident to regions $R(\alpha,\pp)$ and $R(\gamma,\pp)$, then 
		$J(s_\beta,s_{\gamma})$ intersects $R(\gamma,\pp)$ by an edge 
		$e_{\rho}$, incident to $v$,  leaving $\rho$ and $\gamma$ at opposite sides, 
		because $\gamma \subseteq D(s_\gamma, s_\beta)$,
		see Fig.~\ref{fig:cycle}.
		\item 
		If $v$ is on a $\Gamma$-arc $g$, let $R(\gamma,\pp)$ be
		the  first region after $v$ (towards $D(s_\beta,s_\alpha)$)
		with $J(s_\beta,s_{\gamma})$ intersecting $g \cap
		\overline{R(\gamma,\pp)}$ at point $u$, see
		Fig.~\ref{fig:noentry}.  There
		exists such $R(\gamma,\pp)$ because 
		for all boundary arcs $\chi\in \pp$, $\chi\subseteq D(s_\chi,
		s_\beta)$, and this includes the boundary arc that is 
		incident to $g$. 
		Let $e_{g}$ be the component of $J(s_\beta,s_{\gamma})\cap
		R(\gamma,\pp)$ incident to $u$.
	\end{enumerate}
	Thus, given $e$ and $v$, we derive an edge $e'$, either
	$e'=e_\rho$ or $e'=e_{g}$,
	with the same properties as $e$, in a different region of $\vld(\pp)$.
	This process repeats 
	and there is no way to break it because for any arc $\chi\in \pp$,
	$\chi\subseteq D(s_\chi, s_\beta)$.
	Thus, we create  a closed curve on $\vld(\pp)$ consisting of
	consecutive pieces of $J(s_\beta,.)$, possibly interleaved with
	$\Gamma$-arcs, which has the label $s_\beta$ in its interior.
	No two edges of this curve can intersect because otherwise the
	bisector corresponding to such intersecting edges would not be a
	Jordan curve.
        Furthermore,  no
	vertex of this curve can repeat under our general position assumption as no three $s_\beta$-related
        bisectors can intersect at the same point. 
	Thus, the closed curve must be an $s_\beta$-cycle $C$ that is
        contained in $D_\pp$, see Fig.~\ref{fig:cycle}, which
        contradicts Lemma~\ref{lem:nocycleindomain}.
        Thus, our assumption that $J(s,s_\beta)\cap D_\pp=\emptyset$ 
        was false, and thus, $J(s,s_\beta)$ must intersect $\pp$.
        The above process must encounter such an intersection 
        as
        otherwise  the forbidden 
        $s_\beta$-cycle $C$ would exist.
        Let $J_e(\beta)$ denote the sequence of encountered edges $e_\rho$,
	starting with the initial edge $e$ and ending on the  first
        intersection of an arc $\chi_0$ in  $\pp$ with $J(s,s_\beta)$.
	Let  $\beta$ be the component of $J(s,s_\beta)\cap D_\pp$
        incident to  $\chi_0$, 	see Fig.~\ref{fig:missingcurve}.
        
        By its definition, the path $J_e(\beta)$ 
	fulfills the definition of the merge curve $J(\beta)$
	(Definition~\ref{def:mergecurve}).
	Since by Theorem~\ref{thm:mergecorrect} the merge curve
	$J(\beta)$ on $\vld(\pp)$ is unique,
	it follows that $J(\beta)$ 
	contains $J_e(\beta)$, and thus, it also contains edge~$e$.
 \end{proof}

We can now prove  Theorem~\ref{thm:vldunique} from
Section~\ref{sec:problem}.

\begin{apptheorem}{\ref{thm:vldunique}}
	Given a boundary curve $\pp$ for $\arcs'\subseteq\arcs$,
	$\vld(\pp)$ is unique. 
\end{apptheorem}
\begin{proof}
	Let  $\pp$  be a boundary curve for $\arcs'\subseteq\arcs$
	such that  $\pp$  admits a Voronoi-like diagram 
	$\vld(\pp)$.
	Suppose there exist two different Voronoi-like diagrams 
	of $\pp$,  $\vld^{(1)} \neq \vld^{(2)}$. 
	Then there must be an edge $e^{(1)}$ of  $\vld^{(1)}$  
	bounding regions $R^{(1)}(\alpha,\pp)$ and $ R^{(1)}(\beta,\pp)$ of
	$\vld^{(1)}$, where $\alpha,\beta\in \pp$, such that   
	$e^{(1)}$ intersects 
	region $R^{(2)}(\alpha,\pp)$ of $\vld^{(2)}$, since $\alpha$ is
	common to both $R^{(1)}(\alpha,\pp)$ and $R^{(2)}(\alpha,\pp)$. 
	
	Let edge $e \subseteq J(s_\beta, s_\alpha)$ be the component of 
	$R^{(2)}(\alpha,\pp) \cap J(s_\beta, s_\alpha)$ overlapping  with $e^{(1)}$,
	see  Fig.~\ref{fig:unique_vld}.
	From Lemma~\ref{lem:missingarc} it follows that there is a
	non-empty component $\beta_0$ of $J(s,s_\beta)\cap D_\pp$ such
	that $J(\beta_0)$ in $\vld^{(2)}$ contains edge $e$.
	Since $J(\beta_0)$ and $\partial R^{(1)}(\beta,\pp)$ have an overlapping portion 
	$\left(e \cap e^{(1)}\right)$ and they bound the regions of two different arcs 
	$\beta_0 \neq \beta$ of site $s_\beta$,
	they form an $s_{\beta}$-cycle $C$ as shown in  Fig.~\ref{fig:unique_vld}.
	But $C$ is contained in $D_\pp$, deriving a contradiction 
	to Lemma~\ref{lem:nocycleindomain}.  
 \end{proof}

\begin{figure}
	\centering
	\includegraphics{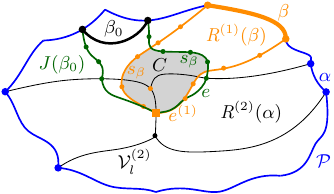}
	\caption{Illustrations for the proof of Theorem~\ref{thm:vldunique}.}
	\label{fig:unique_vld}
\end{figure}

\section{A randomized incremental algorithm}\label{sec:algorithm}

Consider a random permutation $o =(\alpha_1, \dots, \alpha_h )$ of the
set of core arcs  $\arcs$, where $|\arcs|=h$.
For $1 \leq i \leq h$, define set $\arcs_i = \{ \alpha_1, \dots, \alpha_i \} \subseteq \arcs$ 
to be the subset of the first $i$ arcs in~$o$, and permutation  $o_i=(\alpha_1,
\dots, \alpha_i )$.
Let  $\pp_i$ denote the boundary curve 
derived by the arc insertion operation $\oplus$ by considering arcs in  the order  $o_i$.
Let   $D_i$ denote the corresponding domain enclosed by $\pp_i$.

Our randomized algorithm 
is inspired by the randomized, two-phase, approach of Chew~\cite{Chew90} for the
Voronoi diagram  of
points in convex position;
however, the sites are core arcs in $\arcs$,
forming boundary curves, and the algorithm 
constructs Voronoi-like diagrams 
within a series of shrinking 
domains $D_{i}\supseteq D_{i+1}$.
The domain  $D_{1}$ is $D(s,s_{\alpha_1})\cap D_\Gamma$;
and $D_{h}$ coincides with the Voronoi region $\VR(s,S)\cap D_\Gamma$.
The boundary curves are obtained by the insertion operation $\oplus$,
one at each step, starting
with $\pp_1=J(s,s_{\alpha_1})\cap D_\Gamma$, and ending with $\pp_h=
\partial \VR(s,S) \cap D_\Gamma$. 
The algorithm works in two phases.  

In phase~1, the arcs in $\arcs$ get deleted
one by one, in  reverse order of $o$, while recording the neighbors  of
each arc at the time of its deletion.
Let  $\pp_1 = J(s,s_{\alpha_1})\cap D_\Gamma$,
$R(\alpha_1, \pp_1) = D(s,s_{\alpha_1})\cap D_\Gamma$, and $\vld(\pp_1)= \emptyset$.

In phase~2, we start with  $\vld(\pp_1)$ 
and incrementally compute $\vld(\pp_{i})$, $i =2, \dots, h$, by 
inserting arc $\alpha_{i}$, 
where  $\pp_{i} = \pp_{i-1} \oplus \alpha_{i}$,
and $\vld(\pp_{i})=\vld(\pp_{i-1})\oplus\alpha_{i}$.
When inserting an arc $\alpha_i$, we use the information of its recorded
neighbors from phase~1
to determine its insertion point. 
At the end we obtain $\vld(\pp_h)$, where $\pp_h$ is
a boundary curve of $\arcs$.
The set $\arcs$ has one unique boundary curve that coincides with its  $s$-envelope. Thus, $\pp_h$ can
contain no auxiliary arcs and $\pp_h=\env(\arcs)=\partial\VR(s,S) \cap D_\Gamma$.

We have already established  that the Voronoi-like diagram of an
$s$-envelope $\E$ is the real Voronoi diagram $\V(\E)$ (Corollary~\ref{lem:EVLD}).
We have also established the  correctness of the insertion operation
$\oplus$. 
Thus, the algorithm correctly computes  $\vld(\pp_h)$, where
$\vld(\pp_h)=\V(\arcs)=\V (S\setminus \{s\})\cap \VR(s,S)\cap
D_\Gamma$.

Next we analyze the time complexity of this algorithm and prove that
the time complexity of step-$i$ is expected $O(1)$; thus, the
overall time complexity is expected $O(h)$.

\begin{lemma}\label{lem:two-new-primes}
  $\pp_i$ contains  at most $i-1$ auxiliary arcs; thus,
  $|\vld(\pp_i)|=O(i)$.
\end{lemma}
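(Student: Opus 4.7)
The plan is to prove this by induction on $i$, with the key observation that every insertion operation increases $|\pp|$ by at most two.

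For the base case $i=1$, I would note that $\pp_1 = \partial(D(s, s_{\alpha_1}) \cap D_\Gamma)$ consists of exactly one original bisector arc (the portion of $J(s, s_{\alpha_1})$ inside $D_\Gamma$) and exactly one $\Gamma$-arc, since $J(s, s_{\alpha_1})$ crosses $\Gamma$ at exactly two transverse points. Thus $|\pp_1| = 2 = 2 \cdot 1$, and $\vld(\pp_1) = \emptyset$.

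For the inductive step, assume $|\pp_i| \leq 2i$ and consider the insertion $\pp_{i+1} = \pp_i \oplus \alpha_{i+1}$. The insertion removes exactly the portion of $\pp_i$ lying between the endpoints $x, y$ of $\beta$ (on the $D(s_\beta, s)$ side) and replaces it with $\beta$. I would analyze the net change $|\pp_{i+1}|-|\pp_i|$ via the cases of Observation~\ref{obs:insert-beta}. In cases (a), (b), and (e), the endpoints $x$ and $y$ lie on two \emph{distinct} arcs of $\pp_i$; those two arcs are merely truncated (their count is preserved), any arcs strictly between them are wholly deleted, and $\beta$ contributes a single new arc, yielding net change $\leq +1$. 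In case (f) nothing changes. The only way to attain a $+2$ increase is in cases (c) and (d), where both endpoints of $\beta$ fall in the interior of the \emph{same} arc $\alpha$ (a boundary arc in case~(c), a $\Gamma$-arc in case~(d)): removing the middle segment of $\alpha$ splits it into two surviving pieces, and inserting $\beta$ between them gives net change exactly $+2$. Hence $|\pp_{i+1}| \leq |\pp_i| + 2 \leq 2(i+1)$, completing the induction.

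For the complexity bound on $\vld(\pp_i)$, I would invoke the remark immediately following Theorem~\ref{thm:vldunique}: $\vld(\pp_i)$ is a planar graph with exactly one face per boundary arc of $\pp_i$ and with internal vertices of degree three (leaves have degree one), so Euler's formula gives complexity $O(|\pp_i|) = O(i)$.

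The main obstacle is the careful book-keeping in the case analysis: I need to verify that cases (c) and (d) are really the only scenarios producing a $+2$ change, and that mixed configurations — for instance, one endpoint of $\beta$ coinciding with a pre-existing vertex of $\pp_i$ while the other lies in the interior of an arc, or $\beta$ simultaneously deleting some auxiliary arcs while straddling a $\Gamma$-arc endpoint — never exceed the $+2$ bound. A clean way to handle this uniformly is to observe that each endpoint of $\beta$ can contribute at most $+1$ to the arc count (by splitting one arc) and $\beta$ itself contributes $+1$, offset by $-1$ for every wholly deleted arc, so the maximum possible net change is $+2$ and is attained precisely when the two split events occur on the same arc (so no full deletions are possible in between).
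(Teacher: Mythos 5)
Your proposal is correct and follows essentially the same approach as the paper's proof: base case $|\pp_1|=2$, then bound the increase per insertion by $+2$ (one for $\beta$ itself, one for a possible split), giving $|\pp_i|\le 2i$. The paper states this in two sentences without the case-by-case verification you spell out, but the underlying argument is identical.
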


\begin{proof}
  By definition, $|\pp_1| =1$.
	At each step of phase~2, exactly one original arc is inserted, and
        at most 
	one additional auxiliary arc is created by a split in
        case~(\ref{item:splitregion}) of
        Observation~\ref{obs:insert-beta}, except from $i=1$ and $i=h$. Thus, the 
        total number of auxiliary arcs is  at most $i-1$ and the
        number of original arcs is at most $i$.
        Since an original arc may be merged with its neighbor in case~(\ref{trivial}) of
        Observation~\ref{obs:insert-beta}, the number of original arcs
        in $\pp_i$ may indeed be less than $i$.
	Since the complexity of $\vld(\pp_i)$ is $O(|\pp_i|)$, the
        claim follows.
 \end{proof}

\subsection{Time analysis of the  randomized incremental algorithm}
\label{sec:time-analysis}

\begin{figure}
	\centering
	\includegraphics{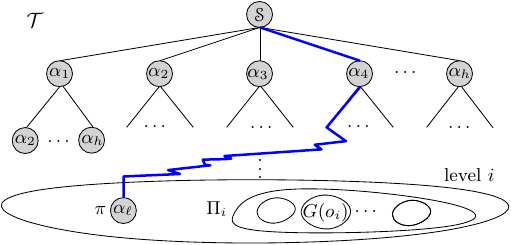}
	\caption{There are 
		$h!/(h-i)!$ nodes at level-$i$ of the decision tree~$\T$,
		each corresponding to a unique permutation of $i$ core
		arcs. Level $i$ 
is partitioned into groups of size $i$.}
	\label{fig:decision-tree}
\end{figure}

Consider the \emph{decision tree} $\T$ of all possible random choices
that can be made by our incremental algorithm
on the input set of core arcs   $\arcs$, $h=|\arcs|$,  see Fig.~\ref{fig:decision-tree}.
$\T$ has  $h!$ leaves each corresponding to one permutation of
the arcs in $\arcs$.
At level-$i$, there are $h!/(h-i)!$ nodes, and each node corresponds to a unique permutation of $i$ core
arcs.
A set of $i$ core arcs $\arcs_i$ is associated with $i!$ different nodes at
level-$i$, which are called the \emph{block of
  $\arcs_i$}. We have ${h \choose i}$ distinct such blocks at
level-$i$.
Although all nodes within one  block are associated with the same set of
core arcs, their corresponding boundary curves may vary considerably
depending on their permutation order. 
Because the boundary curves are order-dependent 
we cannot easily apply
\emph{backwards analysis} as in the original
randomized incremental construction of Chew~\cite{Chew90}.
Instead, we establish the expected complexity of step-$i$
by analyzing each block of  nodes at level-$i$
of~$\T$~\footnote{The analysis
  in our preliminary paper~\cite{JP18} follows the backwards analysis
  framework of Chew~\cite{Chew90},
  however,  the applicability is questionable because the boundary
  curves are order-dependent. We revisit and complete
  the analysis in this paper.}.

We use the following strategy.
We partition  each block at level-$i$ of  $\T$ into $(i-1)!$
disjoint  groups of $i$ nodes  each. For each group we show that
the step $i$ of the algorithm
requires total time $O(i)$, for all the $i$ permutations within the group.
Thus, on average, the algorithm spends $O(1)$ time on each node of $\T$.
Since all permutations are equally likely, we obtain
the expected linear ($O(h)$) time complexity of our algorithm.

Let $o_i= (\alpha_1, \alpha_2,  \dots, \alpha_i)$ be an 
arbitrary permutation of $\arcs_i$. 
From $o_i$ we define a group $G = G(o_i)$  of $i$ permutations: 
for each $1 \leq j < i$, 
remove ${\alpha_j}$ from its position in $o_i$ and append it to the end of $o_i$.
\begin{align}
o_i &= (\alpha_1, \alpha_2,  \dots, \alpha_{j-1}, \boxed{\alpha_j}, {\alpha_{j+1}}\dots, \alpha_{i-1}, \alpha_i)\\
o_j &= (\alpha_1, \alpha_2,  \dots, \alpha_{j-1}, \qquad \; {\alpha_{j+1}}, \dots,\alpha_{i-1},\alpha_i, \boxed{{\alpha_j}}), \label{eq:grouping}
\end{align}

\begin{figure}
	\centering
	\begin{minipage}{0.45\textwidth}
	\includegraphics{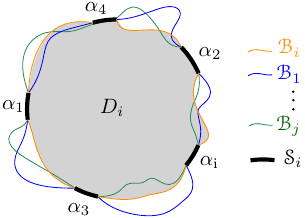}
\caption{Schematic differences between the boundary curves
	$\B_1,\dots,\B_i$.
	The domain $D_i$ is shown shaded.}
\label{fig:group}
	\end{minipage}
\hfill
	\begin{minipage}{0.45\textwidth}
		\centering
	\includegraphics{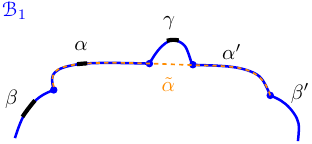}
	\caption{
	Illustration for Definition~\ref{def:source}, $o_1 = (\beta, \alpha,\gamma)$:
	The core arc $\alpha \in \arcs_i$ is the source of $\alpha' \in \In_1$.
	The expanded arc $\tilde \alpha \supseteq\alpha'$
	was created
	when inserting $\alpha$ during the  construction of $\B_1$. 
	%where $o_1 = (\beta, \alpha,\gamma)$ 
	%and $o_i = (\gamma, \beta, \alpha)$.  
	$\B_i$ for $o_i = (\gamma, \beta, \alpha)$ is shown in
        Fig.~\ref{fig:primes}(a).
        }
\label{fig:source}
\end{minipage}
\end{figure}

Let $\B_j$, $1\leq j\leq i$, denote the boundary curves derived by
arc insertion 
following the order $o_j$, see
Fig.~\ref{fig:group}.
$\B_i$ is the base boundary curve derived from  $o_i$, and  its domain is denoted
$D_i$.
In the following we establish the relation between these boundary
curves so that we can prove our objective regarding the time
complexity of the
$i$th step of the algorithm on all of them (Lemma~\ref{lem:stepi}). 
We first introduce some terminology.

%\evanthia{Has slightly modified}
\begin{definition}
\label{def:source}
  Let $\alpha'$ be an auxiliary arc in $\B_j$ and let $\alpha\in
  \arcs_i$ be a core arc of the same site.
  We say that $\alpha'$ \emph{is an auxiliary arc of} the core arc 
    $\alpha$ if there had been an  expanded arc  %is->was
$\tilde \alpha \supseteq \alpha\cup\alpha'$, 
%$\tilde \alpha \supseteq\alpha'$,
which was  created for the first time during the construction of $\B_j$
when inserting $\alpha$ (see
Fig.~\ref{fig:source}).
% when inserting $\alpha$ (see Fig.~\ref{fig:source}).
%such that $\tilde \alpha$ was created
%by inserting $\alpha$ during the  construction of $\B_j$ (see Fig.~\ref{fig:source}).
The core arc $\alpha\in \arcs_i$ is called 
the \emph{source of $\alpha'$} and is denoted as $\so_j(\alpha')$.

If $\alpha'$ appears  counterclockwise (resp. clockwise) from its source
$\alpha$ along their common $s$-bisector then $\alpha'$ is called 
a \emph{ccw} (resp. \emph{cw}) auxiliary arc.
\end{definition}

\begin{figure}
	\center
	\includegraphics{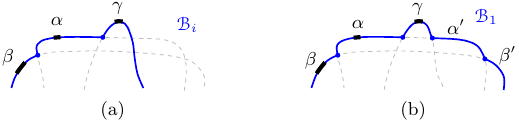}
	\caption{
		(a) Boundary curve $\B_i$, where 
		$o_i= (\gamma, \beta, \alpha)$. 
		(b) Boundary curve $\B_1$, where 
				$o_1= (\beta, \alpha, \gamma)$,
		containing	arcs $\alpha', \beta' \in \In_1$, 
		because ${\gamma}$ was inserted last.
	}
	\label{fig:primes}
\end{figure}

The boundary curves $\B_j$, $j<i$,  may get in and out of the domain $D_i$,
see Fig.~\ref{fig:group}.
To identify their differences from $\B_i$, let  $\In_j = \B_j \cap D_{i}$, and 
$\out_j = \B_j \setminus \overline{D_i}$,
denote the portion of $\B_j$  
inside, and outside of  $D_i$, respectively.
We partition the auxiliary arcs in $\In_j$ into $\In_j^+$  and
$\In_j^-$, where $\In_j^+$  (resp.  $\In_j^-$) includes  the  ccw  (resp. cw)
auxiliary arcs of $\In_j$, see Fig.~\ref{fig:primes}.
In the following we  only consider $\In_j^+$ as 
$\In_j^-$ is  symmetric.

\begin{observation}
\label{ob:out}
The boundary curve $\B_j$, $j\neq i$, contains no auxiliary arcs of the core arc
$\alpha_j$ (since
$\alpha_j$  appears last in  $o_j$), and 
%. Furthermore,
these are the only auxiliary arcs of $\B_i$ that are  \emph{missing} from
$\B_j$ (since the insertion order of all other core arcs is identical).
Thus, any auxiliary arc $\alpha'\in \out_j$ must lie %be entirely
\emph{below} an
auxiliary arc %$\alpha_j'$ of the core arc
of $\alpha_j$ in $\B_i$,
see Fig.~\ref{fig:primes-out}.
Further,
no region of an %auxiliary
arc in $\out_j$ can be adjacent to $R(\alpha_j,\B_j)$.
\end{observation}

\begin{figure}[b]
	\center
	\includegraphics{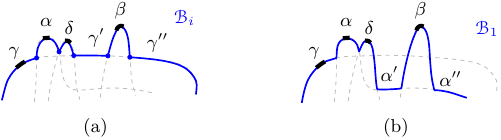}
	\caption{
		(a) Boundary curve  $\B_i$, where
				$o_i= (\gamma, \alpha, \beta, \delta)$. 
		(b) Boundary curve $\B_1$ containing 
		arcs $\alpha',\alpha''$ in $\out_1$, 
		where $o_1 = (\alpha, \beta, \delta, \gamma)$. 
	}
	\label{fig:primes-out}
\end{figure}

\begin{observation}\label{ob:+}
%  \evanthia{Converted  $\In_j^+$ to $\In_j$.}
  Let $\alpha' \in \In_j$ and let $\alpha_k = \so_j(\alpha')$.
  Then $k>j$, i.e., $\alpha_k$ follows $\alpha_j$ in $o_i$.
  Further, if  $\alpha' \in \In_j^+$ then $(\alpha_k,\alpha_j,\alpha')$ appear ccw in
  $\B_j$. % symmetrically for $\In_j^-$.
\deleted{
  Consider a ccw arc $\alpha' \in \In_j^+$ and let $\alpha_k$  be its source core arc 
	($\alpha_k = \so_j(\alpha')$). 
	Then $\alpha_k$ must follow $\alpha_j$ in $o_i$,
        i.e., $k>j$.
        Further, $(\alpha_k,\alpha_j,\alpha')$ must appear ccw in
        $\B_j$.
        Symmetrically for a cw arc in  $\In_j^-$.
        }
\end{observation}

\begin{observation}\label{obs:in}
  Fig.~\ref{fig:chainsmall} indicates the structure of $\In_j^+$.
  Let  $\alpha',\beta'\in \In_j^+$ such that  $\alpha_k =
  \so_j(\alpha')$, $\alpha_\ell = \so_j(\beta')$, and  $k<\ell$.
  %Suppose that core arcs  $\alpha_k$ and $\alpha_\ell$, $k<\ell$,   have
  %auxiliary arcs  $\alpha'$ and $\beta'$ respectively, where
  %$\alpha',\beta'\in \In_j^+$.
  Then  $j<k< \ell$ and $(\alpha_k, \alpha_\ell, \alpha_j,  \beta',
  \alpha')$ appear in ccw order 
  along $\B_j$, see Fig.~\ref{fig:chainsmall}.
 Further, all auxiliary arcs of $\alpha_\ell$ must appear before the
	auxiliary arcs of $\alpha_k$ as we move on  $\B_j$ counterclockwise from $\alpha_j$.
\end{observation}

\begin{figure}
	\centering
		\centering
		\includegraphics{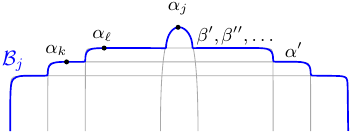} 
		\caption{	If $\alpha',\beta'  \in \In_j^+$, then
			$j<k< \ell$ and
			$(\alpha_k, \alpha_\ell, \alpha_j, \beta', \alpha')$ appear in ccw order on $\B_j$. }
		\label{fig:chainsmall}
\end{figure}

Since many auxiliary arcs of $\In_j^+$ can have the same source,
we define
\[\N_j=\{\so_j(\alpha') \in \arcs_i \;|\; \alpha' \in \In_j^+\}.\]

All arcs in $\N_j$ are of 
different sites.
Sets $\In^+_j$ and $\In^+_k $, $k \neq j$, may have many common arcs.
However, we have the following disjointness property.

\begin{lemma}\label{lem:Ndisjoint}
	$\N_j \cap \N_k = \emptyset$ for all $k \neq j$. 
	Thus, $\sum_{j =1}^i |N_j| = O(i)$.
\end{lemma}

\begin{proof}
	Suppose 
	$\alpha_\ell \in \N_j \cap \N_k$ and $j<k$,
	then $\alpha_\ell = \so_j(\alpha')$, where $\alpha' \in \In^+_j$ 
	and	
	$\alpha_\ell = \so_k(\alpha'')$, where $\alpha'' \in \In^+_k$. 
	(The arcs $\alpha'$ and $\alpha''$ may or may not
        overlap). %be equal.)
%        \evanthia{Corrected the proof derivation. It was wrong.}
        By Observation~\ref{ob:+},  $j<\ell$ (resp. $k<\ell$) and 
        $(\alpha_\ell,\alpha_j, \alpha')$ (resp.
        $(\alpha_\ell,\alpha_k, \alpha")$) must
        appear in  ccw  order on $\B_j$ (resp. $\B_k$).
        %Since we assume $j<k$, it follows that  $j<k<\ell$.
        
        Suppose first that 
        $(\alpha_\ell,\alpha_k,\alpha_j)$
        appear in  ccw  order on $\B_i$.
        Then, since  $k<\ell$, the arc 
	$\alpha_k$ is inserted before $\alpha_\ell$ in $\B_j$,  and thus, 
	$\alpha'$ cannot  exist in $\B_j$, see
        Fig.~\ref{fig:Ndisjoint}. 
        Suppose now  that 
        $(\alpha_\ell,\alpha_j,\alpha_k)$
        appear in  ccw  order on $\B_i$.
        Then, since 
         $j<\ell$, the arc 
	$\alpha_j$ is inserted before $\alpha_\ell$ in $\B_k$,  thus, 
	$\alpha''$ cannot exist on $\B_k$, see
        Fig.~\ref{fig:Ndisjoint2}.
        In either case we derive a contradiction.
 \end{proof}

\begin{figure}[b]
	\centering
	\begin{minipage}{0.45\textwidth}
		\centering
		\includegraphics{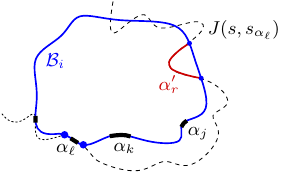}
		\caption{Illustration for Lemma~\ref{lem:Ndisjoint}.
			The case $(\alpha_\ell,\alpha_k,\alpha_j)$ appear ccw.}
		\label{fig:Ndisjoint}
	\end{minipage}
	\hfill
	\begin{minipage}{0.45\textwidth}
		\centering
		\includegraphics{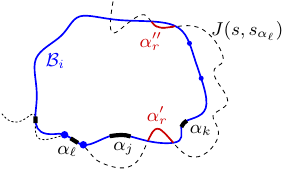}
		\caption{Illustration for Lemma~\ref{lem:Ndisjoint}.
			The case  $(\alpha_\ell,\alpha_j,\alpha_k)$ appear ccw.}
		\label{fig:Ndisjoint2}
	\end{minipage}
\end{figure}

We next establish that the parameters of the time complexity analysis
for step $i$, as given in Definition~\ref{def:parameters} and Lemma~\ref{lem:insertion-time},
sum up to $O(i)$ on all boundary
curves $\B_j, j\leq i$.

\begin{lemma}\label{lem:exp-const-primes}
 Considering all the boundary curves of group $G(o_i)$,
\[\sum_{j=1}^i 
(d_1(\alpha_j,\B_j) + 
d_2(\alpha_j,\B_j) +
\tilde d(\alpha_j,\B_j)) = O(i).\]

\end{lemma}

\begin{proof}
Let  $\aj$ and $\gj$ denote the original arcs preceding and
following $\alpha_j$ respectively in $\B_i$ (equiv. in $\B_j$).
Let 	
$d(\alpha_j,\B_k)$ 
denote
the auxiliary arcs on the boundary curve $\B_k$, $k=i,j$, from $\aj$
to $\gj$.

We first observe that $d(\alpha_j,\B_j)$ cannot contain any
portion of $\out_j$ because no auxiliary arc of $\alpha_j$ may appear in $\B_i$ from
$\aj$ to $\gj$, since $\alpha_j$ is the only core arc on  $\B_i$
between $\aj$ to $\gj$.
%there is no other core arc from $\aj$ to
%$\gj$, except $\alpha_j$, and a different core arc is necessary to produce an auxiliary arc).
Thus, we only need to consider the auxiliary arcs of
$\In_j$. 
Next, we observe that no two auxiliary arcs in
$d(\alpha_j,\B_j)$ can have the same source in $ N_j$ for
the same reason, i.e., 
there is no core arc from $\aj$ to $\gj$ except $\alpha_j$.
Thus, we can bound  
$d(\alpha_j,\B_j) \leq    d(\alpha_j,\B_i)  + |N_j|$.
Then, by  Lemma~\ref{lem:Ndisjoint},
$\sum_{j=1}^id(\alpha_j,\B_j) \leq |\B_i|  + O(i) = O(i)$.
Since $d_1(\alpha_j,\B_j) + d_2(\alpha_j,\B_j) \leq d(\alpha_j,\B_j)$,
it follows $\sum_{j=1}^i(d_1(\alpha_j,\B_j) + d_2(\alpha_j,\B_j)) = O(i)$.

If $\tilde d(\alpha_j,\B_j)>0$, we have case~(\ref{item:splitgap}) 
of   Observation~\ref{obs:insert-beta}.
In this case, the endpoints of $\alpha_j$ are incident  to $\Gamma$, 
both in $\B_j$ and $\B_i$.
Then, by Observations~\ref{ob:out} and \ref{obs:in}, both $\In_j = \emptyset$
and $\out_j = \emptyset$, implying that $\B_j=\B_i$; thus, $\tilde d(\alpha_j,\B_j)= \tilde
d(\alpha_j,\B_i)$.
Then,
$\sum_{j=1}^i| \tilde d({\alpha_j},\B_j)| \leq |\tilde\B_i| = O(i)$.
 \end{proof}

\begin{figure}
	\centering
	\includegraphics{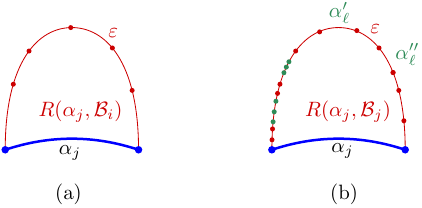}
	\caption{Illustration for Lemma~\ref{lem:adjacencies}.
	In between the two consecutively adjacent arcs
	$\alpha_\ell'$ and $\alpha_\ell''$ of  $\In_j$ of the same source 
	there must be an arc $\varepsilon \in \B_i$ that is 
	adjacent to $R(\alpha_j,\B_j)$.
	}
	\label{fig:comparison}
\end{figure}

\begin{lemma}\label{lem:adjacencies}
	$|R(\alpha_j,\B_j)| \leq 2 |R({\alpha_j},\B_i)| +  |N_j|$.
\end{lemma}

\begin{proof}
We compare $R(\alpha_j,\B_j)$ and $R({\alpha_j},\B_i)$ and bound 
differences in their adjacencies.
First, we observe that no arc in $\out_j$ 
can have a region adjacent 
to $R(\alpha_j,\B_j)$ (by Observation~\ref{ob:out}).
Next, we observe that any arcs common to both $\B_j$ and $\B_i$, whose
regions are
adjacent  to $R({\alpha_j},\B_i)$, they must also be adjacent to
$R({\alpha_j},\B_j)$.
In particular,  if  an arc $\varepsilon \in \B_j \cap \B_i$ has a
region $R(\varepsilon,\B_j)$ adjacent to
$R(\alpha_j,\B_j)$ then  $R(\varepsilon,\B_i)$
 must also be adjacent to $R({\alpha_j},\B_i)$.
This is clear, because otherwise, their common 
Voronoi edge $e$ in $\vld(\B_j)$ (or a portion of it) would be 
taken in $\vld(\B_i)$ by some arc in $\B_i$ that is \emph{missing} from $\B_j$,
by Lemma~\ref{lem:missingarc}). This must be an auxiliary arc
$\alpha_j'$ of  $\alpha_j$.
But if we insert  $\alpha_j'$ to  $\vld(B_j)$, 
the region $R(\alpha_j',B_j\oplus \alpha_j')$  will contain  a
portion of the  edge $e$, thus, it will be adjacent to 
$R(\alpha_j,B_j\oplus \alpha_j')$,
deriving a contradiction as arcs 
of the same site cannot be adjacent. 

Let $|R(\alpha_j,\B_j)|_x$ denote the number of additional adjacencies
that $R(\alpha_j,\B_j)$  may have over $R(\alpha_j,\B_i)$, i.e.,
$|R(\alpha_j,\B_j)| \leq |R(\alpha_j,\B_i)| + |R(\alpha_j,\B_j)|_x$.
We show that $|R(\alpha_j,\B_j)|_x\leq |R(\alpha_j,\B_i)| +|N_j|$.
Since auxiliary arcs of the same site can never have adjacent regions,
it follows that between
any two possible new adjacencies of $R(\alpha_j,\B_j)$ (counted in
$|R(\alpha_j,\B_j)|_x$) with auxiliary arcs of
$\In_j$ belonging to the same  source, there must be an 
adjacency with some arc not from $\In_j$, which 
by the first paragraph is  also contained in $\B_i$. 
Refer to Fig.~\ref{fig:comparison}(b), 
where in between the two consecutively adjacent arcs
$\alpha_\ell'$ and $\alpha_\ell''$ of  $\In_j$ 
the arc $\varepsilon \in \B_i$ is 
adjacent to $R(\alpha_j,\B_j)$.

Since by Observation~\ref{obs:in} auxiliary arcs of one source in
$N_j$ must appear in a certain order along $\B_j$ and they 
cannot alternate, the bound follows.
 \end{proof}

\begin{lemma}\label{lem:split}
  Consider case~(\ref{item:splitregion}) of
  Observation~\ref{obs:insert-beta} at the insertion time of $\alpha_j$ 
  in $\B_j$. Suppose that the insertion of 
  $\alpha_j$ splits an existing arc $\omega$ into two pieces
  $\omega_1$ and $\omega_2$.
 Then at least one of these two arcs 
  (say $\omega_1$) 
 must also exist in $\B_i$.
 	Further,  $|R({\omega_1},\B_j)| \leq 2 |R({\omega_1},\B_i)| +  |N_j|$.
\end{lemma}
\begin{proof}
        Suppose $\omega_1\alpha_j\omega_2$ appear in $\B_j$ in ccw
        order and 
	$\omega_2 \not\in \B_i$. Then $\omega_2 \in \In_j^+$, 
	see Fig.~\ref{fig:arcsplit}.
        Let $\alpha_\ell=\so_j(\omega_2)$, then  $\ell >j$ as
        $\omega_2 \in \In_j^+$.
        We claim that $\omega_1$ must
        belong in $\B_i$.

        Let  $\tilde\omega \supset \alpha_\ell$ denote the expanded arc
        created at the insertion time of $\alpha_\ell$ following the
        order  $o_j$.
        Clearly, $\tilde\omega \supset
        \omega$.
        Let  $\hat\omega\supset \alpha_\ell$ denote the expanded
        arc created  at the insertion time of $\alpha_\ell$,
        following $o_i$.
        Since $\ell >j$, it follows  that $\hat\omega$ can extend ccw at
        most until $\alpha_j$ and $\hat\omega\subset\tilde\omega$.
        Since  $\tilde \omega$ extends ccw past $\alpha_j$, it follows
        that no core arc $\alpha_\rho$, with $\rho<\ell$  can exist
        between $\alpha_l$ and $\alpha_j$. 
        %that no other core arc $\alpha_\rho$, with $\rho<\ell$ (except $\alpha_j$), can
        %split $\tilde\omega$. 
        Thus, $\hat\omega$ must extend ccw to $\alpha_j$ and 
        $\hat\omega \supset \omega_1$.
        In addition, no $\alpha_\rho$, with $\rho>\ell$,  can delete
        $\omega_1$ during its insertion, while following $o_i$, because the
        same would happen in $o_j$ and
        $\omega_1$ exists in $\B_j$.
        Thus, $\omega_1$ must exist in $\B_i$.

        We can now  bound 	$|R({\omega_1},\B_j)| \leq 2
        |R({\omega_1},\B_i)| +  |N_j|$ analogously to
        Lemma~\ref{lem:adjacencies}.
The only additional argument needed for the fact that 
no arc in $\out_j$ can have a region adjacent to $R(\omega_1,\B_j)$
is the observation that each arc in $\out_j$ 
lies below 
the $s_{\omega}$-bisector, 
because arc $\alpha_j$ splits arc $\omega$
(case~(\ref{item:splitregion}) of
Observation~\ref{obs:insert-beta}).
 \end{proof}
\begin{figure}
	\centering
	\includegraphics{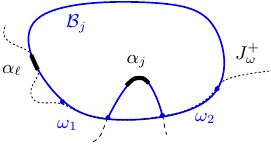}
	\caption{Illustration for the proof of Lemma~\ref{lem:split}. 
		If $\omega_2 \not\in \B_i$, then $\omega_1 \in \B_i$.}
	\label{fig:arcsplit}
\end{figure}

Let $T(i,o_j)$ denote the time that step-$i$ requires
following permutation $o_j$, i.e., the time required by the
last arc insertion of $o_j$.
\begin{lemma}\label{lem:stepi}
 The time for step-$i$  on the entire group
$G=G(o_i)$ is
 \[T(i,G)=  \sum_{o_j \in G} T(i,o_j) =O(i)\]
\end{lemma}

\begin{proof}
	Lemmas~\ref{lem:adjacencies} and \ref{lem:split} establish that
	$|R(\alpha_j,\B_j)| + |R(\omega_j,\B_j)|
	\leq 
	2(|R({\alpha_j},\B_i)| + |R(\omega_j,\B_i)| +  |N_j|)$, 
	where $\omega_j$ denotes one of the two arcs that is split and belongs to $\B_i$ 
	if case~(\ref{item:splitregion}) of Observation~\ref{obs:insert-beta} 
	is concerned.
Since $\omega_j$ is always an immediate neighbor of $\alpha_j$, 
we count it at most twice and thus, 
the total complexity  $\sum_{j=1}^i  |R(\omega_j,\B_i)|$ is $O(i)$.
	Together with Lemma~\ref{lem:Ndisjoint} this directly implies that 
	$\sum_{j=1}^i  |R(\alpha_j,\B_j)|  + r(\alpha_j,\B_j) = O(i)$.
	Lemma~\ref{lem:exp-const-primes} establishes that 
	$\sum_{j=1}^i 
	d_1(\alpha_j,\B_j) + 
	d_2(\alpha_j,\B_j) +
	\tilde d(\alpha_j,\B_j) = O(i)$.
        Then by Lemma~\ref{lem:insertion-time} the claim is derived.	
 \end{proof}

Before stating the final result, we show that the partitioning of each
block of $i!$ nodes (permutations)  at level-$i$ of $\T$
into $(i-1)!$ groups of $i$
permutations each,  is possible, if we follow the scheme we described 
in equation~\eqref{eq:grouping} for
$G(o_i)$.
Let  $\Pi_i$ denote such a block of all $i!$ permutations of the set
 $\arcs_i$.
%The references for the following lemma, and thus its proof, were provided by Stefan
%Felsner~\cite{stefan2019}.
The references and the proof of the following lemma were provided by Stefan
Felsner~\cite{stefan2019}.

\begin{lemma}\label{lem:grouping}
	The partitioning of $\Pi_i$ into groups by the scheme we
	defined 
	in equation~\eqref{eq:grouping} is possible, i.e.:
	For all $i \in \mathbb{N}$ and any block $\Pi_i$ of
        permutations on $\arcs_i$
	there exists a set 
	$F \subset \Pi_i$ of $(i-1)!$ permutations 
	such that 
	$\Pi_i = \dot\bigcup_{o\in F} G(o)$. 
\end{lemma}

\begin{proof}
	Following \cite{L92} denote by $\floor{\pi}$ the set of all permutations 
	that are obtained from a permutation $\pi$ by deleting one element.
	The following property is clearly an equivalent condition for a set $F$ 
	to satisfy $\Pi_i = \dot\bigcup_{o\in F} G(o)$.
	For each $\pi, \sigma \in F$ the sets  $\floor{\pi}$ and 
	$\floor{\sigma}$ are disjoint.  
	Levenshtein calls a family $F$ of $(i-1)!$ permutations with this 
	disjointness property a \emph{code capable of correcting single deletions} and proves 
	that these codes exist for all $i \in \mathbb{N}$ \cite[Theorem 3.1]{L92}.
 \end{proof}

All permutations at level-$i$ of the decision tree are
equally likely.
By Lemma~\ref{lem:grouping}, it is possible to partition them
into groups of $i$ nodes each, which  satisfy our scheme of
equation~\eqref{eq:grouping}.
By Lemma~\ref{lem:stepi}, each group requires total
$O(i)$ time to perform step $i$ on all its permutations.
We thus conclude: 

\begin{theorem}\label{thm:stepiconstant}
The expected time complexity of step $i$ of the randomized algorithm
is $O(1)$. %\evanthia{I changed to $O(1)$ from $O(i)$!!!!}
\end{theorem}

We conclude with the following theorem.

\begin{theorem} \label{thm:deletion}
	Given an abstract Voronoi diagram $\V(S)$,  $\V (S\setminus
	\{s\})\cap \VR(s,S)$ can be computed in expected $O(h)$ time, 
	where $h$ is the complexity of $\partial \VR(s,S)$.
	Thus, $\V (S\setminus
	\{s\})$ can be updated from $\V(S)$ in expected time $O(h)$.
\end{theorem}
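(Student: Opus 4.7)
The plan is to carry out backwards analysis on phase~2 of the randomized algorithm, adapting the argument used for the farthest-segment diagram in~\cite{KP17arxiv}. Phase~1 (deletion of the arcs of $\arcs$ in reverse order, recording for each $\alpha_i$ its two neighbors at the time of its removal) consists of $h$ constant-time pointer updates in a cyclic sequence, so it costs $O(h)$ deterministically; all the randomized work sits in phase~2. The initial state $\vld(\pp_1)$ is trivial, and after $h{-}1$ insertions we obtain $\vld(\arcs)=\V(\arcs)$ by Corollary~\ref{lem:EVLD}, so correctness is already in hand; only the running time remains.

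By Lemma~\ref{lem:insertion-time}, the cost of inserting $\alpha_{i+1}$ at step $i$ is $O(|R(\alpha_{i+1})| + d(\alpha_{i+1}))$, plus an extra $O(|\partial R(\omega_{i+1})|)$ when a split creates a new auxiliary arc $\omega_{i+1}$. Lemma~\ref{lemma:exp-const-primes} already gives $E[d(\alpha_{i+1})]=O(1)$, so these terms contribute $O(h)$ in total. For the region-complexity terms I would fix the unordered set $\arcs_{i+1}$; over its $(i+1)!$ orderings, $\alpha_{i+1}$ is uniform in $\arcs_{i+1}$, and for each such ordering the arcs of $\pp_{i+1}$ index disjoint faces of a planar graph $\vld(\pp_{i+1})$ whose total complexity is $O(i+1)$ by Lemma~\ref{lem:two-new-primes}. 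Summing $|R(\alpha)|+|\partial R(\omega_\alpha)|$ over the $i{+}1$ possible choices of the last-inserted arc is therefore $O(i+1)$, so the expected per-step cost is $O(1)$ and the total expected cost of phase~2 is $\sum_{i=1}^{h-1}O(1)=O(h)$. Once $\V(S\setminus\{s\})\cap\VR(s,S)$ has been computed in this way, splicing it into $\V(S)\setminus \VR(s,S)$ costs another $O(h)$ and yields $\V(S\setminus\{s\})$.

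The delicate point, and the one I expect to require the most care, is that, unlike in ordinary randomized incremental constructions, the intermediate state $\vld(\pp_{i+1})$ depends on the insertion order and not only on the set $\arcs_{i+1}$, since auxiliary arcs introduced by splits vary with the order. What saves the analysis is that for every fixed ordering the regions of $\vld(\pp_{i+1})$ nevertheless partition the planar graph and have total complexity $O(i+1)$, which is all the backwards-average needs; the inner averaging is carried out over orderings of $\arcs_{i+1}$ and the bound $O(i+1)$ holds conditionally on each choice. This is exactly the step that was executed for the farthest-segment diagram in~\cite{KP17arxiv}, and I expect it to transfer verbatim here, relying on the robustness guarantees of Sections~\ref{sec:insert} and~\ref{sec:unique} (Theorems~\ref{thm:mergecorrect},~\ref{thm:insertion}, and~\ref{thm:vldunique}) to ensure that every intermediate insertion is legal and produces a well-defined Voronoi-like diagram.
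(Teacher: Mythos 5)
Your proposal takes essentially the same route as the paper: backwards analysis on phase~2, invoking Lemma~\ref{lem:insertion-time} for the per-step cost, Lemma~\ref{lemma:exp-const-primes} for the $d(\alpha_{i+1})$ term, Lemma~\ref{lem:two-new-primes} for the $O(i)$ bound on $|\vld(\pp_i)|$, and correctness from Corollary~\ref{lem:EVLD}, deferring to~\cite{KP17arxiv} for the backwards averaging. The ``delicate point'' you flag --- that $\vld(\pp_{i+1})$ depends on the insertion order, not merely on the set $\arcs_{i+1}$, so the backwards average does not condition on a single fixed diagram --- is a genuine subtlety that the paper's own proof also leaves implicit (it simply asserts the per-step expected cost equals that of a random region in $\vld(\pp_i)$ plus one neighbor); your stated resolution, that the $O(i{+}1)$ total-complexity bound holds conditionally on each ordering, is phrased at the same level of informality as the paper and matches the intended argument.
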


\section{Computing the order-$k$ Voronoi diagram iteratively}\label{sec:orderk}
Our algorithm to perform deletion in expected linear-time can be
adapted to iteratively compute
the order-${k}$ abstract Voronoi diagram, for increasing values of~$k$, 
in total time $O(k(n-k)n + n\log n)$ if $k\leq n/2$.
In particular, given a face~$\f$ of an order-$k$ Voronoi region, 
we can compute the order-$(k{+}1)$-subdivision
within~$\f$ in expected time $O(|\partial f|)$.
In this section we describe the required adaptation over site-deletion.

The \emph{order-$k$ abstract Voronoi region} of a  subset of sites $H \subset S$, $|H|=k$,
is defined~\cite{BCKLPZ15} as  
\[\kreg(H, S)~=~\bigcap_{q \in H,  p \in S \setminus H} D(q,p).\]

\noindent 
The  \emph{order-$k$ abstract Voronoi diagram} of $S$ is~\cite{BCKLPZ15}   
\[\Vk(S) = \mathbb{R}^2\setminus \bigcup_{H \subset S, |H| = k }\kreg(H, S).\] 

\noindent 
The combinatorial complexity of $\Vk(S)$ is $O(k(n-k))$.
For $k=1$,  it is the nearest-neighbor abstract Voronoi 
diagram $\V(S)$, and  for $k=n-1$, 
it is the farthest abstract Voronoi diagram $\fvd(S)$.
The vertices of the diagram  are  classified into \emph{new}
and \emph{old}, where a \emph{new} vertex in $\Vk(S)$ is an \emph{old}
vertex of $\V_{k+1}(S)$.

Consider a face $\f$  of an order-$k$ Voronoi region 
$\kreg(H)$, $H \subset S, |H|=k$.
Let $S_\f \subseteq S\setminus H$ denote the set of 
sites, which together with $H$, induce the Voronoi edges on the
boundary  $\partial \f$.
Our goal is to compute the  Voronoi diagram of $S\setminus H$ within $\f$, $\V(S_\f) \cap \f$,
in expected linear time, i.e., in time $O(|\partial\f|)$.
This diagram is a tree (or forest if $\f$ is unbounded) with properties analogous to
Lemma~\ref{lem:treelike} (see also \cite{BKL19}).
To extend Theorem~\ref{thm:deletion} from $k=1$ to an arbitrary $k$, 
there is a non-trivial challenge to overcome:
the complexity of $\partial f$ depends not only  on
$|S_\f|$ but also on $k$.
A direct application of our deletion algorithm would not result in a
linear-time scheme for non-constant $k$.

Consider a face $\f$ of $\kreg(H, S)$ and its boundary $\partial \f$.
We call any piece of $\partial f$  between two consecutive \emph{new} vertices,
an  \emph{order-$k$ arc}.
Such an arc does not have constant complexity but may contain
a sequence of old Voronoi vertices on  $\partial \f$.
In this section, let $\arcs$ denote the collection of the order-$k$ arcs
along the boundary of $\f$. 

An order-$k$ arc  $\alpha$ is a piece of a so-called \emph{Hausdorff bisector} between a site
$s_\alpha\in S_f$ and  $H$ (see, e.g.,~\cite{P04} for the definition of
concrete Hausdorff bisectors and the Hausdorff Voronoi diagram of point-clusters).
In abstract terms, the \emph{Hausdorff bisector} between a
site $s_\alpha\in S_f$ and  $H$ can be  defined as
\[J(s_\alpha, H) = \partial \freg(s_\alpha, H \cup \{s_\alpha\}),\]

\noindent
where $\freg(s,S')$ is 
the farthest Voronoi region of a site $s\in S'\subseteq S$,
$\freg(s,S') = \bigcap_{q \in S' \setminus \{s\}} D(q,s)$.

$J(s_\alpha,H)$ is an unbounded  Jordan curve dividing the plane in two parts;  
let $D(s_\alpha,H)=\freg(s_\alpha, H \cup \{s_\alpha\})$.
The complexity of $J(s_\alpha, H)$ is $\Theta(|H|)$, and this is  an
obstacle in directly applying our randomized linear time scheme.
It is possible to overcome this problem by considering relaxed Hausdorff
bisectors whose complexity depends solely on the order-$k$ arc,
and which define a series of
even larger shrinking  domains enclosing the face $\f$.

Let $H_\alpha\subseteq H$
be the subset of sites in $H$ that, together with $s_\alpha$, define
the edges and vertices along the arc $\alpha$.
Instead of  $J(s_\alpha,H)$, which is hard to compute,  we
consider the Hausdorff bisector  $J(s_\alpha,H_\alpha)$, 
where $\alpha\subseteq  J(s_\alpha,H_\alpha)$,  and  has
 complexity  $\Theta(|H_\alpha|)$. 
In fact, $\alpha\subseteq J(s_\alpha, \tilde
H_\alpha)$, for any $H_\alpha\subseteq \tilde H_\alpha\subseteq H$.
Let $|\alpha|$ denote the complexity of arc $\alpha$,  $|\alpha|= |H_\alpha|$.
We make use of the following property.

\begin{lemma}\label{lem:hausdorff}
	$J(s_\alpha,H) \subseteq
	\overline{D(s_\alpha,\tilde H_\alpha)}\subseteq
	\overline{D(s_\alpha,H_\alpha)}$, where $H_\alpha\subseteq
        \tilde H_\alpha\subseteq H$.
\end{lemma}

\begin{proof}
	Since $H_\alpha\subseteq H$, we have 
	\begin{align}
	D(s_\alpha,H) = \freg(s_\alpha, H \cup \{s_\alpha\}) 
	\subseteq \freg(s_\alpha, H_\alpha \cup \{s_\alpha\})
	= D(s_\alpha,H_\alpha). 
	\end{align}
	Thus, it holds $J(s_\alpha,H) = \partial {D(s_\alpha,H)} \subseteq \overline{D(s_\alpha,H_\alpha)}$.
	Analogously we can show the subset relation for $\tilde H_\alpha$.
 \end{proof}

It is now straightforward to adapt the algorithm of
Section~\ref{sec:algorithm}, using appropriate
Hausdorff bisectors that are derived by the order-$k$ arcs in $\arcs$,
in place of the $s$-related bisectors in the previous sections.
The complexity of each such Hausdorff bisector must be proportional to the
complexity of its underlying order-$k$ arc. 
Lemma~\ref{lem:hausdorff} implies the correctness of adopting this relaxation.

We start with domain $D_1$ defined by $J(s_{\alpha_1}, H_{\alpha_1})$,
i.e., $D_1=D(s_{\alpha_1}, H_{\alpha_1})\cap
D_\Gamma$, for the first
order-$k$ arc $\alpha_1$ of a random permutation of $\arcs$.
The boundary complexity of $D_1$ is $O(|\alpha_1|)$.
Note that $D_1$ is a superset of domain $D(s_{\alpha_1}, H)\cap
D_\Gamma$.
At step $i$, we insert arc $\alpha_i$ considering bisector
$J(s_{\alpha_i}, \tilde H_{\alpha_i})$, where  $H\supseteq\tilde
H_{\alpha_i}\supseteq H_{\alpha_i}$, and  $|\tilde
H_{\alpha_i}| \leq  |H_{\alpha_i}|+2$. 
We use $\tilde H_{\alpha_i}$, possibly a superset of $H_{\alpha_i}$,
in order to include at most one site in $H$ for each neighbor of $\alpha_i$
in $\pp_i$.
This is done to correctly link two neighboring order-$k$ arcs on
$\pp_i$ so that they are both incident to a common (new)
Voronoi vertex.
By Lemma~\ref{lem:hausdorff},  domain $D_i$ is a superset of the domain we would get if we
instead considered bisector $J(s_{\alpha_i}, H)\supset \alpha_i$.
Therefore,
the relaxed  construction works correctly.
At the end, $D_h=\f$.

We conclude that Theorem~\ref{thm:deletion}  applies, constructing $\V(\arcs)=\V(S_\f)\cap \f$ in expected
time $O(|\partial \f|)$.

Since the complexity of $\V_k(S)$ is $O(k(n-k))$,   the  $O(k^2(n-k) +
n\log n)$ bound for iteratively constructing the diagram, starting at
$\V(S)$,  easily follows  for $k\leq n/2$.
Although there are algorithms of better time-complexity  to
construct  $\V_k(S)$, such as the $O(k(n-k)\log^2n +
n\log^3 n)$ randomized incremental algorithm of Bohler et
al.~\cite{BKL19},
the iterative construction is nice and simple, %modified
therefore, it can be %very useful
preferable for small values of~$k$.

%The solution improves the simple iterative approach to construct an order-$k$ 
%abstract Voronoi diagram by a $\log$-factor resulting in
%time $O(k^2(n-k) + n\log n)$ if  $k\leq\frac{n}{2}$, and $O(n^3)$ otherwise.

\section{The farthest abstract Voronoi diagram}	 \label{sec:farthest}

In this section we show how to modify (in fact simplify) the algorithm for 
the deletion of one site 
to compute the \emph{farthest} 
abstract Voronoi diagram, after the sequence of its faces at infinity 
is known.

The \emph{farthest Voronoi region} of a site $p \in  S$ is 
$\freg (p,S) = \bigcap_{q \in S \setminus \{p\}} D(q,p)$
and the \emph{farthest abstract Voronoi diagram} of $S$ is
$\fvd(S) = \mathbb{R}^2\setminus \bigcup_{p \in S}\freg(p, S)$.
$\fvd(S)$ is a tree of complexity $O(n)$, however,
regions may be disconnected and a 
farthest Voronoi region may consist of $\Theta(n)$ disjoint
faces~\cite{AbstractFarthestVoronoi}.
Let $D^*(p,q)=D(q,p)$;
then $\freg (p,S) = \bigcap_{q \in S \setminus \{p\}}D^*(p,q) $.

Unless otherwise noted, we adopt the following
convention:
we reverse the labels of bisectors and use
$D^*(\cdot,\cdot)$, in the place of $D(\cdot,\cdot)$, in most definitions
and constructs of Sections~\ref{sec:problem}, \ref{sec:insert}.  
Under this convention the definition of e.g., a $p$-monotone path
remains the same but it uses $\partial \freg(p,\cdot)$ in the place of  
$\partial\VR(p,\cdot)$.
The corresponding arrangement of $p$-related bisectors $\jj{p}{S'}$,
$S'\subseteq S$, is considered with the labels of 
bisectors and their dominance regions
reversed from the original system $\J$.

Consider the enclosing curve $\Gamma$ as defined in
Section~\ref{sec:defs}, and let $\arcs$ 
be the sequence of arcs on $\Gamma$ derived by
$\Gamma\cap \fvd(S)$.
$\arcs$ represents the sequence of the farthest
Voronoi faces in $\fvd(S)$ at infinity.
The domain 
of computation is $D_\Gamma$.
For an arc $\alpha$ of $\arcs$, let $s_\alpha$ denote
the site in $S$ for which $\alpha \subset \freg(s_\alpha, S)$. 
With respect to site occurrences, $\arcs$ is a Davenport-Schinzel sequence
of order~2. 
$\arcs$ can be computed 
in time $O(n\log n)$ in a divide and conquer fashion,
similarly to computing the
\emph{hull} of a farthest segment Voronoi
diagram, see e.g.,~\cite{PD13}.

We treat the arcs in $\arcs$ as sites and compute
$\V(\arcs)=\fvd(S)\cap D_\Gamma$. Let
$\fs{\alpha}$ denote the face of $\fvd(S)\cap D_\Gamma$ incident to
$\alpha \in \arcs$, see Fig.~\ref{fig:fvd}. 
$\V(\arcs)$ is a tree 
 whose
leaves are the endpoints of the arcs in  $\arcs$.

\begin{figure}
	\centering
	\includegraphics{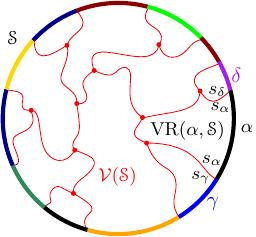}
	\caption{The farthest Voronoi diagram  $\V(\arcs) = \fvd(S)  \cap D_\Gamma$
		and the Voronoi region $\VR(\alpha, \arcs)$. 
		Bisector labels are shown in the farthest (reversed) sense. }
	\label{fig:fvd}
\end{figure}

Consider $\subs \subseteq \arcs$, and let $S'\subseteq S$ be the set of sites
that define the arcs in $\subs$.
Let $\J(S') = \{J(p,q) \in \J \,|\, p,q \in S', p \neq q\}$.

\begin{definition}
	A \emph{boundary curve $\pp$  for $\subs$}
	is a partitioning of $\Gamma$ into \emph{arcs}
	whose endpoints are in $\Gamma\cap \J(S')$ 
	such that any two consecutive arcs
	$\alpha,\beta \in \pp$  are incident to $J(s_\alpha,s_\beta)
	\in \J(S')$, having \emph{consistent labels}, and
	$\pp$ contains an arc $\alpha \supseteq \alpha^*$, 
	for every core arc $\alpha^* \in\subs$.
	We say that the labels of $\alpha$, $\beta$
	are consistent, if there is a neighborhood 
	$\tilde\alpha\subseteq\alpha$ and  
	$\tilde\beta\subseteq\beta$ incident to the common endpoint
	of $\alpha$ and $\beta$ such that $\tilde\alpha\in
	D^*(s_\alpha,s_\beta)$, and  $\tilde\beta\in
	D^*(s_\beta, s_\alpha)$. 
\end{definition}

There can be several different boundary curves for  $\subs$.
The arcs in $\pp$ that
contain a core arc in $\arcs'$ are called \emph{original} and any remaining arcs
are called \emph{auxiliary}. 
The arcs in $\pp$,  although they are arcs on
$\Gamma$, they  are all boundary arcs and none is considered a
$\Gamma$-arc in the sense of the previous sections.
The endpoint $J(s_\alpha,s_\beta)\cap\Gamma$ on $\pp$ separating two consecutive arcs 
$\alpha,\beta$  is denoted by 
$\nu(\alpha,\beta)$.

The Voronoi-like diagram of a boundary curve $\pp$ is defined 
analogously to Definition~\ref{def:vld}.
Since $\pp$ consists only of boundary arcs, $\vld(\pp)$ 
is a tree whose leaves are the
vertices of $\pp$.
The properties of a Voronoi-like diagram in Section~\ref{sec:problem} 
remain the same (under the conventions of  this section).

Given $\vld(\pp)$ for a boundary curve $\pp$ of $\subs  \subset
\arcs$, we can insert a  core arc  $\beta^*\in \arcs\setminus \subs$ and obtain 
$\vld(\pp\oplus \beta^*)$.
The insertion is performed analogously to Section~\ref{sec:insert}.
The original arc $\beta\supseteq\beta^*$, with endpoints $x,y$ is
defined as follows:
let $\delta$ be the first arc  on $\pp$ counterclockwise (resp. clockwise)
from $\beta^*$ such that $J(s_\beta, s_{\delta})\cap\delta\neq \emptyset$;
let  $x=\nu(\delta,\beta)$
(resp. $y=\nu(\beta,\delta)$).
Let $\pp_\beta=\pp\oplus \beta$ be the boundary curve obtained from
$\pp$ by substituting with $\beta$ its overlapping piece from $x$ to $y$.
No original arc of $\pp$ can be deleted by the insertion of $\beta$.
Observation~\ref{obs:insert-beta} remains the same, except from cases
(d),(e) that do not exist. 

The \emph{merge curve} $J(\beta)$, given $\vld(\pp)$, 
is defined analogously to Definition~\ref{def:mergecurve}; it  is
only simpler as it does not contain $\Gamma$-arcs.
Theorem~\ref{thm:mergecorrect} remains valid, i.e., 
$J(\beta)$ is an $s_\beta$-monotone path in $\jj{s_\beta}{S'}$ connecting the endpoints of $\beta$.
The proof structure is the same as for Theorem~\ref{thm:mergecorrect},
however, Lemma~\ref{lem:vertexonarc}  now requires 
a different proof, which we give in the
sequel (see Lemma~\ref{lem:fvd:gammabad}).
Lemma~\ref{lem:gammabad} is not
relevant; while 
Lemma~\ref{lem:finish} and Lemma~\ref{lem:distinctregions} 
are analogous.

In the following lemma we restore the
labeling of bisectors to the original. 

\begin{lemma}
	\label{lem:pcycles}
	In an admissible bisector system $\J$ (or  $\J\cup \Gamma$) there cannot be two
	$p$-cycles, $p\in S$,  with disjoint interior.
\end{lemma}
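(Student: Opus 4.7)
I would prove this by contradiction, exploiting axiom (A1) and the symmetric counterpart of Lemma~\ref{lem:nocycle}. Suppose $C_1$ and $C_2$ are two $p$-cycles in $\J$ (resp.~$\J\cup\Gamma$) whose bounded interiors are disjoint. For $i=1,2$, let $S_i\subseteq S\setminus\{p\}$ be the set of sites $q$ such that $J(p,q)$ contributes an edge to $C_i$; in the $\J\cup\Gamma$ variant I additionally attach a fictitious site $s_\infty$ representing $\Gamma$ whenever $C_i$ uses a $\Gamma$-arc, exactly as done in Section~\ref{sec:defs}. Define $S'=S_1\cup S_2$.

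The structural fact I would use is the natural counterpart of Lemma~\ref{lem:nocycle}: for any $p$-cycle $C$ built from the bisectors $\{J(p,q):q\in S_C\}$, the \emph{nearest} Voronoi region $\VR(p,S_C\cup\{p\})$ must lie inside the bounded region enclosed by $C$. The reasoning is symmetric to the $p$-inverse/farthest case: along every edge of $C$ the label $p$ sits on the interior side, so the open connected set $\VR(p,S_C\cup\{p\})$, lying on the $D(p,q)$-side of every bisector $J(p,q)$ appearing in $C$, cannot cross $C$. Since $\VR(p,S_C\cup\{p\})$ is non-empty by (A1) and moving outward across $C$ necessarily puts points into some $D(q,p)$, the region must be contained in the bounded component enclosed by $C$.

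Given this enclosure, monotonicity of Voronoi regions (adding sites only shrinks them) yields $\VR(p,S'\cup\{p\})\subseteq \VR(p,S_i\cup\{p\})\subseteq\mathrm{int}(C_i)$ for each $i=1,2$. Because $\mathrm{int}(C_1)\cap\mathrm{int}(C_2)=\emptyset$, this forces $\VR(p,S'\cup\{p\})=\emptyset$, contradicting axiom (A1) applied to $S'\cup\{p\}$. The $\J\cup\Gamma$ variant carries over verbatim: $\Gamma$ plays the role of $J(p,s_\infty)$, all computations are restricted to $D_\Gamma$, and the enclosure property is invoked for $\VR(p,S_C\cup\{p\})\cap D_\Gamma$, which is non-empty by admissibility within $D_\Gamma$.

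The main obstacle is the enclosure sub-claim itself: unlike Lemma~\ref{lem:nocycle}, which uses unboundedness of farthest regions as a one-line contradiction, here I need both (A1) (non-emptiness and connectivity) and a careful side-orientation argument to rule out that $\VR(p,S_C\cup\{p\})$ lies in the exterior of $C$. I would discharge this by a Jordan-curve argument in the arrangement $\arr(\J_{p,S_C})$: the open connected set $\VR(p,S_C\cup\{p\})$ lies entirely on one side of $C$ (it cannot cross any of the bisectors whose arcs form $C$), and the consistent $p$-labeling along $C$ pins that side to the bounded interior, since the exterior side of every edge of $C$ already belongs to $D(q,p)$ for the corresponding $q\in S_C$.
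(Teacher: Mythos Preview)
Your proof is correct and follows essentially the same approach as the paper: both invoke the fact that the nearest region $\VR(p,\cdot)$ must be enclosed in the interior of any $p$-cycle, and then obtain a contradiction with axiom~(A1) from the disjointness of the two cycle interiors. The only difference is that the paper applies this directly to $\VR(p,S)$ (respectively $\VR(p,S)\cap D_\Gamma$) rather than to your intermediate $\VR(p,S'\cup\{p\})$, so the monotonicity step is not needed; your added Jordan-curve justification of the enclosure claim goes beyond what the paper spells out, which simply asserts it ``by its definition.''
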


\begin{proof}
	By its definition, the nearest Voronoi region  $\VR(p,S)$
	 (resp. $\VR(p,S)\cap D_\Gamma$)
        must be enclosed 
	in the interior of any  $p$-cycle of the admissible bisector system $\J$
	(resp.  $\J\cup \Gamma$).
	But $\VR(p,S)$ (resp. $\VR(p,S)\cap D_\Gamma$)
	is connected (by axiom (A1)), thus, there cannot be two different $p$-cycles
	with  disjoint interior.
 \end{proof}

\begin{lemma}\label{lem:fvd:gammabad}
	Consider the merge curve $J(\beta)$. Suppose $v_{i+1}$ is not
	a valid vertex because $v_{i+1}\in \alpha_{i}$, i.e.,  $e_i$
	hits  arc $\alpha_i$.
	Then vertex $v_{m-j}$ can not be on $\pp$. 
\end{lemma}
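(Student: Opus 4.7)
The plan is to mimic the proof of the nearest-case Lemma~\ref{lem:vertexonarc}, with the enclosing curve $\Gamma$ playing the role that the bisector $J(s,s_\beta)$ played there: it is the global $s_\beta$-related curve on which the endpoints of $\beta$ and of $\pp$ all live. Assuming for contradiction that $v_{m-j}\in\pp$, I will exhibit two $s_\beta$-cycles in the arrangement $\arr(\jj{s_\beta}{S_\pp}\cup\Gamma)$ whose interiors are disjoint, contradicting Lemma~\ref{lem:pcycles}.

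First I would observe that, since every arc of $\pp$ lies on $\Gamma$, both $v_{i+1}\in\alpha_i$ and $v_{m-j}$ lie on $\Gamma$, together with the endpoints $v_1=x$ and $v_m=y$ of $\beta$. By the choice of $x$ and $y$, every arc of $\pp$ absorbed by $\beta$ (i.e., lying on the short-way portion of $\Gamma$ between $x$ and $y$ through $\beta$) has empty intersection with its own $s_\beta$-bisector, so $v_{m-j}$ cannot land on any such arc; hence it must lie on the long-way portion $\pp'$ of $\pp$. Using the induction hypothesis that $J_x^{i+1}$ and $J_y^{j+1}$ are disjoint simple paths in $D_\Gamma$ and that neither crosses $\beta$, together with the counterclockwise order $x,\beta,y$ on $\pp_\beta$, I fix the cyclic order of the four points along $\Gamma$ to be $x,\beta,y,v_{m-j},v_{i+1}$, since any other cyclic order would force the two subpaths to intersect. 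I would then define two corner regions in $D_\Gamma$: $\Omega_x$ enclosed by $J_x^{i+1}$ and the arc of $\Gamma$ from $v_{i+1}$ counterclockwise to $x$, and $\Omega_y$ enclosed by $J_y^{j+1}$ and the arc of $\Gamma$ from $y$ counterclockwise to $v_{m-j}$. By the fixed cyclic order neither contains $\beta$, and their interiors are disjoint.

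The main work is to verify that $\partial\Omega_x$ and $\partial\Omega_y$ are $s_\beta$-cycles in the \emph{original} labeling required by Lemma~\ref{lem:pcycles}. On each merge-curve edge $e_\ell\subseteq J(s_\beta,s_{\alpha_\ell})$, the Voronoi-like region $R(\alpha_\ell)$ lies on the side where $s_{\alpha_\ell}$ is farther than $s_\beta$, i.e., in $D(s_\beta,s_{\alpha_\ell})$ under the original labeling; since $\Omega_x$ and $\Omega_y$ sit on the $R(\alpha_\ell)$ side of the merge curves (opposite to the future $R(\beta)$), the $s_\beta$-label sits on the $\Omega_x$ (respectively $\Omega_y$) side. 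On the $\Gamma$-arcs bounding these regions, interpreting $\Gamma$ as $J(s_\beta,s_\infty)$ puts the $s_\beta$-label on the $D_\Gamma$ side, hence again inside $\Omega_x$ and $\Omega_y$. Thus both boundaries are $s_\beta$-cycles in the sense of Lemma~\ref{lem:pcycles}, and their interiors are disjoint, yielding the desired contradiction. The step I expect to be most delicate is precisely this label bookkeeping: the farthest section systematically reverses labels while Lemma~\ref{lem:pcycles} is restated in the original nearest labeling, and getting the orientation right is what distinguishes an $s_\beta$-cycle from its inverse; the remaining topological reasoning (cyclic order, disjointness of the corner regions) mirrors the proof of Lemma~\ref{lem:vertexonarc} with $\Gamma$ in place of $J(s,s_\beta)$.
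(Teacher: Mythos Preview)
Your proposal is correct and follows essentially the same approach as the paper's own proof: assume $v_{m-j}\in\pp$, observe that $J_x^{i+1}$ and $J_y^{j+1}$ together with arcs of $\Gamma$ cut $D_\Gamma$ into three pieces, and note that the two ``corner'' pieces (your $\Omega_x,\Omega_y$, the paper's $C_1,C_2$) are bounded by $s_\beta$-cycles in the original labeling with disjoint interiors, contradicting Lemma~\ref{lem:pcycles}. Your version is more explicit about the cyclic order on $\Gamma$ and the label bookkeeping, but the argument is the same.
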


\begin{proof}
	Suppose otherwise, i.e., vertex $v_{m-j}$ is on the boundary arc $\alpha_{m-j}$.
	Then $J_x^i $ and $J_y^j$ partition $D_\Gamma$
	in three parts: a middle part incident  to $\beta$, and two parts
	$C_1$ and $C_2$ at either side of  $J_x^i $ and $J_y^j$
	respectively, whose closures are disjoint,
	see Fig.~\ref{fig:fvd:hitboundary}.
	But the boundaries  of  $C_1$ and $C_2$ are 
	$s_\beta$-cycles in the admissible bisector system
	$\J\cup\Gamma$ contradicting Lemma~\ref{lem:pcycles}. Note
	that here we use the original labels of bisectors,
	including $\Gamma=J(s_\beta,s_\infty)$.
 \end{proof}

\begin{figure}
		\centering
		\includegraphics{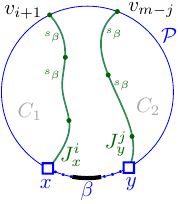}
		\caption{Illustration for Lemma \ref{lem:fvd:gammabad}.
			Nearest labels are shown.
		}
		\label{fig:fvd:hitboundary}
\end{figure}

The diagram $\ins$ is defined analogously
and the proof that $\ins$ is the Voronoi-like diagram  $\vld(\pp_\beta)$ for $\pp_\beta=\pp\oplus \beta$,
is analogous to the proof of Theorem~\ref{thm:insertion}.

The randomized algorithm for computing 
$\V(\arcs) = \fvd(S) \cap D_\Gamma$ 
is the same as in Section~\ref{sec:algorithm}.
The time analysis is also completely analogous.
For completeness we point out that, here, the set $\out_j$ consists of the auxiliary arcs in $\B_j$ that 
overlap with the auxiliary arcs of $\alpha_j$ in $\B_i$.
The set $\In_j$ are any remaining auxiliary arcs in $\B_j\setminus \out_j$ that 
differ from the corresponding auxiliary arcs in $\B_i$.
All observations of Section~\ref{sec:time-analysis} remain intact
under this updated notion of $\In_j$ and $\out_j$.
Thus, 
the (expected) linear time complexity can be analogously established.

\begin{theorem}\label{thm:FVDexp}
	Given the sequence of its faces at infinity, i.e., given
        the sequence of arcs $\arcs$ implied by  $ \fvd(S)\cap \Gamma$, 
      	the farthest abstract Voronoi diagram $\fvd(S)$ can be
        computed in expected linear time $O(|\arcs|)$. 
\end{theorem}

\section{Concluding remarks}

In this paper we formalized  the notion of an abstract \emph{Voronoi-like
  diagram}, which is defined as a graph (tree or forest) on the arrangement of the underlying 
bisector system whose vertices are legal Voronoi vertices
in Voronoi diagrams of  three sites.
We defined  the Voronoi-like diagram of a \emph{boundary curve}, which is  
implied by a subset  $\arcs'$ of Voronoi edges bounding 
a Voronoi region $\VR(s,S)$.
A boundary curve is defined as an $s$-monotone path in the arrangement of 
$s$-related bisectors that contains the arcs in $\arcs'$.
We showed that the Voronoi-like diagram of such a boundary curve is well-defined,
unique, and robust under
an arc-insertion operation, which enables its use in incremental
constructions.
Using Voronoi-like diagrams as intermediate structures, we derived a
very simple, therefore practical,
randomized incremental
algorithm to update an abstract Voronoi diagram  after
deletion of one site in expected  linear time.
The algorithm is applicable to any concrete diagram
under the umbrella of abstract Voronoi diagrams.
% It is extremely simple, therefore, practical.
%under the abstract umbrella.

The technique can be adapted to compute the 
order-$(k{+}1)$ subdivision within an order-$k$
abstract Voronoi region, and the farthest abstract Voronoi diagram, after 
the order of its faces at infinity is known.
The Voronoi-like structure provides the means to efficiently deal with
% the underlying problem of
the underlying disconnected Voronoi regions, which is
the common complication
characterizing these %otherwise
simple tree (or forest) Voronoi structures.

A deterministic linear-time construction  for these problems has remained
an open problem.
%In future research we plan to investigate the applicability of the
In future research, we would like to investigate the applicability of the
Voronoi-like structure within the 
 linear-time framework of Aggarwal\etal\cite{AGSS89} aiming to a
deterministic linear-time algorithm for the same problems.

\section*{Acknowledgements}
%\begin{acknowledgements}
We sincerely thank Stefan Felsner for the proof of Lemma~\ref{lem:grouping} and
for making the connection to the seemingly unrelated result of Levenshtein
\cite{L92} on perfect codes, which established this claim for the time complexity analysis.
%\end{acknowledgements}

%\bibliographystyle{spmpsci}
\bibliographystyle{plain}
\bibliography{voronoi}   

\end{document}